\newcommand{\commento}[1]{}
\newcommand{\MZ}[2]{\textcolor{gray}{\sout{#1}}\color{magenta}\  #2 \color{black}}
\newcommand{\todoReverse}[2][]{{%
 \let\marginpar\marginnote
 \reversemarginpar
 \renewcommand{\baselinestretch}{0.8}%
 \todo[#1]{#2}}}
\newcommand{\todoRight}[2][]{{%
 \let\marginpar\marginnote
 \reversemarginpar
 \renewcommand{\baselinestretch}{0.8}%
 \todo[#1]{#2}}}
\newcommand{\todoAUTO}[2][]{
\ifodd\value{page}
\renewcommand{\baselinestretch}{0.8} \todo[#1]{#2}
\else
\let\marginpar\marginnote\renewcommand{\baselinestretch}{0.8}\reversemarginpar \todo[#1]{#2}
\fi }
\newtheorem{notation}{Notation}
\newcommand*{\myDef}{\mathrel{\vcenter{\baselineskip0.5ex \lineskiplimit0pt
                     \hbox{\scriptsize.}\hbox{\scriptsize.}}}%
                     \mathrel{\vcenter{\baselineskip0.5ex \lineskiplimit0pt
                     \hbox{\scriptsize.}\hbox{\scriptsize.}}}%
                      =
                     }
\DeclareSymbolFont{symbols2}{LS1}{stixfrak}{m}{n}
\DeclareMathSymbol{\typecolon}{\mathbin}{symbols2}{"25}
\newcommand{\PRB}[1]{\!{#1}\!}
\newcommand\surfCirc{\overline{\mathrm{iX}}}
\newcommand\wfi[1]{{\surfCirc(#1)}}
\newcommand\dom{\textrm{dom}}
\newcommand\codom{\textrm{ran}}
\newcommand{\CIRC}[1]{\mathop{\textsf{circ}\mkern-1mu(\mkern-1mu#1\mkern-1mu)\mkern-1mu}}
\newcommand{\Nat}{\textsf{Nat}}
\newcommand{\Idx}{\textsf{Idx}}
\newcommand{\revCirc}{\mathop{\texttt{reverse}}}
\newcommand{\appCirc}{\mathop{\typecolon}}
\newcommand{\iterCirc}{\mathop{\texttt{iter}}}
\newcommand{\dMeas}{\mathop{\tt dMeas}}
\newcommand{\HEval}{\text{Hilb}}
\newcommand{\qpcf}{\text{\textsf{qPCF}}}
\def\TTY{{\tt Y}}
\newcommand{\condinc}[2]{\ifthenelse{\equal{\typeof}{0}}{#1}{#2}}
\newcommand{\restr}[1]{{\mathop{\upharpoonright_{\! #1}}}}
\def\evaluates{\ev}
\def\fleche{\rightarrow}
\newcommand{\ev}{\mathrel{\pmb\Downarrow}}
\newcommand{\bitExtraction}[2]{\lceil #1 \rceil^{#2}}
\newcommand{\ifz}{\texttt{\,if}}
\newcommand{\lif}[3]{\ifz\,\, {\tt #1 }\,\, {\tt #2} \,\, {\tt #3}\,}
\def\PCF{\text{\textsf{PCF}}}
\newcommand{\FV}{\mathrm{FV}}
\newcommand\num[1]{{\tt \underline{#1}}}
\newcommand{\n}{\num{n}}
\newcommand\equal[2]{{\tt eq}_{#1}({\tt #2})}
\def\Pred{{\tt pred}}
\def\Succ{{\tt succ}}
\newcommand{\size}{\texttt{size}\,}
\newcommand{\ifIdx}{\texttt{if}^\texttt{x}\,}
\newcommand\dotminus{\mathbin{\mathchoice{\scriptstyle\dodotminus\displaystyle}{\scriptstyle\dodotminus\textstyle}{\scriptscriptstyle\dodotminus\scriptstyle}{\scriptscriptstyle\dodotminus\scriptscriptstyle}}}
\newcommand\dodotminus[1]{\dot{\smash{#1-}}}
\newcommand{\getBit}{\texttt{get}}
\newcommand{\setBit}{\texttt{set}}
\newcommand{\redto}{\rightarrow}
\newenvironment{myArray}[1][1]{%
  \array%
}{%
  \endarray
}
\begin{document}

\title{{QPCF}: higher order languages and quantum circuits 
}

\titlerunning{qPCF}        

\author{\small Luca Paolini \and Mauro Piccolo \and Margherita~Zorzi 
}


\date{Received:  2017 / Accepted: 2018 / Published online: \today}

{
\institute{L. Paolini  \and M. Piccolo \at
              Department of Computer Science, University of Torino (Italy)\\
              \email{paolini@di.unito.it,piccolo@di.unito.it}           
           \and
           M. Zorzi \at
              Department of Computer Science, University of Verona (Italy)\\
               \email{margherita.zorzi@univr.it}   
}
}
\maketitle

\begin{abstract}
\commento{
qPCF is a paradigmatic programming language that provides flexible facilities to manipulate quantum circuits by restricting quantum measurements to total ones.
qPCF follows the tradition of ``quantum data \& classical control'' languages, and rests on a QRAM model, where no intermediate quantum state is permanently stored
and the emphasis is moved from states to circuits.
We introduce qPCF syntax, typing rules, and its operational semantics. We prove fundamental  properties of the system, such as Preservation and Progress Theorems. 
Moreover, we provide some higher-order examples of circuit encoding.
}{
  qPCF is a paradigmatic quantum programming language that extends PCF with quantum circuits and a quantum co-processor.
  Quantum circuits are treated as  classical data that can be duplicated and manipulated in flexible ways by means of a dependent type system.
  The co-processor is essentially a standard QRAM device, albeit we avoid to store permanently quantum states in  between two co-processor's calls.
  Despite its quantum features, qPCF retains the classic programming approach of PCF.
  We introduce qPCF syntax, typing rules, and its operational semantics. We prove fundamental  properties of the system, such as Preservation and Progress Theorems. 
Moreover, we provide some higher-order examples of circuit encoding.
  }
\keywords{PCF \and Quantum Computing \and Quantum Programming Languages}
\end{abstract}

\section{Introduction}\label{sec:introduction}
Quantum computing is an intriguing trend in computer science research. 
The interest about quantum computing is due to R. Feynman. In~\cite{Be80}, the first relationships between the quantum mechanical model and a formal computational model such as Turing machines is stated. The first concrete proposal for a quantum abstract computer is due to D. Deutsch, who introduced quantum Turing machines~\cite{Deu85}.
Later, a number of results about
computational complexity~\cite{Shor94,Shor99,Gr99,JUW11} showed quantum computing is not only a  challenging theoretical subject, but also a promising paradigm for the concrete realization of powerful machines.\\ 
Nowadays, quantum computers are a long term {industrial goal and a} tangible reality {in term of prototypes}. Even if physicists and engineers have to face tricky problems in the realization of quantum devices, the advance of these innovative technologies {is ceaseless}. 
As a consequence, to fully understand how to program quantum devices is become an urgent need. 

Typically, calculi for quantum computable functions present two  different  computational {features}. 
On the first hand, there is the unitary aspect of the calculus, that captures the essence of quantum computing as  algebraic transformations. 
On the other hand, {it should be possible to} \emph{control} the quantum steps, 
``embedding'' the pure quantum evolution in a classical computation.
 Behind this second point, we have the usual idea of computation as a sequence of discrete steps on (the mathematical description of) an abstract machine. 
The relationship between these different aspects gives rise to different approaches to quantum functional calculi (as observed in~\cite{ArrDow08}). 
If we divide the two features, i.e. we separate data from control, we adopt the so called \emph{quantum data \& classical control} ($qd\&cc$) approach.
This means that {quantum} computation 
is \emph{hierarchicallly dependent} from the {classical} part: a classical program (ideally in execution on a classical machine) computes some ``directives" and, these directives are sent to a hypothetical device which applies them to quantum data. Therefore quantum data are manipulated by the classical program {or, in other words,} classical computational steps  control the unitary part of the calculus.

This idea is inspired to an architectural model called Quantum Random Access Machine (QRAM). The QRAM has been defined in~\cite{Knill96} and can be viewed as a  classically controlled machine enriched with a quantum device. 
On the grounds of the QRAM model, P. Selinger defined the first functional language based on the  {quantum data-classical control} paradigm~\cite{selinger2006mscs}. 
This work represents a milestone in the development of quantum functional calculi and inspired a number of different investigations.
Many  quantum programming languages implementing the $qd\&cc$ approach can be found in literature~\cite{selinger2006mscs,selinger09chap,qwire,Ross2015}.
Details about related works are in Section \ref{sec:related}.

In this paper we propose some new contributions to the research on the $qd\&cc$ paradigm by formalising the  quantum language 
$\qpcf$, based on a simplified version of the QRAM model restricted to total measurements (cf. Section \ref{sec:qram}).
$\qpcf$ extends \PCF, namely {the prototype of} typed functional language. {Some $\qpcf$ features is listed below.}
\begin{itemize}
\item \emph{Absence of explicit linear typing constraints}: 
the management of linear resources is radically different from the mainstream in 
{languages inspired to Linear Logic} such as~\cite{selinger2006mscs,dallago2009mscs,DLMZ10tcs,DLMZ11entcs,Zorzi16,Ross2015,Felty16}; so, we do not use linear/exponential modalities.
\item \emph{Use of dependent types} : 
we decouple the classical control from the quantum computation \commento{by adopting a simplified form of}{and we use} dependent types \commento{that provide a sort of  linear  interface}{to manage quantum circuits}.
 Dependent types  allow to write not only individual quantum circuits, but also parametric programs.
\item \emph{Emphasis on the Circuit Construction}: 
  \commento{ We focus on the perspective  ``$\qpcf$ as quantum circuit description language'', following recent trends in quantum programming theory~\cite{quipperacm,qwire}.
    We pursue this idea because it is also in accord with the design practice of quantum algorithms.
  }
  {
    In accord with the recent trends in quantum programming theory~\cite{quipperacm,qwire,rios2017eptcs}, $\qpcf$ focuses on the quantum circuit description aspects.
This idea already considered in~\cite{Knill96} aims to ease the programming of quantum algorithms.
  }
\item 
\commento{\emph{No explicit management of quantum states/Total measurements}. Differently from other proposals where an explicit notion of (quantum)
program state is introduced (e.g.~\cite{selinger2006mscs,dallago2009mscs,Zorzi16}), $\qpcf$ avoids the need of a type for quantum states (and its linear management). 
$\qpcf$ hides the quantum evaluation and measurement into {the operator} $\mathsf{dmeas}$ (cf. Section~\ref{sec:eval}), that both evaluates programs representing circuits and implements the \emph{(von Neumann) Total Measurement}~\cite{NieCh10}. Thanks to the deferred measurement principle~~\cite{NieCh10}, this restriction does not represent a theoretical limitation, as discussed in Section~\ref{sec:discussion}.
}{
  \emph{No permanent quantum states}.
  Differently from other proposals (e.g.~\cite{selinger2006mscs,dallago2009mscs,Zorzi16}), $\qpcf$ does not need types for quantum states (and its linear management).
  This is possible, since the interaction with the quantum co-processor is neatly decoupled by means of the operator $\mathsf{dmeas}$.
  It offloads a quantum circuit to a co-processor for the evaluation which is immediately followed by a \emph{(von Neumann) Total Measurement}~\cite{NieCh10}.
  This means that partial measures are forbidden. Luckily, the deferred measurement principle~~\cite{NieCh10} says us that this restriction does not represent a theoretical limitation
  (cf. Section~\ref{sec:discussion}).
}
\end{itemize}


  $\qpcf$ is an higher-order programming language as \PCF\ that retains the standard classic programming approach.
  Potentially, this can ease the transition to computers endowed with quantum co-processors.

A preliminary version of $\qpcf$ has been proposed in~\cite{lmtamc}. With respect to the first version, we extended our proposal following several directions. 
\begin{itemize}
\item We slightly reformulate syntax and heavily improve the type system, that is radically more refined. 
\item We carefully provide proof's details of theoretical properties of $\qpcf$, only stated in the first version.
  Proofs are non trivial, since one has to  consider infinite ground types and to unravel the mutual relationships that holds between syntactic classes.
  \item {We extende the semantic in order to take into account the probabilities of quantum measurements.}
\item We add many examples. 
\item We provide a deep discussion about related work by focusing on classical control quantum languages. 
  \item We define a restricted version of the general QRAM (cf. Section \ref{sectionQRAM}) that represents the architecture required by $\qpcf$.
\end{itemize} 

Summing up, we propose a new, {stand-alone} quantum programming language that aims to  
\commento{couple simplicity and higher order features, allowing parametric programming.
  Differently from $\mathsf{QWIRE}$, that shares with $\qpcf$ some ideas (see Section~\ref{sec:related}), we follows the functional quantum calculi mainstream,
  and we provide a self-contained language, since it does not rely on external language and type theory.Focusing on total measurements and on 
  a restricted version of the QRAM architecture, we aim to propose the (as much as possible) simpler extension of \textsf{PCF} able to tame the quantum circuit generation and evaluation.
}{
combine classical programming style, parametric circuit programming, higher order and quantum features in a unified  setting.
}

\subsubsection*{Synopsis}
In Section~\ref{sectionQRAM} we describe the ideal quantum programming environment
 (a restricted version of the QRAM machine) behind $\qpcf$. 
Section~\ref{sec:qpcf} and Section~\ref{sec:typing} introduce the syntax and typing rules of $\qpcf$ respectively.
The operational semantic of $\qpcf$  and main properties of the system are in Section~\ref{sec:eval}.
In Section~\ref{sec:examples} we 
{
  discuss the implementation of some quantum algorithms.}
 A detailed overview about the state of art of classical control quantum languages is in Section~\ref{sec:related}.
In Section~\ref{sec:discussion} we propose some conclusive considerations with special care for the language expressivity. 



\section{Background:  the QRAM model}\label{sectionQRAM}

{We assume some familiarity with notions as \emph{quantum bits} (or \emph{qubits} the quantum equivalent of classical data),  
quantum states~\cite{NieCh10} (systems of $n$ quantum bits), quantum circuit and quantum circuit families~\cite{NishOz09}. 
} 
In Section~\ref{sec:qram} we introduce a simplified version of the QRAM (the architectural model behind the quantum data and classical control approach) called rQRAM.   
The rQRAM represents the ideal hardware  to execute $\qpcf$.
 

\subsection{Introducing an idealized co-processor}\label{sec:qram}
 
The physical realization of  basic components necessary for universal quantum computation has gathered much attention in recent years,
 and many realistic technologies have emerged (see  \cite{metodi2011book} for a survey): trapped ions, quantum dots, polar molecules, and superconductors among others.
It is commonly accepted that physical quantum computer has to fulfill the following five criteria which were proposed by Di~Vincenzo at IBM in \cite{diVincenzo2000}:
(i) a scalable physical system of qubits; (ii) the ability to initialize the state of the qubits;
(iii) long relevant decoherence times (qubits lose their quantum properties exponentially quickly);
(iv) a universal set of quantum gates; and (v) a measurement capability.
These criteria correspond quite directly to various engineering hurdles that implementations have to face.
In particular, in the last years many efforts have been supplied to increase the decoherence times and 
to overcome  adjacency/neighboring constraints on qubits  (see  \cite{bettelli2003tepj,diVincenzo2000,metodi2011book}).

{
It is clear that future quantum hardware may differ in many details, so we have to look for some abstract model of quantum computations.
$\qpcf$ is designed to be executed on the QRAM programming environment (see \cite{Knill96,selinger2004mscs,miszczak2014book}) which is commonly accepted as a reasonable model of computation
for describing quantum computing devices.
  }
However, $\qpcf$ rests on a restricted QRAM model that aim to lighten two problems:
(i) decoherence care is needed only during the evaluation of a single circuit and,
(ii) free-rewiring is not required, since we strictly rest on the basic gates provided by the co-processor.

We quote the QRAM introduction provided in \cite{Knill96}. 
\begin{quote}\small
  It is increasingly clear that practical quantum computing will take  place on a classical machine with access to quantum registers. 
The classical machine performs off-line classical computations and  controls the evolution of the quantum registers by initializing them
  to certain states, operating on them with elementary unitary operations and measuring them when needed.
\end{quote}
We also quote some concise remarks done by Selinger in \cite{selinger2004mscs}.
\begin{quote} \small
  Typically, the quantum device will implement a fixed, finite set of unitary transformations that operate on one or two quantum bits at a time. 
The classical controller communicates with the quantum device by sending a sequence of instructions, specifying which fundamental operations are to be performed. 
The only output from the quantum device consists of the results of measurements, which are sent back to the classical controller.
\end{quote}
It is worthwhile to note that the QRAM model is sometimes considered too restrictive to support proposed quantum programming languages;
therefore, it is enhanced to include more possible interactions between the classical and the quantum devices.
\commento{:
in particular, it is allowed to quantum device to return  ``instructions''. We quote from \cite{valiron2013ngc}[p.3].
\begin{quote}\small 
The device holds the quantum memory and can allocate new quantum bits and measure or apply unitary maps on existing quantum bits at wish. 
The quantum device is a black-box whose entries are instructions and whose only outputs are the results of measurements and return codes of instructions.
\end{quote}
}{(see \cite{valiron2013ngc}[p.3])}

{
  $\qpcf$ is designed to operate in a restricted QRAM programming environment} named rQRAM, that corresponds, quite well, to an ideal  co-processor for a classical computer.
The idea is that our classical computers compute  circuits (i.e. a sequences of gates) that are classical data.
A circuit can be offloaded to the quantum device in order to be applied to a quantum register suitably initialized.  
The co-processor has to be able: (i) to  initialize a register to  a given classic value; 
(ii) to apply  a given sequence of gates on the state stored in the register;
(iii) to perform a final quantum measurements of the whole register.
It is worthwhile to remark some peculiarities of our rQRAM. It neatly isolates the application of non-unitary transformations from unitary ones;
the set of available quantum gates (hopefully,  a universal set of quantum gates) determines the possible permutations of qubits; and, more interestingly, 
the register is not assumed to be able to store permanently qubits, overcoming unpredictable  decoherence times.  
Indeed, the rQRAM co-processor has to support a bounded decoherence time (the maximal of times needed to the application of gates in the device)
and it is even possible to imagine how to
include several classical controllers to share the access to the same single quantum device. 
However, $\qpcf$ can be easily executed on standard QRAM models.  
We are convinced that its realization can be easier that the standard QRAM models and interesting for implementation.
In Section~\ref{sec:expressivepower} we explain why our model does not cause any theoretical limitation,
albeit some algorithms cannot be directly executed on it.
\section{\qpcf}\label{sec:qpcf}

   $\qpcf$ is quantum programming language
   based on the lambda-calculus, like many other quantum calculi 
     (see, for instance, ~\cite{selinger2006mscs,dallago2009mscs,Zorzi16}).
   In accord with the recent trend in the development of quantum programming languages \cite{green2013acm,lmtamc,qwire,rios2017eptcs,Ross2015},
   $\qpcf$ focuses on the abilities to generate and manipulate quantum circuits. 

  No  permanent quantum state can be stored in $\qpcf$, thus linear types are avoided:
  this distinguishes $\qpcf$ from  other typed quantum programming languages.
  The linearity for quantum control is completely confined to atomic datatypes by using a simplified form of dependent types as that suggested in \cite{pierce2002mit}:
  a dependent type picks up a family of types that bring in the type auxiliary information (just the arity of a circuit, in our case).

\subsection{Syntax Overview.}\label{subSect:overview}

$\qpcf$ extends $\PCF$ \cite{plotkin77tcs,gaboardi2016} to manage some additional atomic data structures: \emph{indexes} (normalizing number expressions)  and \emph{circuits}.
Index expressions are essentially built by means of variables, numerals and some \textit{total} operations on expressions: $\odot\in\{ +, * \}$ (viz. sum, product). 
Circuits expressions are obtained by means of  suitable operators combining gates; their evaluation is expected to produce (when terminating) strings on gates. 

The row syntax of $\qpcf$ follows:
$$
\begin{array}{rrl}
{\tt M,N,P,Q} &\myDef&  \texttt{x} \mid {\tt \lambda x.M} \mid {\tt MN} \mid \num{n} \mid \Pred\mid \Succ\mid \ifz \mid \TTY_\sigma \mid \setBit \mid \getBit\\
& \mid &    \texttt{U} \mid  \appCirc \mid\;\parallel\; \mid \iterCirc  \mid \revCirc  \mid \odot \texttt{E} \texttt{E}' \mid \size \mid \dMeas 
\end{array} 
$$
\noindent where $\texttt{E}$ ranges over index expressions, namely terms typed as indexes.

 In the first row, we include $\PCF$ extended with some syntactic sugar in order to facilitate the bitwise access to numerals:
 $\getBit$ allows us to extract (to read) the $i$-th digit of the binary representation of a numeral, i.e. its $i$-th bit; 
$\setBit$ allows to modify the $i$-th bit of a numeral. They are added to simplify the initialization and decomposition of states.

$\qpcf$ is  parameterized by a set of quantum gates that correspond to the unitary operators made available by the quantum co-processor. 
We  assume $\texttt{U}$ to range on available gates and,
in accord with \cite{qwire}, we can assume that a universal subset of unitary gates (see \cite{KLM07,Kitaev2002,NieCh10}) is available.
Let $\mathcal{U}$ be the set of  available computable unitary operators, 
such for each gate  ${\tt U}$ there is a unique $\mathbf{U}\in\mathcal{U}$.
If $k\in\mathbb N$ then we denote $\mathcal{U}(k)$ the gates in  $\mathcal{U}$ having arity $k+1$,
so $\mathcal{U}=\bigcup_0^\omega \mathcal{U}(k)$. More explicitly, the gates of arity $1$ are in $\mathcal{U}(0)$ and so on.

The syntax of (evaluated) \emph{circuits} is generated by 
$\typecolon,\parallel $ and gate-names. The symbol
$\appCirc$ sequentializes two circuits of the same arity, while $\parallel$ denotes the parallel composition of circuits.
We build (to be evaluated) circuit expressions by means of $\iterCirc$ and $\revCirc$.
We use $\iterCirc$ to produce the parallel composition of a first circuit with a given numbers of a second one. 
We use $\revCirc$  to transform a circuit in another one of the same arity.

Indexes are operated by  \textit{total} operations. W.l.o.g. we assume $\odot\in\{ +, * \}$ (viz. sum, product). 
Types of circuits include index expressions.
We add $\size$ to $\qpcf$ to emphasize the gain that dependent type can concretely provide, although this makes the proofs of the language properties more complex.
$\size$ is an operator that applied to a circuit-expression returns the arity of the corresponding circuit, viz. an index information. 

Last, but not least, we use $\dMeas$  to evaluate circuits suitably initialized:
$\dMeas$ returns a numeral being the binary representation of a quantum measure executed on all quantum wires (of the circuit) as a whole.

$\qpcf$ is indeed conceived to manage circuits that can be freely duplicated and erased, 
while quantum states are hidden by means of $\dMeas$. In some sense,
$\dMeas$ offloads a quantum circuit to a co-processor,
it waits  the end of circuit execution and it returns the final measure of all wires.

\subsection{Dependent Types.}

Dependent types are widely used in proof-theoretical research. Typically, in presence of strongly normalizing languages. 
Unfortunately, the type-checking of dependent types requires to decide the equality of terms (that can be included in types)
and the strong normalization is not realistic for programming languages~\cite{luca06}. 
Therefore, the management of terms in types is the crucial issue that has to be faced in a programming language. 
 This point is discussed in page 75, Section 2.8 of \cite{aspinall05mit} where  different programming approaches are compared.
 Following  \cite{xi1999acm,zenger1997tcs}, $\qpcf$ forbids the inclusion of arbitrary terms in types. 
A suitable subclass of terms (numeric expressions always normalizing) is identified to this purpose: these terms are called \emph{indexes}.
 
Our approach to dependent types is closely inspired to that mentioned in \cite[\S 30.5]{pierce2002mit} to manage vector's types:
the decoration carries with it, some dimensional (i.e. numeric) information.
We avoide general dependent types systems (see \cite{aspinall05mit} for a survey)  because their great expressiveness is exceeding our needs.
We prefer to maintain {the $\qpcf$ type system} as simple as possible
in order to show the feasibility of the approach and its concrete benefits. 

\subsection{Types Overview.}

Traditional types of $\PCF$ (i.e. integers and arrows) are extended to include 2 new types:
a  type for \emph{indexes} (strong normalizing numeric expressions) and a type for \emph{circuits} (carrying around indexes).
 Types of $\qpcf$ are formalized as follows:

 $$\sigma,\tau \myDef \Nat \mid\Idx \mid    \CIRC{\texttt{\texttt{E}}}\mid \Pi\texttt{x}^\sigma.\tau$$

\noindent where $\texttt{E}$ is an index expression (viz. a normalizing numeric expression).
As common in presence of dependent types, we replace arrows by quantified types that include more information than arrows: 
they make explicit, in the type, the variable-name which is bounded.
The variable-name can be $\alpha$-renamed whenever the standard capture-free proviso is satisfied.
We use $\sigma \fleche \tau$ as an abbreviation for $\Pi\texttt{x}^\sigma.\tau$ whenever,
either the variable-name $\texttt{x}$  does not occur in $\tau$ or we are not concerned with $\texttt{x}$.

\subsection{Indexes.} 
The type $\Idx$ picks up a subset of numeric normalizing expressions. 
In particular,  $\Idx$  does not allow to type  non-normalizing expressions as  $\TTY (\lambda \texttt{x}.\texttt{x})$. 
The main use of $\Idx$ is to type all proper-terms used in dependent types.
Indexes are built on numerals and total operations on them.
We limit our operations to addition and multiplication,
 but we assume that this set is conveniently tuned in a concrete case,
e.g. by adding the (positive subtraction) $\dotminus$, or the modulus $\%$, or a selection $\ifIdx$ and so on.
See Remark \ref{rem:typecheck} for more details.

\section{Typing system}\label{sec:typing}

Finite sets of pairs ``variable:type'' are called \emph{bases} whenever variable-names are disjoint: we use $B$ to range over them.
We denote $\dom(B)$ the finite set of variable names included in $B$ and we denote $\codom(B)$ the finite set of types that $B$ associates to variables.
As usual, we use the notation $B\cup\{{\tt x}:\sigma\}$ to extend the base $B$ with the pair ${\tt x}:\sigma$ where
 is assumed that we are adding a fresh variable, i.e. $\mathtt{x}\not\in \dom(B)$.
We assume the analogous disjointness conditions writing $B\cup B'$, for any $B'$.

$\qpcf$ includes a typing axiom of the shape $x:\sigma \vdash x:\sigma$ for each type $\sigma$.
In presence of dependent types, $\sigma$ can contain terms and, consequently,
it is a valid type only when such terms are well-typed. 
Luckily we have a unique type that includes dependencies, namely the type of circuit; moreover, it includes a term that must be typed as index. 
We solve this circularity by means of a function extracting all terms included in types.
More precisely, the function has to identify a set of statements (a.k.a. typings) of the shape $B\vdash \mathtt{E}:\Idx$ where $B$ are bases,
that are needed to check the validity of the considered type.

\begin{definition}
The function $\surfCirc$ takes in input a base and a type and, it returns  a set of typings.
We define $\surfCirc$  by induction on the second argument, as follows:
$\surfCirc(B,\Idx):= \emptyset$,  $ \surfCirc(B,\Nat):= \emptyset$, $\surfCirc(B,\CIRC{\texttt{E}}):=\{ B\vdash \texttt{E}:\Idx\}$
  are the base cases, and 
  $\surfCirc(B, \Pi{\tt x}^\sigma.\tau) := \surfCirc(B, \sigma)\cup \surfCirc(B\cup  \{\mathtt{x}:\sigma\}, \tau)$ is the inductive one,
where w.l.o.g. we are assuming that $\mathtt{x}\not\in \dom(B)$. 
By abusing the notation, we extend $\surfCirc$  to consider set of types, 
namely $\surfCirc(B,S):=\bigcup_{\tau \in S} \surfCirc(B,\tau)$  where $S$ is a finite set of types.
\end{definition}

\begin{example}\label{example:ONE}
  $\surfCirc(\{\texttt{x}:\Idx \}, \texttt{z}: \CIRC{ \texttt{z}\oplus\texttt{E}' })$
  identifies  $\{ \texttt{x} :\Idx \vdash \texttt{z}\oplus\texttt{E}':\Idx\}$. 
   It is worth to note that $\texttt{E}'$ can (hereditarily) contain subterms typed as circuits.\qed
\end{example}

Usually, dependent types are formalized via the introduction of super-types (named kinds) and super-typing rules (a.k.a. kinding rules)
that identify well-given types, see \cite{aspinall05mit} for instance. To limit the number of kinding rules, 
dependent type systems introduce formal tricks that allow to re-use the (ground) typing rules in the kinding system.
We taken advantage of our limited use of dependent types to circumvent the introduction of kinds and kinding rules aiming at:
(i) to avoid the explosion of the number of rules and complex overlapping of rules;
(ii) to make the extension of $\PCF$ as clear as possible; 
and, (iii) to avoid further complexity (viz. mutual induction) in proofs.         
For the sake of completeness, we remark that it is possible to reformulate the use of $\surfCirc$ by adding a unique kind $\square$ to identify 
well-given types, and to add kinding rules checking that terms in types are well-typed.

\begin{definition}\label{defTypeRules}
  The rules of the typing system are  given in Table \ref{TypingRules}.
A typing is \emph{valid} whenever it is the conclusion of a finite type derivation built on the given rules.
We use $\surfCirc$ applied to some arguments in some rule's premises as a placeholder for the resulting typings.
\noindent  Types are considered up to the congruence $\simeq$, which is the smallest equivalence including:
 (i) the $\alpha$-conversion of bound variables;
 (ii) $\beta$-inter-convertibility of $\beta$-redexes (occurring in terms included in types);
 (iii) associativity, commutativity and distributivity of sum and product together with the properties of neutral elements (viz. $0$, $1$).
\end{definition}

We consider types up to inter-convertibility of included terms. 
For the sake of simplicity, we avoid to formalize the interconvertibility via additional rules.
Moreover, we remark that we introduce the untyped row syntax (cf. Section \ref{sec:qpcf}) only in order to help the reading.
Nevertheless, we are interest only to explicitly typed terms (\`a la Church) where all terms are considered together with their whole typing information.

\begin{table}[t]
$$\begin{myArray}[1.8]{c}
     \infer[\scriptstyle(P_0)]{   B \cup  \{\mathtt{x}:\sigma\} \vdash {\tt x}:\sigma }{ \wfi{B,\codom(B)\cup  \{\sigma\}}}
\quad \infer[\scriptstyle(P_1)]{B\vdash {\tt \lambda x^{\sigma}.N}:\Pi{\tt x}^{\sigma}.\tau}{B\cup  \{ {\tt x}:\sigma\}\vdash {\tt N}:\tau }
     \quad  \infer[\scriptstyle(P_2)]{B\vdash {\tt PQ}:\tau[\texttt{Q}/\texttt{x}]}{B\vdash {\tt P}: \Pi \texttt{x}^{\sigma}.\tau\quad B\vdash \texttt{Q}:{\sigma}} 
\\
\infer[\scriptstyle(P_3)]{B\vdash {\tt succ}:\Nat\fleche \Nat}{\wfi{B,\codom(B)}} 
\qquad \infer[\scriptstyle(P_4)]{B\vdash {\tt pred}:\Nat\fleche \Nat}{\wfi{B,\codom(B)}} 
\\
  \infer[\scriptstyle(P_5)]{B\vdash \ifz:\Nat\fleche \Nat\fleche\Nat\fleche\Nat}{\wfi{B,\codom(B)} } 
  \qquad
  \infer[\scriptstyle(P'_5)]{B\vdash \ifz:\Nat\fleche \CIRC{\texttt{E}}\fleche\CIRC{\texttt{E}}\fleche\CIRC{\texttt{E}}}{   B\vdash \texttt{E}:\Idx  }  \\
\infer[\scriptstyle(P_6)]{B\vdash \TTY_\sigma:(\sigma \fleche \sigma) \fleche \sigma}
   {\sigma=\tau_1 \fleche \ldots \fleche \tau_n \fleche \gamma & {\scriptstyle (\gamma \in \{ \Nat, \CIRC{\texttt{E}} \} \text{ and } n\geq 0)} & \quad \wfi{B,\codom(B)\cup\{\sigma  \}} }\\
\infer[\scriptstyle(B_1)]{B\vdash \getBit: \Nat \fleche\Nat\fleche \Nat}{\wfi{B,\codom(B)}}  \quad
\infer[\scriptstyle(B_2)]{B\vdash \setBit:\Nat\fleche \Nat\fleche \Nat}{\wfi{B,\codom(B)}}  \\
\infer[\scriptstyle(I_0)]{B\vdash {\tt M}:\Nat}{B\vdash {\tt M}:\Idx}
\quad \infer[\scriptstyle(I_1)]{B\vdash \num{n}:\Idx}{\wfi{B,\codom(B)}}
\quad  \infer[\scriptstyle(I_2)]{B\vdash \odot \,\texttt{E}_0\,\texttt{E}_1: \Idx}{B\vdash \texttt{E}_0:\Idx \quad B\vdash \texttt{E}_1:\Idx  } 
      \quad \infer[\scriptstyle(I_3)]{B\vdash \size\,\texttt{M}: \gamma}{B\vdash \texttt{M}:\CIRC{\texttt{E}}}\\ 
\infer[\!\!\scriptstyle(C_1)]{B\vdash {\tt U}:\CIRC{\num{k}}}{ {\tt U}\in\mathcal{U}(k)\overset{\phantom{x}}{\phantom{f}} \quad \wfi{B,\codom(B)} }
    \quad \infer[\scriptstyle(C_2)]{B\vdash \appCirc:\CIRC{\texttt{E}}\fleche\CIRC{\texttt{E}}\fleche\CIRC{\texttt{E}} }{  B\vdash \texttt{E}:\Idx }\\
\infer[\scriptstyle(C_3)]{B\vdash \parallel :\CIRC{\texttt{E}_0}\fleche\CIRC{\texttt{E}_1}\fleche\CIRC{\texttt{E}_0+\texttt{E}_1+1} }{ B\vdash \texttt{E}_0:\Idx \qquad B\vdash \texttt{E}_1:\Idx }
   \quad \infer[\scriptstyle(C_4)]{B\vdash \revCirc:\CIRC{\texttt{E}}\fleche\CIRC{\texttt{E}} }{ B\vdash \texttt{E}:\Idx }  \\
   \infer[\scriptstyle(C _5)]{B \vdash \iterCirc: \Pi{\tt x}^\Idx.\CIRC{\texttt{E}_0}\fleche\CIRC{\texttt{E}_1} \fleche\CIRC{ \texttt{E}_0+((1+\texttt{E}_1)*\texttt{x} ) }}
                              { B\vdash \texttt{E}_0:\Idx \qquad B\vdash \texttt{E}_1:\Idx } \\[1mm] 
  \infer[\scriptstyle(M)]{B\vdash \dMeas:\Nat\fleche\CIRC{\texttt{E}}\fleche\Nat }{B\vdash \texttt{E}:\Idx}\\
\hline
\end{myArray}$$
\caption{Typing Rules.}
\label{TypingRules}
\end{table}

\subsection{Type rules overview} 

Rules $(P_0)$, $(P_1)$, $(P_2)$, $(P_3)$, $(P_4)$, $(P_5)$, $(P'_5)$, $(P_6)$ are directly inherited from $\PCF$. 
We  add $\surfCirc$ constraint in the premises of rules where bases are introduced
in order to ensure that all terms included in (considered) types are well typed. 
Rules $(P_5),(P'_5),(P_6)$ restrict types allowed for conditional and recursive terms
in order to ensure that the evaluation of terms typed with $\Idx$ are strongly normalizing.
Note that the premise of $(P'_5)$ implies that $\wfi{B,\codom(B)}$.
Rules $(P_1),(P_2)$ involve the type binder $\Pi$ that generalizes standard arrow-type and take care of possible free variables in types.
As expected in dependent type systems, $(P_2)$ substitutes the argument also in types. 
In rules $(P_3),(P_4),(P_5),(P'_5),(P_6)$ we use arrows as an abbreviation for $\Pi$-types,
since the corresponding variable-names is never typed with $\Idx$.
Note that all numerals are typed $\Nat$ by means of rules $I_0$ and $I_1$.

\begin{remark}\label{rem:typecheck}
It is worth to notice that the choice of the operators that we admit in index expressions has a strong impact on the decidability of the type-checking of the language
because they can occur in types. First, a set of strong normalizing expression ensures that the evaluation of closed terms can be decided and 
no run-time error can arise. On the other hand, when we build a program we have to manage open terms and open expressions also in types.
More explicitly, let us assume $B\vdash {\tt P}: \Pi \texttt{x}^{\CIRC{\texttt{E}_P}}.\tau$ and $B\vdash \texttt{Q}:{\CIRC{\texttt{E}_Q}}$:
then, we can apply the rule $(P_2)$ only whenever $\texttt{E}_P\simeq \texttt{E}_Q$.
If the language of index expressions is not endowed with a decidable equality then we can run into unwanted programming awkwardness.
The decidability of the identity between index expressions follows from \cite{diCosmo2005lncs}.
Appealing extensions to elementary function or primitive recursive functions are possible, 
but their impact on the programming practice should be carefully considered facing decidability issues~\cite{richardson94}. \qed
 \end{remark}
 The rule $(B_1)$ types $\getBit$, that  returns  $0$ or $1$ when applied to two integers. More precisely it returns the  bit
 (in the binary representation) of the first integer in the position pointed by the second integer. The rule $(B_2)$ types $\setBit$ that  takes in input two integers:
the second one selects a bit in the binary representation of the first one and it returns as output the numeral obtained by setting to $1$ the selected bit in the binary representation of the first numeral.
Their typing agree with these behaviors.

The rule $(I_0)$ allows to use an index expression as a term typed $\Nat$. 
Rules $(I_1),(I_2)$ type our basic index expressions as expected. 
The rule $(I_3)$ brings back in term the arity information included in the type of a circuit. 

Rules $(C_1),(C_2),(C_3),(C_4),(C_5)$ type circuit expressions.
We recall that the index $0$ has to be intended denoting the arity $1$.
$(C_1)$ makes available the basic gates. $(C_2)$ types the sequential composition of circuits having the same arity.
$(C_3)$ types the parallel composition of two circuits.   
$(C_4)$ types an operator that (possibly) transforms a circuit in its adjoint, so the arity is preserved.
$(C_5)$ types the parallel composition of some circuits, namely a base circuit $\texttt{M}$ and some copies of a circuit $\texttt{N}$.

Finally, $(M)$ types an operator taking in input a state (the binary representation of a numeral) and a circuit, that gives back another state.

\begin{example}\label{exampleDependent} 
An interesting example of term that provides evidence of the circularity arising from dependent types follows.\\[-5mm]
$$\infer[\scriptstyle(P_2)]{
{\tt x}: \Pi{\tt z}^\Idx.\CIRC{{\tt z}}\vdash {\tt x}\,\size\!({\tt M}): \CIRC{\size\!({\tt M})}
}{
\infer[\scriptstyle(P_0)]{{\tt x}: \Pi{\tt z}^\Idx.\CIRC{{\tt z}}\vdash {\tt x}: \Pi{\tt z}^\Idx.\CIRC{{\tt z}} }{}\qquad 
\infer[\scriptstyle(I_3)]{\vdash \size(\texttt{M}): \Idx }{\infer*{\vdash{\tt M}:\CIRC{{\tt E}}  }{}}
}$$
    $\mathtt{M}$ can be any closed term of $\qpcf$ typed as circuit
and $\mathtt{E}$ can be any closed term of $\qpcf$ typed $\Idx$. 

Since can be $\mathtt{M}$ can be any closed term, t his example shows that types can (possibly) include sub-terms being either non-terminating or open variables, not typed $\Idx$.
However, such terms are always argument of $\size$ that looks only for the index term in the ``more external'' type: but $\size$ throws away such information
(cf. Definition \ref{evaluationRules} for more details).\qed
\end{example}


\subsection{Some typing properties.} 

Many standard properties can be easily adapted  to typing system in Table~\ref{TypingRules}. 
 If $B\vdash  {\tt M}:\tau$  then  $\FV({\tt M}), \FV(\tau) \subseteq dom(B)$.
 Bases of typing can be weakened, viz. if $B\vdash   {\tt M}:\tau$ and $ dom(B)\cup dom(B')=\emptyset$ then $B\cup B'\vdash  {\tt M}:\tau$.
 Moreover, it is easy to check that $B\vdash  {\tt M}:\tau$ implies that $ \wfi{B,\codom(B)}$.
Straightforward adaptation of Generation Lemmas hold too. 
However  our interest is more focused on  dynamic properties of the typing system  than on its logical properties.

\begin{lemma}[Substitution lemma]\label{substLemma}$\;$\\
   If $B \cup \{ \mathtt{x}: \sigma \} \vdash {\tt M}:\tau $ and $B' \vdash {\tt N}:\sigma$ then $B[{\tt N}/\mathtt{x}]\cup B' \vdash {\tt M}[{\tt N}/\mathtt{x}]:\tau[{\tt N}/\mathtt{x}]$.
\end{lemma}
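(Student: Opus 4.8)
The plan is to argue by induction on the derivation of $B \cup \{\mathtt{x}:\sigma\}\vdash{\tt M}:\tau$, splitting on the last rule applied. Before starting I would record three purely syntactic facts and prove them by routine structural inductions. (a) \emph{Substitution commutes}: if $\mathtt{z}\neq\mathtt{x}$ and $\mathtt{z}\notin\FV({\tt N})$, then ${\tt P}[{\tt Q}/\mathtt{z}][{\tt N}/\mathtt{x}] = {\tt P}[{\tt N}/\mathtt{x}]\bigl[{\tt Q}[{\tt N}/\mathtt{x}]/\mathtt{z}\bigr]$, and the same identity holds at the level of types. (b) The congruence $\simeq$ of Definition \ref{defTypeRules} is stable under substitution, i.e.\ $\tau\simeq\tau'$ implies $\tau[{\tt N}/\mathtt{x}]\simeq\tau'[{\tt N}/\mathtt{x}]$, since each generator of $\simeq$ ($\alpha$-, $\beta$-conversion and the arithmetic identities) is preserved by substituting a term for a free variable. (c) The extraction function commutes with substitution: $\surfCirc(B[{\tt N}/\mathtt{x}],\tau[{\tt N}/\mathtt{x}])$ is the $[{\tt N}/\mathtt{x}]$-image of $\surfCirc(B,\tau)$, proved by induction on $\tau$ using the freshness of $\Pi$-bound variables. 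Throughout I work up to $\simeq$ and use the Weakening and Generation properties stated above.

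The axiom $(P_0)$ splits in two. If ${\tt M}=\mathtt{x}$ then $\tau=\sigma$, and since the premise $\wfi{B,\codom(B)\cup\{\sigma\}}$ is checked over the $\mathtt{x}$-free base $B$ we have $\mathtt{x}\notin\FV(\sigma)$ and $B[{\tt N}/\mathtt{x}]=B$; the goal reduces to $B\cup B'\vdash{\tt N}:\sigma$, obtained by weakening the hypothesis $B'\vdash{\tt N}:\sigma$. If ${\tt M}=\mathtt{y}\neq\mathtt{x}$ with $\mathtt{y}:\rho\in B$, then ${\tt M}[{\tt N}/\mathtt{x}]=\mathtt{y}$ and $\mathtt{y}:\rho[{\tt N}/\mathtt{x}]\in B[{\tt N}/\mathtt{x}]$, so $(P_0)$ reapplies once its well-formedness premise is re-established over the new base. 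This last point is common to all the constant axioms ($(P_3)$–$(P_6)$, $(B_1)$, $(B_2)$, $(C_1)$, $(I_1)$): each carries a premise $\wfi{B\cup\{\mathtt{x}:\sigma\},\dots}$ which, by Definition \ref{defTypeRules}, stands for a family of index typings $B''\vdash{\tt E}:\Idx$ with $\mathtt{x}:\sigma\in B''$ (bases only grow toward the leaves, so $\mathtt{x}$ persists). These are genuine subderivations, hence the induction hypothesis yields $B''[{\tt N}/\mathtt{x}]\vdash{\tt E}[{\tt N}/\mathtt{x}]:\Idx$; by fact (c) this is exactly the well-formedness premise required for the substituted base and type, and the axiom reapplies.

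For $(P_1)$ I would $\alpha$-rename the bound variable $\mathtt{z}$ so that it is distinct from $\mathtt{x}$ and fresh for ${\tt N}$, push $[{\tt N}/\mathtt{x}]$ under the binder, apply the induction hypothesis to the premise, and reassemble, substituting componentwise inside $\Pi\mathtt{z}^{\sigma'}.\tau'$. The rule $(P_2)$ is the crux: its conclusion type is the dependent instance $\tau[{\tt Q}/\mathtt{z}]$, so after $[{\tt N}/\mathtt{x}]$ I must rewrite $\tau[{\tt Q}/\mathtt{z}][{\tt N}/\mathtt{x}]$ as $(\tau[{\tt N}/\mathtt{x}])\bigl[{\tt Q}[{\tt N}/\mathtt{x}]/\mathtt{z}\bigr]$ by fact (a); the two premises, substituted via the induction hypothesis, then recombine under $(P_2)$ to give the claimed type up to $\simeq$. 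The remaining rules $(I_0)$–$(I_3)$, $(P'_5)$, $(C_2)$–$(C_5)$, $(M)$ are uniform: apply the induction hypothesis to every ordinary premise and to every index typing contributed by a $\surfCirc$ side condition, then reapply the same rule, invoking fact (b) to absorb the arithmetic shape of the resulting circuit index (e.g.\ ${\tt E}_0+{\tt E}_1+1$ in $(C_3)$, or ${\tt E}_0+((1+{\tt E}_1)*\mathtt{x})$ in $(C_5)$).

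The principal difficulty is not any single case but the entanglement, peculiar to dependent types, between the term-level induction and the well-formedness conditions: because types embed terms (Example \ref{exampleDependent}), substituting in ${\tt M}$ forces a simultaneous substitution in the types of $B$ and in $\tau$, and the rewritten judgments are legitimate only if the $\surfCirc$-premises survive. The clean way to discharge this — and the point I would be most careful about — is to observe that these premises are themselves index-typing subderivations sitting over bases that still contain $\mathtt{x}:\sigma$, so they fall under the very same induction rather than demanding a separate mutual induction; everything then reduces to the commutation fact (c), whose proof is the one genuinely dependent-type-specific computation in the argument.
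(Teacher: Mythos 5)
Your proposal is correct and takes essentially the same approach as the paper: induction on the derivation of $B \cup \{\mathtt{x}:\sigma\} \vdash {\tt M}:\tau$, with the key observation---which is also the paper's only explicit remark---that the $\surfCirc$ premises are placeholders for genuine subderivations of index typings, so the induction hypothesis applies to them directly rather than requiring a separate mutual induction. Your auxiliary facts (a)--(c) and the case analysis simply spell out details the paper leaves implicit.
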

\begin{proof}
 By induction on the derivation $B\cup \{ \mathtt{x}: \sigma \} \vdash {\tt M}:\tau$. We remark that $\surfCirc$ is a placeholder for a set (possibly empty) of sub-derivations.
\qed
\end{proof}

From Lemma \ref{substLemma} the subject reduction follows easily. 

\begin{lemma}\label{lem:auxExp} 
  Let $\mathcal{D}$ be a derivation concluding $B \vdash {\tt M}[\texttt{N}/\texttt{x}] :\sigma$ where $\texttt{x}$ is fresh.
  \begin{enumerate}
  \item  If  $\texttt{x}\in\FV({\tt M})$ then, $\mathcal{D}$ includes a subderivation 
    $\mathcal{D}^\texttt{N}$ concluding  $B^\texttt{N}\vdash \texttt{N} :\tau$, for some $B\subseteq B^\texttt{N}$ and $\tau$; and, moreover,
     $B \vdash \lambda \texttt{x}.{\tt M} : \Pi \texttt{x}^\tau.\sigma$.
  \item
    If  $\texttt{x}\not\in\FV({\tt M})$ and $\texttt{x}\not\in\dom(B)$  then $B \vdash \lambda \texttt{x}.{\tt M} :\Pi \texttt{x}^\tau.  \sigma$ for any $\tau$.
\item
    If  $\mathcal{D}^\texttt{N}$ is a derivation concluding  $B\vdash \texttt{N} :\tau$, for some $\tau$ then
    $B \vdash (\lambda \texttt{x}.{\tt M}) \texttt{N}   :\sigma$.
\end{enumerate}
 \end{lemma}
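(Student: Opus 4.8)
The plan is to prove item 1 by induction on the structure of $M$, since items 2 and 3 follow quickly once it is in hand. The engine is the Substitution Lemma (Lemma \ref{substLemma}) read backwards: from a typing of $M[N/x]$ I reconstruct a typing of $M$ in an enriched base $B\cup\{x:\tau\}$ and then close it with rule $(P_1)$ to obtain $B\vdash\lambda x.M:\Pi x^{\tau}.\sigma$. A preliminary observation I would record first is that, since $x$ is fresh, the stated property $\FV(\sigma)\subseteq\dom(B)$ forces $x\notin\FV(\sigma)$; this is exactly what guarantees that re-substituting $N$ into the reconstructed body returns $\sigma$ unchanged, and it is the invariant that keeps item 1 consistent with item 3.

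The base case is $M=x$: here $M[N/x]=N$, so $\mathcal{D}$ is itself a derivation of $B\vdash N:\sigma$; I set $\tau:=\sigma$, $B^{N}:=B$ and $\mathcal{D}^{N}:=\mathcal{D}$, discharge the well-formedness premise $\wfi{B,\codom(B)\cup\{\sigma\}}$ (using that $\sigma$ is the type in a valid derivation over $B$, whence its index subterms are typable over $B$ and $\wfi{B,\codom(B)}$ holds), and apply $(P_0)$ followed by $(P_1)$. For the inductive cases, since $M\neq x$ the head symbol of $M[N/x]$ coincides with that of $M$, so I invert $\mathcal{D}$ with the Generation Lemmas, apply the induction hypothesis to the immediate subterms that still contain $x$ (at least one does, as $x\in\FV(M)$), and reassemble with the same rule; subterms avoiding $x$ undergo a vacuous substitution and their subderivations are reused, after weakening to the common base when needed. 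When $x$ occurs several times, the types $\tau$ extracted at the various occurrences must agree, and this is secured by uniqueness of typing up to $\simeq$ for the fixed term $N$: capture-avoidance ensures no binder of $M$ is free in $N$, so $N$ is always typed over a base differing from $B$ only by variables it does not use, and a single $\tau$ can be chosen.

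The delicate case, and the one I expect to be the main obstacle, is the application rule $(P_2)$, which in the dependent setting substitutes the argument into the result type. If $M=PQ$ with $x\in\FV(Q)$ and $Q$ an index, then the type of $P[N/x]\,Q[N/x]$ genuinely carries $N$ inside an index position (e.g.\ through operators such as $\iterCirc$ whose output type mentions the index argument), while the reconstructed body of the abstraction carries the corresponding $x$; here I would reconcile the two by invoking Lemma \ref{substLemma} to check that feeding $N$ to the reconstructed $\Pi$-type returns $\sigma$ up to $\simeq$, with $\simeq$ absorbing the sum/product arithmetic on indexes, and by using $x\notin\FV(\sigma)$ to align the statement. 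The remaining operator and abstraction cases are routine inversions. Item 2 is the degenerate situation: when $x\notin\FV(M)$ and $x\notin\dom(B)$ the substitution is the identity, so $B\vdash M:\sigma$ already holds and $(P_1)$ with any well-formed $\tau$ yields $B\vdash\lambda x.M:\Pi x^{\tau}.\sigma$. Finally, item 3 follows by feeding the hypothesised $\mathcal{D}^{N}$ of $B\vdash N:\tau$ into $(P_2)$ together with the $\Pi$-typing of $\lambda x.M$ supplied by item 1 (when $x\in\FV(M)$, the two occurrences of $\tau$ matching by uniqueness of typing) or by item 2 (otherwise); since $x\notin\FV(\sigma)$, the substitution in the conclusion of $(P_2)$ leaves $\sigma$ fixed, giving $B\vdash(\lambda x.M)N:\sigma$.
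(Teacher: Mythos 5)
Your overall route coincides with the paper's: rebuild a typing $B\cup\{\texttt{x}:\tau\}\vdash{\tt M}:\sigma$ from the given derivation of $B\vdash{\tt M}[\texttt{N}/\texttt{x}]:\sigma$, close item 1 with rule $(P_1)$, get item 2 from the vacuous substitution, and get item 3 by $(P_2)$ using $\texttt{x}\notin\FV(\sigma)$. The organizational difference is that you induct on the structure of ${\tt M}$ and invert with Generation Lemmas, whereas the paper inducts directly on the derivation $\mathcal{D}$ (once to locate the subderivation $\mathcal{D}^{\texttt{N}}$, and once to transform $\mathcal{D}$ into a derivation of $B\cup\{\texttt{x}:\tau\}\vdash{\tt M}:\sigma$). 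The paper's choice is not cosmetic: because of rule $(I_0)$ the system is not syntax-directed --- a derivation may end with $(I_0)$ without changing the term --- so the head symbol of ${\tt M}[\texttt{N}/\texttt{x}]$ does not determine the last rule of $\mathcal{D}$, and your inversion step has to thread possible $(I_0)$-prefixes through every case; induction on $\mathcal{D}$ absorbs this for free.

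The genuine gap is your treatment of repeated occurrences of $\texttt{x}$. You secure a single $\tau$ by ``uniqueness of typing up to $\simeq$ for the fixed term $\texttt{N}$'', but that principle is false in this system: by $(I_1)$ and $(I_0)$ the term $\num{5}$ is typed both $\Idx$ and $\Nat$, and $\Idx\not\simeq\Nat$ (the congruence $\simeq$ only covers $\alpha$-conversion, $\beta$-interconvertibility and index arithmetic). Hence distinct occurrences of $\texttt{N}$ inside $\mathcal{D}$ can carry genuinely different types --- e.g.\ one occurrence used as an argument of $\getBit$ (typed $\Nat$) and another as the first argument of $\iterCirc$ (typed $\Idx$) --- and your argument collapses exactly where it is invoked. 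What actually works, and is what the paper's ``transform $\mathcal{D}$ into $\mathcal{D}^*$'' step amounts to, is to fix $\tau$ from one designated subderivation and re-derive the remaining occurrences from the axiom for $\texttt{x}:\tau$, inserting $(I_0)$ coercions where an occurrence was used at $\Nat$ while $\tau=\Idx$; no uniqueness is needed, nor available. A smaller remark: in the dependent $(P_2)$ case, your reconciliation (re-substituting $\texttt{N}$ into the rebuilt $\Pi$-type and checking the result is $\sigma$ up to $\simeq$) establishes what item 3 needs, but not the literal conclusion of item 1, since the rebuilt body type (say $\CIRC{\texttt{x}}$) is not $\simeq$-equal to $\sigma$ (say $\CIRC{\num{5}}$), and $\texttt{x}\notin\FV(\sigma)$ does not bridge the two; the paper's own sketch is equally loose on this point, but your write-up explicitly leans on it.
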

 \begin{proof}
   \begin{enumerate}
   \item
     First, by induction on $\mathcal{D}$ we prove that in $\mathcal{D}$ there is a subderivation $B^\texttt{N}\vdash \texttt{N} :\tau$. 
     Second, by induction on  $\mathcal{D}$ we prove that we can transform $\mathcal{D}$ in a derivation $\mathcal{D}^*$ concluding  $B\cup\{ {\tt x}:\tau  \} \vdash {\tt M} :\sigma$.
     We conclude by using the rule $(P_1)$.
   \item Since ${\tt M}[\texttt{N}/\texttt{x}]={\tt M}$, it is easy to prove that   $B\cup\{ {\tt x}:\tau  \} \vdash {\tt M} :\sigma$ by induction on  $\mathcal{D}$.
     We conclude by using the rule $(P_1)$.
     \item By the previous cases of this Lemma and by using the rule $(P_2)$ (note that ${\tt x}$ does not occur in $\sigma$ by hypothesis).\qed
   \end{enumerate}
 \end{proof}

Let $C[.]$  denote a context for $\qpcf$.
 
 \begin{lemma}[Typed subject expansion]\label{lemma:subjEXP} 
   Let  $\mathcal{D}^\texttt{N}$ be a derivation concluding  $B\vdash \texttt{N} :\tau$, for some $\tau$.
   If $B \vdash C[ {\tt M}[\texttt{N}/\texttt{x}]] :\sigma$ then $B \vdash C[ (\lambda \texttt{x}.{\tt M}) \texttt{N}] :\sigma$.
 \end{lemma}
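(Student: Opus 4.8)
The plan is to reduce the statement about an arbitrary context $C[\,.\,]$ to the context-free case already handled in Lemma~\ref{lem:auxExp}(3). First I would invoke Lemma~\ref{lem:auxExp}(3): given the derivation $\mathcal{D}^\texttt{N}$ concluding $B\vdash \texttt{N}:\tau$, together with a derivation of $B\vdash {\tt M}[\texttt{N}/\texttt{x}]:\sigma$, that lemma directly yields $B\vdash (\lambda \texttt{x}.{\tt M})\,\texttt{N}:\sigma$. So the heart of the matter is showing that the hole of $C[\,.\,]$ carries a local typing hypothesis of exactly the shape $B''\vdash {\tt M}[\texttt{N}/\texttt{x}]:\sigma$ for some base $B''$ and type $\sigma$, and that replacing the filling by $(\lambda \texttt{x}.{\tt M})\,\texttt{N}$ — which has the same type $\sigma$ under $B''$ — leaves the surrounding derivation intact.

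The natural way to carry this out is by induction on the structure of the context $C[\,.\,]$. In the base case $C[\,.\,]=[\,.\,]$ the hole coincides with the whole term and the hypothesis is precisely $B\vdash {\tt M}[\texttt{N}/\texttt{x}]:\sigma$, so Lemma~\ref{lem:auxExp}(3) closes it immediately. For the inductive step, $C[\,.\,]$ is built by one of the term-formers ($\lambda$-abstraction, application, or one of the $\qpcf$-specific operators), and the derivation of $B\vdash C[{\tt M}[\texttt{N}/\texttt{x}]]:\sigma$ ends with the corresponding typing rule. I would read off from that last rule the premise that types the immediate subterm containing the hole: this gives a derivation $B''\vdash C'[{\tt M}[\texttt{N}/\texttt{x}]]:\sigma'$ for the smaller context $C'[\,.\,]$, where $B''$ extends $B$ exactly when the rule introduces a binder (the abstraction rule $(P_1)$ being the only interesting case, where $B''=B\cup\{{\tt y}:\rho\}$). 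Applying the induction hypothesis to $C'[\,.\,]$ replaces the hole-filling by $(\lambda \texttt{x}.{\tt M})\,\texttt{N}$ without changing its type $\sigma'$, and then re-applying the same last rule reconstructs $B\vdash C[(\lambda \texttt{x}.{\tt M})\,\texttt{N}]:\sigma$.

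The main obstacle I anticipate is bookkeeping the bases and the freshness of $\texttt{x}$ as the hole descends under binders. When the context passes through a $\lambda$ (rule $(P_1)$) or through the $\Pi$-binding in $(P_1)$/$(C_5)$, the ambient base grows, so the application of Lemma~\ref{lem:auxExp}(3) must be made relative to the enlarged base $B''$ rather than $B$; this is legitimate precisely because $\mathcal{D}^\texttt{N}$ can be weakened to $B''\vdash\texttt{N}:\tau$ using the weakening property noted just before Lemma~\ref{substLemma}, provided $\texttt{x}$ stays fresh with respect to every binder crossed. A secondary subtlety is the dependent-type rule $(P_2)$, where the conclusion type has the form $\tau[\texttt{Q}/\texttt{y}]$: I must check that substituting inside the hole does not disturb the index expressions occurring in the surrounding types, which holds because $\texttt{x}$ is fresh and the two fillings ${\tt M}[\texttt{N}/\texttt{x}]$ and $(\lambda\texttt{x}.{\tt M})\,\texttt{N}$ share the same type. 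Once these freshness and weakening points are discharged, every inductive case is a mechanical re-application of the governing rule, and the result follows.
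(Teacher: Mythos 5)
Your overall strategy --- reduce to Lemma~\ref{lem:auxExp}(3), and recover the surrounding context by an induction that peels off one rule at a time, weakening $\mathcal{D}^\texttt{N}$ as the hole crosses binders --- is essentially the paper's proof, which is exactly ``induction on $B \vdash C[{\tt M}[\texttt{N}/\texttt{x}]]:\sigma$ together with Lemma~\ref{lem:auxExp}'', with freshness of $\texttt{x}$ assumed w.l.o.g. However, there is one concrete flaw in your version: you induct on the structure of the context $C[\,.\,]$ and assert that the derivation of $B\vdash C[{\tt M}[\texttt{N}/\texttt{x}]]:\sigma$ ends with the typing rule corresponding to the outermost term-former of $C[\,.\,]$. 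That assertion fails in this system, because the typing rules are not syntax-directed: rule $(I_0)$ rederives the \emph{same} term at type $\Nat$ from a derivation at type $\Idx$ without consuming any term-former. For instance, if $C[\,.\,]=\odot\,C'[\,.\,]\,\texttt{E}_1$ and $\sigma=\Nat$, the last rule is $(I_0)$, its premise types the very same term, the context has not shrunk, and your induction hypothesis does not apply.

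The repair is precisely the paper's choice of induction variable: induct on the typing derivation rather than on the context. Then the $(I_0)$ case is immediate (the premise is a strictly smaller derivation for the same term; apply the induction hypothesis and re-apply $(I_0)$), and every other case is as you describe: the hole-equals-term case is Lemma~\ref{lem:auxExp}(3), binder cases use weakening of $\mathcal{D}^\texttt{N}$ to the enlarged base together with freshness of $\texttt{x}$, and rule $(P_2)$ causes no trouble since both fillings ${\tt M}[\texttt{N}/\texttt{x}]$ and $(\lambda\texttt{x}.{\tt M})\,\texttt{N}$ receive the same type, so the substitution $\tau[\texttt{Q}/\texttt{y}]$ in the conclusion is unaffected. (Alternatively, keep your context induction but make the measure lexicographic in the pair consisting of the context and the derivation height.) Your bookkeeping observations about weakening and freshness are correct and are exactly what the paper's terse one-line proof leaves implicit.
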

 \begin{proof}
   W.l.o.g. we can assume that $\texttt{x}$ is fresh, so the proof follows 
   by induction on $B \vdash C[ {\tt M}[\texttt{N}/\texttt{x}]] :\sigma$ and by using the Lemma \ref{lem:auxExp}.
   \end{proof}

It is worth to remark some peculiarity of this typing system.

\begin{lemma}\label{idxProperties}
  \begin{enumerate}
  \item If $B\vdash {\tt M}:\Idx$ then $B\vdash {\tt M}:\Nat$.
\item Let $B\vdash {\tt M}:\sigma$ be a typing derivation.\vphantom{$\surfCirc$}
If $\CIRC{\texttt{E}}$ occurs in it, then $B\vdash\texttt{E}:\Idx$. 
\item If  $B\vdash {\tt M}:\sigma$ then $\wfi{B,\codom(B)}$ and $\wfi{B,\sigma}$.
  \end{enumerate}
\end{lemma}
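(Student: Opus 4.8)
The three parts grow in depth, so I would prove them in the order (1), (3), (2). Part (1) is immediate: the rule $(I_0)$ has precisely $B\vdash{\tt M}:\Idx$ as premise and $B\vdash{\tt M}:\Nat$ as conclusion, so a single application of $(I_0)$ settles it.

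For part (3) the plan is an induction on the derivation $\mathcal{D}$ of $B\vdash{\tt M}:\sigma$, establishing simultaneously that every typing in $\surfCirc(B,\codom(B))$ and every typing in $\surfCirc(B,\sigma)$ is valid. The first conjunct is cheap at every node: the rules introducing a base ($(P_0)$, $(P_3)$, $(P_4)$, $(P_5)$, $(P_6)$, $(B_1)$, $(B_2)$, $(C_1)$, $(I_1)$) carry an explicit $\wfi{B,\codom(B)}$ in their premises, and the remaining rules pass a premise whose base is $B$, so the induction hypothesis delivers it. For the second conjunct I would split on the last rule. The purely first-order rules ($(P_3)$, $(P_4)$, $(P_5)$, $(B_1)$, $(B_2)$, $(I_0)$, $(I_1)$, $(I_2)$, $(I_3)$) have conclusion types built only from $\Nat$ and $\Idx$, so $\surfCirc(B,\sigma)=\emptyset$. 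For the circuit rules ($(C_1)$, $(C_2)$, $(C_3)$, $(C_4)$, $(C_5)$, $(M)$, $(P'_5)$, and $(P_6)$ when $\gamma=\CIRC{\texttt{E}}$) unfolding $\surfCirc$ over the (possibly dependent) arrow type reduces the goal to validity of finitely many index typings $B'\vdash\texttt{E}:\Idx$ with $B'\supseteq B$; most of these follow from the relevant index premise or $\surfCirc$-placeholder by weakening, while the compound indices produced by $(C_1)$ (namely $\num{k}$), $(C_3)$ (namely $\texttt{E}_0+\texttt{E}_1+1$) and $(C_5)$ (namely $\texttt{E}_0+((1+\texttt{E}_1)*{\tt x})$) must be rebuilt with $(I_1)$, $(I_2)$ and, for $(C_5)$, the axiom $(P_0)$ applied to the bound ${\tt x}:\Idx$ now present in $B'$.

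The two delicate cases are the binder rules. In $(P_1)$, with conclusion $B\vdash\lambda{\tt x}^\sigma.N:\Pi{\tt x}^\sigma.\tau$ and premise $B\cup\{{\tt x}:\sigma\}\vdash N:\tau$, unfolding gives $\surfCirc(B,\Pi{\tt x}^\sigma.\tau)=\surfCirc(B,\sigma)\cup\surfCirc(B\cup\{{\tt x}:\sigma\},\tau)$: the $\tau$-part is exactly what the induction hypothesis provides, whereas for the $\sigma$-part (and for $\surfCirc(B,\codom(B))$) the induction hypothesis only yields those typings carrying the extra assumption ${\tt x}:\sigma$ in the base. Since ${\tt x}$ is fresh and therefore does not occur free in $\sigma$ nor in $\codom(B)$, I would discharge it with a strengthening lemma for unused variables, the structural counterpart of the weakening already recorded for this system; this is the step I expect to be the main obstacle, because strengthening must be threaded through the $\surfCirc$-placeholders as well and so needs its own routine induction. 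In $(P_2)$, with conclusion $B\vdash{\tt PQ}:\tau[\texttt{Q}/{\tt x}]$, the induction hypothesis on the first premise yields validity of $\surfCirc(B\cup\{{\tt x}:\sigma\},\tau)$; I would then invoke the Substitution Lemma~\ref{substLemma} together with $B\vdash\texttt{Q}:\sigma$ to turn each $B\cup\{{\tt x}:\sigma\}\vdash\texttt{E}:\Idx$ into $B\vdash\texttt{E}[\texttt{Q}/{\tt x}]:\Idx$, after checking the routine commutation $\surfCirc(B,\tau[\texttt{Q}/{\tt x}])=\{\,B\vdash\texttt{E}[\texttt{Q}/{\tt x}]:\Idx \mid (\cdots\vdash\texttt{E}:\Idx)\in\surfCirc(B\cup\{{\tt x}:\sigma\},\tau)\,\}$.

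Finally, part (2) I would read off as a corollary of part (3). Every occurrence of $\CIRC{\texttt{E}}$ inside $\mathcal{D}$ sits in the type $\rho$ of some node $B'\vdash P:\rho$ (or among the types of $\codom(B')$); applying part (3) to the subderivation ending in that node makes $\surfCirc(B',\rho)$ (respectively $\surfCirc(B',\codom(B'))$) valid, and by the definition of $\surfCirc$ this set literally contains a typing $B''\vdash\texttt{E}:\Idx$, where $B''$ is $B'$ augmented by the $\Pi$-binders enclosing the occurrence. For an occurrence at top level in $\sigma$ or directly in $\codom(B)$ the witnessing base is $B$ itself, which is the reading intended by the statement; I would flag explicitly that under $\Pi$-binders the witnessing base is the correspondingly enlarged one, since an index such as the bound variable of $\iterCirc$ cannot be typed in $B$ alone.
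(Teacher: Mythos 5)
Your proof is correct and follows essentially the same route as the paper: part (1) by rule $(I_0)$, and parts (2)--(3) by induction on the typing derivation with the Substitution Lemma handling the application case (the paper runs 2 and 3 in a single joint induction, whereas you prove (3) first and read (2) off from it, which is an equivalent organization). The details you supply --- the strengthening lemma at $(P_1)$, the commutation of $\surfCirc$ with substitution at $(P_2)$, and the caveat that under $\Pi$-binders the witnessing base in part (2) is the enlarged one --- are exactly what the paper's terse proof leaves implicit, and your caveat agrees with the paper's own later use of the lemma (Corollary~\ref{idxNormalization} asks only for validity ``for some $B$'').
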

\begin{proof}
\emph{1.} By rule $(I_0)$.
\emph{2,3.} The proof can proceed by induction on the derivation $B\vdash {\tt M}: \sigma$ by using the Substitution Lemma.
 \qed
\end{proof}

\section{Semantics of $\qpcf$} \label{sec:eval}

As for $\PCF$,  the evaluation of $\qpcf$ focuses on programs, viz. closed terms of ground types.
Nevertheless, $\PCF$ has just one (or two) ground types,
while $\qpcf$ has denumerable closed ground types, namely $\Nat$, $\Idx$ and $\CIRC{\num{n}}$.

As for $\PCF$, the evaluation of a term  typed $\Nat$ can either diverge or give back a numeral.
The evaluation of a term  typed  $\Idx$ should always converge, and  give back a numeral.
The evaluation of a term typed $\CIRC{\texttt{E}}$ can  either diverge or give back an evaluated circuit.
In the following, we use $\mathtt{C}$ to denote circuits  (resulting from the evaluation of circuit expressions),
i.e. strings built by gate-names, parallel composition and serial composition.
Moreover, numerals and circuit strings are sometimes denoted $\mathtt{V}$.

\begin{definition}\label{evaluationRules}
We formalize the evaluation of $\qpcf$ by means of statements of the shape 
$\mathtt{M} \ev^{\PRB{\alpha}}  \mathtt{V}$
obtained as conclusion of a (finite) derivation $\mathcal{D}$  built with the rules in Table~\ref{evaluationRules},
such that: (i) $\mathtt{M}$ is a closed ground term; and, (ii) $0< \alpha\leq 1$ is the probability 
that $\mathcal{D}$ is the evaluation.
For the sake of simplicity, we write $\mathtt{M} \ev  \mathtt{V}$ to mean that there is a derivation (not necessarily unique)
concluding $\mathtt{M} \ev^{\PRB{\alpha}}  \mathtt{V}$ for some $0< \alpha\leq 1$
and, we write $\mathtt{M} \Uparrow$ to mean that no $\PRB{\alpha}> 0$ and $\mathtt{V}$ exist, such that 
$\mathtt{M} \ev^{\PRB{\alpha}}  \mathtt{V}$. 
\end{definition}

We remark that the semantics of Table~\ref{evaluationRules} rests on an external semantics via the premise of the rule $(m)$,
that executes the circuit in accord with the laws of quantum mechanics  (cf. Definition \ref{unitSemantic}).

\begin{table}[t]
$$\begin{myArray}[2.1]{c}
 \infer[(n)]{\num{n}\ev^{\PRB{1}}  \tt \num{n}}{} \qquad \infer[(s)]{\tt \Succ(M)\ev^{\PRB{\alpha}}  \num{n+1}}{\tt M\ev^{\PRB{\alpha}}  \n} \qquad
\infer[(p)]{\tt \Pred(M)\ev^{\PRB{\alpha}}  \n}{\tt M\ev^{\PRB{\alpha}}  \num{n+1}}\qquad 
\infer[(\beta)]{{\tt (\lambda x.M)N}{\tt P}_1\cdots {\tt P}_m\ev^{\PRB{\alpha}}  {\tt V}}{{\tt M[N/x]}{\tt P}_1\cdots {\tt P}_m\ev^{\PRB{\alpha}}  \texttt{V}}\\
\infer[(\mathrm{if}_l)]{\tt \lif{M}{L}{R}\ev^{\PRB{\alpha\cdot\alpha'}}  \texttt{V} }{\tt M\ev^{\PRB{\alpha}}  \num{0} & \tt L\ev^{\PRB{\alpha'}}  \texttt{V}}\qquad
\infer[(\mathrm{if}_r)]{\tt \lif{M}{L}{R}\ev^{\PRB{\alpha\cdot\alpha'}}  \texttt{V}}{\tt M\ev^{\PRB{\alpha}}  \num{n+1} & \tt R\ev^{\PRB{\alpha'}}  \texttt{V}}\qquad
\infer[(Y)]{ \TTY {\tt M} {\tt P}_1\cdots {\tt P}_i\ev^{\PRB{\alpha}}  {\tt V}}{{\tt M (\TTY {\tt M})}{\tt P}_1\cdots {\tt P}_i\ev^{\PRB{\alpha}}  \texttt{V}}\\
\infer[(\mathrm{sz})]{ \size \texttt{M} \ev^{\PRB{\alpha}}  \num{n} }{\vdash \texttt{M}:\CIRC{\texttt{E}} \qquad \texttt{E}\ev^{\PRB{\alpha}} \num{n} } 
       \qquad \infer[(\mathrm{op})]{\odot\,\texttt{E}_0\,\texttt{E}_1 \ev^{\PRB{\alpha\cdot\alpha'}}  \num{m\odot n} }{\texttt{E}_0\ev^{\PRB{\alpha}} \num{m} \quad \texttt{E}_1\ev^{\PRB{\alpha'}} \num{n}}
       \qquad \infer[(\mathrm{gt})]{\getBit\,\texttt{M}\,\texttt{N} \ev^{\PRB{\alpha\cdot\alpha'}}  \bitExtraction{\num{m}}{\num{n}}}{\texttt{M}\ev^{\PRB{\alpha}} \num{m} \qquad \texttt{N}\ev^{\PRB{\alpha'}} \num{n}}\\
\infer[(\mathrm{st})]{\setBit\,\texttt{M}\,\texttt{N} \ev^{\PRB{\alpha\cdot\alpha'}}  \num{m}'}
                        {\texttt{M}\ev^{\PRB{\alpha}} \num{m} \quad \texttt{N}\ev^{\PRB{\alpha'}} \num{n} \quad 
                           \text{ and } \num{m}' \text{ is such that } \bitExtraction{\num{m}'}{\num{n}}=1 \text{ and }\forall \num{k}\neq\num{n} \bitExtraction{\num{m}'}{\num{k}}= \bitExtraction{\num{m}}{\num{k}} } \\
\infer[(\mathrm{u})]{   {\tt U} \ev^{\PRB{\alpha}}   {\tt U}}{ }  \qquad 
\infer[(\mathrm{u'})]{  {\tt M}_0 \typecolon{\tt M}_1  \ev^{\PRB{\alpha\cdot\alpha'}}   {\tt C}_0  \typecolon {\tt C}_1 }
                     { {\tt M}_0\ev^{\PRB{\alpha}}  {\tt C}_0 \quad  {\tt M}_1\ev^{\PRB{\alpha'}}   {\tt C}_1} \qquad
\infer[(\mathrm{u''})]{{\tt M}_0\parallel {\tt M}_1\, \ev^{\PRB{\alpha\cdot\alpha'}}   {\tt C}_1 \parallel {\tt C}_0 }
                     { {\tt M}_0\ev^{\PRB{\alpha}}  {\tt C}_0 \qquad  {\tt M}_1\ev^{\PRB{\alpha'}}  {\tt C}_1 }\\
\infer[(\mathrm{r_0})]{\revCirc {\tt M} \ev^{\PRB{\alpha}}  \mathtt{U}'}{ {\tt M}\ev^{\PRB{\alpha}}  {\tt U} & (\ddagger{\tt U})=\mathtt{U}' } \qquad
\infer[(\mathrm{r_1})]{\revCirc {\tt M} \ev^{\PRB{\alpha\cdot\alpha'\cdot\alpha''}}  {\tt C}'_1  \typecolon {\tt C}'_0 }
            { {\tt M}\ev^{\PRB{\alpha}}  {\tt C}_0  \typecolon {\tt C}_1 & \revCirc{\tt C}_0\ev^{\PRB{\alpha'}}  {\tt C}'_0 & \revCirc{\tt C}_1\ev^{\PRB{\alpha''}}  {\tt C}'_1 }\\
\infer[\!\!(\mathrm{it})]{\iterCirc\, \texttt{E} \,{\tt M}_0\, {\tt M}_1 \ev^{\PRB{\alpha\cdot\alpha'\cdot\alpha''}} 
                                                    \!\!\!\underbrace{{\tt C}_1\!\parallel\!\! \cdots\!\! \parallel\! {\tt C}_1}_{\num{n}}\! \parallel\! {\tt C}_0 }
                            { \texttt{E} \ev^{\PRB{\alpha}} \num{n}\qquad  {\tt M}_0\ev^{\PRB{\alpha'}}  {\tt C}_0 \qquad  {\tt M}_1\ev^{\PRB{\alpha''}}  {\tt C}_1 }\quad 
\infer[\!\!(\mathrm{r_2})]{\revCirc {\tt M} \ev^{\PRB{\alpha\cdot\alpha'\cdot\alpha''}}  {\tt C}'_0  \parallel {\tt C}'_1 }
                     { {\tt M}\ev^{\PRB{\alpha}}\!  {\tt C}_0 \!\parallel\! {\tt C}_1 & \revCirc{\tt C}_0\ev^{\PRB{\alpha'}}\!  {\tt C}'_0 & \revCirc{\tt C}_1\ev^{\PRB{\alpha''}}\!  {\tt C}'_1 }\\
\phantom{\hspace{3.5cm}}
  \infer[(\mathrm{m})]{\dMeas({\tt M},{\tt N}) \ev^{\PRB{\alpha\cdot\alpha'\cdot\alpha''}}  \num{n}}
        { \mathtt{M}\ev^{\PRB{\alpha}} \num{m} \quad \mathtt{N}\ev^{\PRB{\alpha'}}  \mathtt{C} \quad \mathtt{N}:\CIRC{\num{k}}   \qquad  (n,\alpha'')\in\text{circuitEval}^k(\num{m}\restr{\num{k}},{\tt C})  }\\
\hline
\end{myArray}$$
\caption{Operational Semantics.}
\label{evaluationRules}
\end{table}

Table  \ref{evaluationRules}  includes the standard call-by-name semantics of $\PCF$, namely the first two lines of rules are well-known.

The rules $(sz)$, $(op)$ evaluate some index expressions. In particular, $(sz)$ uses the typing information of $\mathtt{M}$  to recover its arity information. 
Since types are preserved during the evaluation (cf. Section \ref{sec:properties}), we can be sure that the information we extract from types is an index.
Moreover, it is strongly normalizing by Theorem \ref{idxNormalization}. 

\begin{example}
  We consider an example that, in some sense, allows us to complement the Example \ref{exampleDependent}.
  It is easy to see that $\vdash\TTY (\lambda \texttt{x}^{\CIRC{8}}.\texttt{x}):\CIRC{8}$ and $\vdash \size(\TTY (\lambda \texttt{x}^{\CIRC{8}}.\texttt{x})): \Idx$.
Although $\TTY (\lambda \texttt{x}^{\CIRC{8}}.\texttt{x}) \Uparrow$, it is interesting to note that  $\size(\TTY (\lambda \texttt{x}^{\CIRC{8}}.\texttt{x}))\ev^1 \num{8}$.\qed
\end{example}

Let $\bitExtraction{\num{m}}{\num{n}}$ be notation for  $(\overbrace{(\num{m}\mathbin{/}\num{2})\ldots\mathbin{/}\num{2} }^{\num{n}})\%\num{2}$ where $\mathbin{/}$ 
is the integer division (neglecting the remainder of the division)
and $\%$ is the modulo (giving back the remainder of the division). Thus, $\bitExtraction{\num{m}}{\num{0}}$ is the rightmost bit of the binary representation of $\num{m}$.
The rules (gt) and (st) get/set a bit of the first argument (the one selected by the second argument).
For example, the numeral $\setBit\,\num{0}\,\num{n+1}$ is the decimal representation of the binary state $1\underbrace{0\ldots 0}_{n}$ and $\getBit\,\num{3}\,\num{0}$ yields the bit $1$.
Clearly, $\setBit,\getBit$ are  syntactic sugar to manage  input states. 

The rules  $\rm (u), (u'), (u''), (r_0),(r_1), (r_2), (it)$ evaluate circuit expressions in  circuits, viz. strings on $\typecolon,\parallel $ and the gate-names $\mathtt{U}$.
Note that $\rm(u')$ and $\rm(u'')$ evaluate sequential and parallel composition of circuit expressions.
\begin{example}[Sequential composition of quantum circuits]\label{ex:seqcirc}
  Let $\texttt{C}:\CIRC{\num{k}}$ be a given circuit.
   We can use $\lambda \mathtt{x}^{\CIRC{\num{k}}}.\mathtt{x}\appCirc\mathtt{x}: \CIRC{\num{k}}\fleche\CIRC{\num{k}}$ 
in order to concatenate two copies of $\texttt{C}$. 
Let $\num{k}$ be an arbitrary numeral and let $\mathsf{M_{seq}}$ be
$$\lambda \mathtt{u}^{\CIRC{\num{k}}}.\lambda \texttt{x}^{\Nat}. Y\mathsf{W}\mathtt{u}\texttt{x}:\CIRC{\num{k}}\fleche \Nat\fleche\CIRC{\num{k}}$$
where $\mathsf{W}=\lambda \mathtt{w}^\sigma.\lambda \mathtt{u}^{\CIRC{\num{k}}}.\lambda \texttt{y}^{\Nat}.
\lif{\texttt{y}}{(u)}{\left(\appCirc \, (u)\, (w\,u\,(\Pred\, \texttt{y})) \right)}$ 
has type $\sigma\fleche\sigma$, with $\sigma=\CIRC{\num{k}}\fleche\Nat\fleche\CIRC{\num{k}}$.
We can use $\mathsf{M_{seq}}$   applied  to ${\tt C}$ and $\num{n}$ to concatenate $n+1$ copies of ${\tt C}$.   
It is straightforward to parameterize $\mathsf{M_{seq}}$  in order to transform it in a template for a circuit-builder that can be used for any arity.
It suffices to replace $\num{k}$ with the  variable ${\tt k}^\Idx$ and to abstract it;
so that the resulting term
$\mathsf{M^{A}_{seq}} = \lambda {\tt k}^\Idx. \lambda \mathtt{u}^{\CIRC{\num{k}}}.\lambda \texttt{x}^{\Nat}. Y\mathsf{W}\mathtt{u}\texttt{x}$
has type $\Pi{\tt k}^\Idx.\CIRC{\mathtt{k}}\fleche \Nat\fleche\CIRC{\mathtt{k}} $. \qed
 \end{example}

 The rule  $\rm(it)$  provides a mechanism to compose  circuits in parallel. It is driven by an argument of type $\Idx$
in order to ensure that iteration is strong normalizing  and, consequently,  that the arity of the generated circuit is always a numeral.

\begin{example}[Parallel composition of quantum circuit]\label{ex:parcirc}
  Let $\mathsf{M_{par}}$ be
  $$\lambda \mathtt{x}^{\Idx}.\lambda \mathtt{u}^{\CIRC{\num{k}}}\lambda \mathtt{w}^{\CIRC{\num{h}}}.
\iterCirc\, \mathtt{x}\, \mathtt{u}\, \mathtt{w}: \Pi{\tt x}^\Idx.\CIRC{\num{k}}\fleche\CIRC{\num{h}}\fleche\CIRC{\num{k}+(\mathtt{x}*(\num{h}+1))}.$$
$\mathsf{M_{par}}$ when applied to an $\num{n}$ and two unitary gates $\mathtt{U}_1:\CIRC{\num{k}}$ and $\mathtt{U}_2:\CIRC{\num{h}}$ 
 generates a simple circuit built upon a copy of gate $\mathtt{U}_1$ in parallel with {$n$} copies of gate $\mathtt{U}_2$.
It is straightforward to parameterize the above example. It suffices to replace numerals $\num{k}$ and $\num{h}$ in the above example by variables and to abstract
to obtain a single parametric term typed
$\Pi{\tt k}^\Idx.\Pi{\tt h}^\Idx.\Pi{\tt x}^\Idx.\CIRC{\mathtt{k}}\fleche\CIRC{\mathtt{h}}\fleche \CIRC{\mathtt{k}+(\mathtt{x}*(\mathtt{h}+1))}$.\qed
 \end{example}

 More recent quantum programming languages \cite{green2013acm,lmtamc,qwire,Ross2015}
 include the possibility to manipulate quantum circuits and, in particular, of reversing circuits.
 Likewise, our operator $\revCirc $ is expected to produce the adjoint circuit of its input. 
 Its definition rests on the choice of total endo-function (mapping each gate of arity $k$ in a gate of arity $k$) that we denote with the symbol $\ddagger$,
 and that returns the adjoint of each gate.
The circuit reversibility is implemented by rewiring gates in reverse order and,
then, by replacing each gate by its adjoint. Rules (r0), (r1) and (r2) implement this policy.

The evaluation of circuits is characterized by Lemma~\ref{lemma:circExp}.


\paragraph{Quantum Co-processor.}  
The rule (m) evaluates the $\dMeas$ arguments and it uses the results of these evaluations to feed an external evaluating device: a quantum co-processor~\cite{quipperacm}. 
It is considered as a black-box that receiving a suitable evaluated circuit for the evaluation and its initialization,
gives back a total measurement executed on the final state.
In contrast with the other quantum programming languages based on the QRAM model, the co-processor of $\qpcf$ is not expected to record states between calls to it.
Therefore, the decoherence issues are limited to its internal operations (cf. Section \ref{sec:qram}).

The circuit evaluation is described in the standard way by means of the Hilbert's spaces  and  von Neumann Measurements \cite{Ish95,NieCh00,KLM07}. 

 \begin{remark}
   Following the standard axiomatization of quantum mechanics, usually proposed in terms of some \emph{postulates} (see~\cite{Zorzi16} for a simple formulation), we include in the language an explicit operator that represents the so called \emph{von Neumann Total Measurement}, a special kind of \emph{projective} measurements (see e.g. \cite{Ish95,NieCh00,KLM07}).  Informally, given a quantum state, total projective measure destroys superposition and returns a classical state, i.e. a sequence of classical bits $\ket{b_1,b_2,\ldots,b_n}$.
   The restriction to total measurement is not too restrictive because of the Principle of Deferred Measurement~\cite[p.186]{NieCh00}:
   \begin{quote}\small
    Measurements can always be moved from an intermediate stage of a quantum circuit to the end of the circuit; if the
measurement results are used at any stage of the circuit then the classically
controlled operations can be replaced by conditional quantum operations.
   \end{quote}
\end{remark}

In accord with Section~\ref{subSect:overview}, we recall that $\mathtt{C}:\CIRC{n}$ aims to represent a quantum circuit operating on $n+1$ qubits.

\begin{definition}\label{unitSemantic} 
  Let $\text{Circ}^n$ be the set of (evaluated) circuits typed $\CIRC{\num{n}}$ and let $N=2^{n+1}$.
 Let $\mathcal{H}^N$ be a  Hilbert space of finite dimension $N$, let  $\{\ket{\varphi_i}\}$ be a  orthonormal basis  on $\mathcal{H}^N$ and
 let $\mathcal{H}^N \rightarrow\mathcal{H}^N$ be the set of unitary operators on $\mathcal{H}^N$.
  \begin{enumerate}
  \item   $\HEval^n: \text{Circ}^n \rightarrow (\mathcal{H}^N \rightarrow\mathcal{H}^N)$ is a mapping from evaluated circuits into their corresponding algebraic operators
defined  as follows:
$\HEval^n(\mathtt{U})\myDef\mathbf{U}$ whenever $\mathtt{U}$  is  typed $\CIRC{\num{n}}$, so that  $\mathbf{U}: \mathcal{H}^N \rightarrow\mathcal{H}^N$;
$\HEval^n({\tt C}_0 \parallel {\tt C}_1)\myDef\HEval^{n_0}({\tt C}_0)\otimes \HEval^{n_1}({\tt C}_1)$ whenever ${\tt C}_i$ is typed $\CIRC{\num{n}_i}$ and $n=n_0+n_1$;
$\HEval^n({\tt C}_0 \typecolon {\tt C}_1)\myDef\HEval^n({\tt C}_0)\circ \HEval^n({\tt C}_1)$. 
\item A von Neumann measurement (see page 49 of \cite{KLM07}, for instance) with respect to the basis of $\mathcal{H}^N$  and a given state
$$\psi = \sum_i \alpha_i\ket{\varphi_i}$$
outputs the $i$ with probability $|\alpha_i|^2\in (0,1]$.
\item $\text{circuitEval}^n: \Nat \times \text{Circ}^n\rightarrow 2^{\mathbb{N} \times (0,1]}$ is a mapping from a pair  
(an initial state and a circuit) to a powerset of pairs 
(a vector of the basis and its probability) defined as follows:
$$\text{circuitEval}^n(x,\mathtt{C})=\left\{ (i, |\alpha_i|^2)  \left|  \HEval^n(\mathtt{C})(\ket{\varphi_x}) =  \sum_i \alpha_i\ket{\varphi_i} \right\}\right.\;.$$
 \end{enumerate}
\end{definition}

The rule $(m)$ describes a call to an external quantum co-processor that has to be able to evaluate quantum circuits. 
The co-processor is not assumed to store any quantum state between calls.

\begin{lemma}\label{lemma:circExp}
  \begin{enumerate}
  \item If $\vdash \mathtt{M}:\CIRC{\num{n}}$ and $\mathtt{M} \ev  \mathtt{N}$ then   $\vdash \mathtt{N}:\CIRC{\num{n}}$ where $\mathtt{N}$ is a circuit.
  \item If $\mathtt{C}$ is a circuit such that $\vdash \mathtt{C}:\CIRC{\num{n}}$ then $\HEval^n(\mathtt{C})$ is well-defined.
\end{enumerate}
\end{lemma}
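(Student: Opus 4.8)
The plan is to prove the two parts separately: part~1 by reduction to type preservation together with a value-classification argument, and part~2 by structural induction on the circuit.

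For part~1 I would first invoke subject reduction (which, as noted right after Lemma~\ref{substLemma}, follows from the Substitution Lemma): since $\vdash \mathtt{M}:\CIRC{\num{n}}$ and $\mathtt{M}\ev\mathtt{N}$, the value $\mathtt{N}$ again satisfies $\vdash\mathtt{N}:\CIRC{\num{n}}$. It then remains to see that $\mathtt{N}$ \emph{is} a circuit. By Definition~\ref{evaluationRules} every object produced by $\ev$ is a value, i.e. either a numeral or a circuit string built from gate-names, $\typecolon$ and $\parallel$. A numeral is typable only by $\Nat$ or $\Idx$ (by the Generation Lemma applied to rules $(I_0),(I_1)$), hence never by $\CIRC{\num{n}}$; since $\vdash\mathtt{N}:\CIRC{\num{n}}$, the value $\mathtt{N}$ cannot be a numeral and is therefore a circuit. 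If one prefers a self-contained argument avoiding the global subject-reduction statement, the same conclusion follows by induction on the derivation of $\mathtt{M}\ev^{\PRB{\alpha}}\mathtt{N}$: the cases $(\beta)$ and $(Y)$ use Lemma~\ref{substLemma} to retype the contractum and then apply the induction hypothesis; the conditional cases inherit the result from the selected branch; and the genuinely circuit-producing rules $(\mathrm{u}),(\mathrm{u}'),(\mathrm{u}''),(\mathrm{r}_0),(\mathrm{r}_1),(\mathrm{r}_2),(\mathrm{it})$ assemble the output circuit from circuit subresults furnished by the induction hypothesis.

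For part~2 I would proceed by induction on the syntactic structure of $\mathtt{C}$, using the Generation Lemma at each step to read off the types of the immediate subcircuits. In the base case $\mathtt{C}=\mathtt{U}$, well-typedness forces $\mathtt{U}\in\mathcal{U}(n)$ (inversion on rule $(C_1)$), so the associated operator $\mathbf{U}$ acts on $\mathcal{H}^N$ with $N=2^{n+1}$ and $\HEval^n(\mathtt{U})=\mathbf{U}$ is defined. If $\mathtt{C}=\mathtt{C}_0\typecolon\mathtt{C}_1$, inversion on $(C_2)$ gives $\vdash\mathtt{C}_i:\CIRC{\num{n}}$ for $i=0,1$; the induction hypothesis yields two operators on the \emph{same} space $\mathcal{H}^N$, whose composite $\HEval^n(\mathtt{C}_0)\circ\HEval^n(\mathtt{C}_1)$ is defined. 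If $\mathtt{C}=\mathtt{C}_0\parallel\mathtt{C}_1$, inversion on $(C_3)$ gives $\vdash\mathtt{C}_i:\CIRC{\num{n_i}}$ with $n=n_0+n_1+1$; the induction hypothesis yields unitaries on $\mathcal{H}^{N_0}$ and $\mathcal{H}^{N_1}$ with $N_i=2^{n_i+1}$, and since $N_0\cdot N_1=2^{(n_0+1)+(n_1+1)}=2^{n+1}=N$, the tensor $\HEval^{n_0}(\mathtt{C}_0)\otimes\HEval^{n_1}(\mathtt{C}_1)$ lives on $\mathcal{H}^{N_0}\otimes\mathcal{H}^{N_1}\cong\mathcal{H}^N$ and is defined. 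In every case the result is unitary, because unitarity is preserved by both $\circ$ and $\otimes$, so $\HEval^n(\mathtt{C})$ indeed lands in the stated codomain. A caveat worth recording is that, for the recursion to be unambiguous, the subarities $n_0,n_1$ must be determined by $\mathtt{C}$; this holds because closed well-typed circuits have arity indices that are numerals, unique up to the congruence $\simeq$, hence literally unique.

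I expect the main obstacle to be the arity bookkeeping against the congruence $\simeq$. In part~1 (for the self-contained induction) rule $(\mathrm{u}'')$ emits $\mathtt{C}_1\parallel\mathtt{C}_0$ and so requires commutativity of $+$ to recover type $\CIRC{\num{n}}$, while $(\mathrm{it})$ emits a flattened $n$-fold parallel block whose arity $n_0+(1+n_1)\ast n$ must be checked to coincide, up to $\simeq$, with the type computed by rule $(C_5)$; relying instead on subject reduction hides exactly this computation. In part~2 the analogous delicate point is the dimension count $N_0\cdot N_1=N$, which certifies that the recursively obtained operators are tensored on the correct Hilbert space; this is the only place where the numeric content of the typing rules, rather than mere preservation of well-formedness, is genuinely used.
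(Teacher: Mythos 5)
Your proposal is correct and takes essentially the same route as the paper: part~1 is the paper's induction on the evaluation derivation (your subject-reduction packaging is only a superficial repackaging, since big-step subject reduction is itself established by that same induction, and you give the direct induction as well), and part~2 is the paper's induction on the typing of the circuit, which reduces to checking rules $(C_1),(C_2),(C_3)$. Your explicit dimension bookkeeping $N_0\cdot N_1=2^{(n_0+1)+(n_1+1)}=2^{n+1}$ in the $\parallel$ case, using $n=n_0+n_1+1$ as dictated by rule $(C_3)$, is a welcome detail that the paper leaves implicit (and it quietly corrects the side condition $n=n_0+n_1$ stated in Definition~\ref{unitSemantic}).
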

\begin{proof}
  We recall that circuits are built on $\mathtt{U} \mid  \appCirc \mid\;\parallel$.
  \begin{enumerate}
  \item  $\mathtt{M} \ev  \mathtt{N}$ means that there exists a derivation $\mathcal{D}$ concluding $\mathtt{M} \ev^{\PRB{\alpha}}  \mathtt{V}$ where $\alpha >1$.
    The proof is  by induction on the last rule applied in $\mathcal{D}$. 
    Rules $(n)$, $(s)$, $(p)$,$(sz)$, $(op)$, $(gt)$, $(st)$ and $(m)$ are not possible because of the typing hypothesis.
    Rules $(\beta)$, $(\mathrm{if}_l)$, $(\mathrm{if}_r)$ and $(Y)$ follow immediately by the induction hypothesis.
    The rule $(u)$ is immediate. The rule $(r_0)$ is straightforward,  because   $\ddagger$ (cf. rule $(r_0)$) is a total operator that respects arities.
    The rules $(u')$, $(u'')$, $(r_1)$, $(r_2)$, $(it)$ follow immediately by induction. 
  \item The proof follows by induction on the typing rules. 
By the previous point of this lemma, it is sufficient to check the rules $(C_1),(C_2),(C_3)$. \qed 
\end{enumerate}
\end{proof}


Let $\num{m}\restr{\num{k}}$ be the restriction  of the binary representation of $\num{m}$ to the first $\num{k}$ bits.

\begin{example}[Creating and evaluating the EPR Circuit]\label{example:EPR}
Let $\mathtt{CNOT}:\CIRC{\num{1}}$ be the cnot gate and
let $\mathtt{H}:\CIRC{\num{0}}$ and $\mathtt{I}:\CIRC{\num{0}}$  be the (unary) Hadamard and Identity gates, respectively.
The following $\qpcf$-term represents a simple circuit that generates the well-known EPR state~\cite{NieCh00}:
$$\mathtt{EPR}=(\mathtt{I}\parallel\mathtt{H})\appCirc{\text{CNOT}}: \CIRC{\num{1}}\quad .$$
Given a sequence $(b_1,\ldots,b_k)$ of bits, we write $\mathbf{n}(b_1,\ldots,b_k)$ to denote the corresponding numeral.\\
The evaluation of $\dMeas(\num{0},\mathtt{EPR})$ asks to the quantum co-processor the execution of the circuit
$\mathtt{EPR}$ on the initial state $\mathbf{n}(00)$. This execution returns  a fair superposition of $\ket{00}$ and $\ket{11}$ ,
i.e. the state $\frac{1}{\sqrt{2}}\ket{00}+\frac{1}{\sqrt{2}}\ket{11}$. Since
$$\text{circuitEval}(0,\mathtt{EPR})=\{(\mathbf{n}(00), \frac{1}{2}),(\mathbf{n}(11), \frac{1}{2}) \}\quad,$$ 
it follows that  both $\dMeas(\num{0},\mathtt{{EPR}})\ev^{\PRB{\frac{1}{2}}}\num{0}$ 
and  $\dMeas(\num{0},\mathtt{{EPR}})\ev^{\PRB{\frac{1}{2}}} \num{3}$ 
(note that $\num{0}=\mathbf{n}(0,0)$ and $\num{3}=\mathbf{n}(1,1)$).
 \qed
\end{example}

It is well-known that quantum measures break the deterministic evolution of a quantum system. As a consequence,
in presence of a measurement operator in a quantum language (equipped with an universal basis of quantum gates),
one  necessarily loses confluence. For details on this argument see~\cite{DLMZ11entcs,diaz11}.

\subsection{On the Probabilistic evaluation}

The probabilistic evaluation of formal quantum programming languages is usually defined by means of a small-step operational semantics
(e.g. \cite{selinger2006mscs,selinger09chap,pagani14acm}) that formalizes the desired reduction strategy.
Each reduction rule is labeled with the probability that the reduction fires. This essentially means
that the reduction strategy is associated to a discrete time Markov chain, whose states are terms and stationary states are evaluated terms. 
Moreover, various probabilistic and non-deterministic extensions of $\PCF$ have been proposed in literature,
see \cite{ehrhard2018jacm,escardo2009mfps,danos2011iandc,goubault2011ieee,goubault2015jlam}.
The more recent proposals, i.e. \cite{ehrhard2018jacm,danos2011iandc,goubault2015jlam} formalize the language evaluation in accord with the above quantum approach.
In order  to stress the strict correspondence of $\qpcf$ with $\PCF$, we define the evaluation in terms of a big-step operational semantics
that hides the single-step details  (unessential for our purposes) that follows this common approach. 
(Big-step semantics are considered in \cite{escardo2009mfps,goubault2011ieee,rios2017eptcs}.)


In all the above operational evaluation there might be many evaluations from  $\mathtt{M}$ to $\mathtt{V}$.
Let $\mathbf{D}(\mathtt{M},\mathtt{V})$ denote the set of derivations proving $\mathtt{M} \ev^{\PRB{\alpha}} \mathtt{V}$ for any $\alpha$;
let $\mathbf{D}(\mathtt{M})$ denote the set of derivations proving $\mathtt{M} \ev^{\PRB{\alpha}} \mathtt{V}$ for any $\alpha$ and any $\mathtt{V}$;
and, let $\mathsf{prb}(\mathcal{D})$  denotes the probability $\PRB{\alpha}$ whenever $\mathcal{D}$ concludes  $\mathtt{M} \ev^{\PRB{\alpha}}  \mathtt{V}$.
Then,
$\displaystyle\sum_{\mathcal{D}_i\in \mathbf{D}(\mathtt{M},\mathtt{V})} \hspace{-3mm} \mathsf{prb}(\mathcal{D}_i)\leq 1$
and $ \displaystyle\sum_{\mathcal{D}_i\in \mathbf{D}(\mathtt{M})} \hspace{-3mm}\mathsf{prb}(\mathcal{D}_i) \leq 1$ is the probability that the evaluation of $\mathtt{M}$ stops,
while  $1-\displaystyle\sum_{\mathcal{D}_i\in \mathbf{D}(\mathtt{M})} \hspace{-3mm}\mathsf{prb}(\mathcal{D}_i)$ is
the probability that the evaluation of $\mathtt{M}$ diverges.
\begin{example}
  \begin{itemize}
  \item 
    Let $\Omega^\Nat$ be $\TTY (\lambda \mathtt{x}^{\Nat}.\mathtt{x})$.\\ It is clear that $\Omega^\Nat$ is a closed term typed $\Nat$ such that  $\Omega^\Nat\Uparrow$.
    Therefore, $\mathbf{D}(\Omega^\Nat, \num{n})= \emptyset$ and $\mathbf{D}(\Omega^\Nat)= \emptyset$.
\item  Let $\mathtt{EPR}$ be the term defined in the Example \ref{example:EPR}.\\
  Let $\mathtt{M}_8$ be $\ifz \, (\dMeas(\num{0},\mathtt{{EPR}}))\, \num{8} \, (\TTY (\lambda \mathtt{x}^\Nat.\mathtt{x}))$
  then $\mathbf{D}(\mathtt{M}_8)$ contains only the derivation concluding $\mathtt{M}_8\ev^{\PRB{\frac{1}{2}}} \num{8}$.
 \item Let $\mathtt{M}_8^\infty$ be $ \TTY (\lambda \mathtt{x}^\Nat. \ifz \, (\dMeas(\num{0},\mathtt{{EPR}}))\; \num{8} \; \mathtt{x})$
  then $\mathbf{D}(\mathtt{M}_8^\infty)$ contains denumerable derivations: for each $k\in\mathbb{N}$ there is a derivation concluding $\mathtt{M}^\infty_8\ev^{(\PRB{\frac{1}{2}})^k} \num{8}$.
\qed\end{itemize}
\end{example}
        
We define the observational equivalence of $\qpcf$ by following  \cite{ehrhard2018jacm}.

\begin{definition}
 Let $\mathtt{M}$ and $\mathtt{M}$ be closed terms of the same type.
 They are observationally equivalent whenever, for each context $C[.]$ it holds that:
 (i) if $B\vdash C[\mathtt{M}]:\Nat$ then $B\vdash C[\mathtt{N}]:\Nat$; (ii) if $B\vdash  C[\mathtt{N}]:\Nat$ then $B\vdash  C[\mathtt{M}]:\Nat$; 
and, (iii) $\displaystyle\sum_{\mathcal{D}_i\in \mathbf{D}(\mathtt{M},\num{0})} \hspace{-3mm} \mathsf{prb}(\mathcal{D}_i) = \sum_{\mathcal{D}_i\in \mathbf{D}(\mathtt{N},\num{0})} \hspace{-3mm} \mathsf{prb}(\mathcal{D}_i)$.
\end{definition}

As noted in  \cite{ehrhard2018jacm}, the last constraint can be replaced by, for each numeral $\num{k}$,
$\displaystyle\sum_{\mathcal{D}_i\in \mathbf{D}(\mathtt{M},\num{k})} \hspace{-3mm} \mathsf{prb}(\mathcal{D}_i) = \sum_{\mathcal{D}_i\in \mathbf{D}(\mathtt{N},\num{k})} \hspace{-3mm} \mathsf{prb}(\mathcal{D}_i)$.
The operational equivalence is defined  on closed terms of type $\Nat$ because the operational differences in the other types can be 
 traced back to $\Nat$ (while the reverse can be easily proved be false).

Anyway, in this paper we do not plan to further study the observational equivalence of $\qpcf$, therefore in the following,
we sometimes use $\mathtt{M} \ev  \mathtt{V}$ (i.e. $\mathtt{M} \ev^{\PRB{\alpha}}  \mathtt{V}$ for some $0< \alpha\leq 1$).

\subsection{Evaluation properties}\label{sec:properties}

To prove properties about the evaluation mechanisms of $\qpcf$, we have to consider infinite 
ground types (viz. $\Nat$, $\Idx$, $\CIRC{\num{n}}$) 
and we have to unravel the mutual relationships that hold between syntactic classes
(cf. Example~\ref{exampleDependent}). 
A first goal is to prove that $\Idx$ picks up a class of terms which is expected to be endowed with an always terminating evaluation, 
i.e. always normalizing. 

Simply minded arguments do not work for showing the strong normalization property of the typed lambda-calculi:
reduction increases the size of terms, which precludes an induction on their size, and preserves their types, which seems to preclude an induction on types.
We follow the well-known Tait's idea for proofs of strong normalizations  based on a suitable predicate,
see \cite{luca06} for instance. More precisely, we prove our property by adapting the computability predicate given in \cite{gaboardi2016,plotkin77tcs,paolini2006iandc} for $\PCF$.


\begin{definition}\label{compDefinition}
  The predicate $Comp(B,{\tt M},\sigma)$ holds  whenever $B\vdash {\tt M}:\sigma$ and one of the following cases is satisfied:
\begin{enumerate}
\item $B=\emptyset$, $\sigma=\Nat$;  
\item $B=\emptyset$,  $\sigma=\Idx$ and ${\tt M}\ev^1 \num{n}$, for some $\num{n}$;
  \item $B=\emptyset$, $\sigma=\CIRC{\mathtt{E}}$ and $Comp(\emptyset,{\tt E},\Idx)$;
\item $B=\emptyset$,  $\sigma=\Pi \mathtt{x}^\mu. \tau$ and $Comp(\emptyset,{\tt MN}, \tau[{\tt N}/\mathtt{x}])$, for all  $Comp(\emptyset,{\tt N},\mu)$;
\item $B=\{{\tt x}:\nu\}\cup B'$ implies $Comp(B'[{\tt N}/{\tt x}],{\tt M}[{\tt N}/{\tt x}], \sigma[{\tt N}/{\tt x}])$ for all $Comp(\emptyset,{\tt N},\nu)$.
\end{enumerate}
\end{definition}

Let us assume $B\vdash {\tt M}:\sigma$ holds. Focus on ground types:
(i) $Comp$ always holds for $\Nat$;  (ii) $Comp$ holds for $\Idx$ when the term evaluation is terminating;
(iii)$Comp$ holds for $\CIRC{\mathtt{E}}$ independently from the typed term,  whenever  $\mathtt{E}$ is well typed and its evaluation is terminating.
The remaining cases ensure that  $Comp$ hold: (i) for all well-typed closing substitution of $\tt M$  and, (ii) for all well-typed application.

\begin{notation}
Let ${\tt x}_1,...,{\tt x}_n$ be variables, let ${\tt N}_1,...,{\tt N}_n$ be terms and let  $1\leq j\leq k\leq n$.
We write  $\mathtt{Q}\vv{[\mathtt{N}/\mathtt{x}]_j^{k}} $ as a shortening for $((\mathtt{Q}[{\tt N}_j/{\tt x}_j])\cdots [{\tt N}_k/{\tt x}_k] )$
when $\mathtt{Q}$ is a term.
We write  $\sigma\vv{[\mathtt{N}/\mathtt{x}]_j^{k}} $ as a shortening for $((\sigma[{\tt N}_j/{\tt x}_j])\cdots [{\tt N}_k/{\tt x}_k] )$
when $\sigma$ is a type.
As expected $k< j$ means no substitution.
\end{notation}

\begin{lemma}\label{compRephrased}$\;$
\begin{enumerate}
\item Let $\kappa=\Pi\mathtt{z}_1^{\tau_1}. ... \Pi\mathtt{z}_m^{\tau_m}. \Nat$;
  $Comp(B,{\tt M},\kappa)$  iff $B=\{{\tt x}_1:\nu_1,...,{\tt x}_n:\nu_n\}$ implies $\vdash{\tt M}\vv{[\mathtt{N}/\mathtt{x}]_1^{n}} \mathtt{P}_1 \ldots \mathtt{P}_m : \Nat$,
  for all ${\tt N}_j$  such that $Comp(\emptyset,{\tt N}_j,\nu_j\vv{[\mathtt{N}/\mathtt{x}]_1^{j-1}})$ where $j\leq n$,
  for all ${\tt P}_i$  such that $Comp(\emptyset,{\tt P}_i,\tau_i\vv{[\mathtt{N}/\mathtt{x}]_1^{n}}\vv{[\mathtt{P}/\mathtt{z}]_1^{i-1}} )$ where $i\leq m$.
\item Let $\kappa=\Pi\mathtt{z}_1^{\tau_1}. ... \Pi\mathtt{z}_m^{\tau_m}. \Idx$; 
  $Comp(B,{\tt M},\kappa)$  iff $B=\{{\tt x}_1:\nu_1,...,{\tt x}_n:\nu_n\}$ implies  $Comp(\emptyset,{\tt M} \vv{[\mathtt{N}/\mathtt{x}]_1^{n}}\mathtt{P}_1 \ldots \mathtt{P}_m , \Idx)$,
   for all ${\tt N}_j$  s.t. $Comp(\emptyset,{\tt N}_j,\nu_j\vv{[\mathtt{N}/\mathtt{x}]_1^{j-1}})$ where $j\leq n$,
  for all ${\tt P}_i$  s.t. $Comp(\emptyset,{\tt P}_i,\tau_i  \vv{[\mathtt{N}/\mathtt{x}]_1^{n}}\vv{[\mathtt{P}/\mathtt{z}]_1^{i-1}} )$ where $i\leq m$.
\item Let $\kappa=\Pi\mathtt{z}_1^{\tau_1}. ... \Pi\mathtt{z}_m^{\tau_m}.\CIRC{\mathtt{E}}$.
  $Comp(B,{\tt M},\kappa)$ iff  $B=\{{\tt x}_1:\nu_1,...,{\tt x}_n:\nu_n\}$ implies
  $\vdash {\tt M}\vv{[\mathtt{N}/\mathtt{x}]_1^{n}} \mathtt{P}_1 \ldots \mathtt{P}_m : \CIRC{\mathtt{E}\vv{[\mathtt{N}/\mathtt{x}]_1^{n}}\vv{[\mathtt{P}/\mathtt{z}]_1^{m}}  }$ and
   $Comp(\emptyset,{\tt E} \vv{[\mathtt{N}/\mathtt{x}]_1^{n}}\vv{[\mathtt{P}/\mathtt{z}]_1^{m}} , \Idx)$,
   for all ${\tt N}_j$  such that $Comp(\emptyset,{\tt N}_j,\nu_j \vv{[\mathtt{N}/\mathtt{x}]_1^{j-1}} )$ where $j\leq n$,
  for all ${\tt P}_i$  such that $Comp(\emptyset,{\tt P}_i,\tau_i \vv{[\mathtt{N}/\mathtt{x}]_1^{n}}\vv{[\mathtt{P}/\mathtt{z}]_1^{i-1}} )$ where $i\leq m$.
\end{enumerate}
\end{lemma}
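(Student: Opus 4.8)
The plan is to prove the three items together, since they differ only in the ground type $\gamma\in\{\Nat,\Idx,\CIRC{\mathtt{E}}\}$ sitting under the $\Pi$-prefix: the statement is exactly the complete unfolding of Definition~\ref{compDefinition}, whose clauses are mutually exclusive (cases 1--4 fire only on an empty base, case 5 only on a non-empty one). The recursion built into that definition first strips the base via case 5, then the $\Pi$-binders via case 4, and finally lands on one of the ground clauses 1, 2, 3; I would carry out the unfolding in precisely this order. Throughout, I would appeal to the Substitution Lemma (Lemma~\ref{substLemma}) to keep the ambient judgement $B\vdash{\tt M}:\sigma$ required by $Comp$ intact under each substitution and each application.

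First I would strip the base. By induction on $n=|B|$, repeatedly applying case 5, I would establish that $Comp(B,{\tt M},\kappa)$ holds iff $Comp(\emptyset,{\tt M}\vv{[\mathtt{N}/\mathtt{x}]_1^{n}},\kappa\vv{[\mathtt{N}/\mathtt{x}]_1^{n}})$ for every family ${\tt N}_j$ with $Comp(\emptyset,{\tt N}_j,\nu_j\vv{[\mathtt{N}/\mathtt{x}]_1^{j-1}})$. The inductive step peels off $\mathtt{x}_1:\nu_1$ by case 5, obtaining $Comp(B'[{\tt N}_1/\mathtt{x}_1],{\tt M}[{\tt N}_1/\mathtt{x}_1],\kappa[{\tt N}_1/\mathtt{x}_1])$ for all computable ${\tt N}_1$, and then invokes the induction hypothesis on the base $B'[{\tt N}_1/\mathtt{x}_1]$ of size $n-1$; the key bookkeeping identity is the substitution-composition $(\nu_j[{\tt N}_1/\mathtt{x}_1])\vv{[\mathtt{N}/\mathtt{x}]_2^{j-1}}=\nu_j\vv{[\mathtt{N}/\mathtt{x}]_1^{j-1}}$, and likewise for ${\tt M}$ and $\kappa$, which folds the successive single substitutions into the iterated notation.

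Next I would strip the $\Pi$-prefix. Writing $\kappa\vv{[\mathtt{N}/\mathtt{x}]_1^{n}}=\Pi\mathtt{z}_1^{\tau_1'}.\cdots\Pi\mathtt{z}_m^{\tau_m'}.\gamma'$, I would argue by induction on $m$, repeatedly applying case 4, that $Comp(\emptyset,{\tt M}',\Pi\mathtt{z}_1^{\tau_1'}.\cdots.\gamma')$ holds iff $Comp(\emptyset,{\tt M}'\mathtt{P}_1\cdots\mathtt{P}_m,\gamma'\vv{[\mathtt{P}/\mathtt{z}]_1^{m}})$ for all $\mathtt{P}_i$ computable at the appropriate substituted type, the same composition argument making the conditions line up to $Comp(\emptyset,\mathtt{P}_i,\tau_i\vv{[\mathtt{N}/\mathtt{x}]_1^{n}}\vv{[\mathtt{P}/\mathtt{z}]_1^{i-1}})$. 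Finally I would specialize the ground type reached: for $\gamma=\Nat$ case 1 gives that $Comp(\emptyset,\cdot,\Nat)$ is mere well-typedness $\vdash\cdot:\Nat$, yielding item 1; for $\gamma=\Idx$ the predicate is retained verbatim, yielding item 2; and for $\gamma=\CIRC{\mathtt{E}}$ case 3 rewrites $Comp(\emptyset,\cdot,\CIRC{\mathtt{E}'})$ as $\vdash\cdot:\CIRC{\mathtt{E}'}$ together with $Comp(\emptyset,\mathtt{E}',\Idx)$, yielding item 3.

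I expect the only genuine difficulty to be notational rather than conceptual: faithfully tracking the interleaving of the closing substitutions $[\mathtt{N}/\mathtt{x}]$ and the argument substitutions $[\mathtt{P}/\mathtt{z}]$, and verifying the commutation identities for substitution into the index expressions occurring inside $\CIRC{\cdot}$ types, so that the computability side-conditions on the $\nu_j$, the $\tau_i$ and $\mathtt{E}$ come out exactly as displayed. No idea beyond the definitional unfolding and Lemma~\ref{substLemma} should be required.
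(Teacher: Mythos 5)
Your proposal is correct and takes essentially the same route as the paper: its proof of item 1 likewise first shows, by induction on $n$, that $Comp(B,\mathtt{M},\kappa)$ is equivalent to closing the base with computable substitutions (your case-5 stripping), and then concludes by induction on $m$ unfolding the $\Pi$-binders via case 4 down to the ground clause, with items 2 and 3 treated as analogous. The only difference is presentational: the paper leaves the substitution-composition bookkeeping and the appeal to typing preservation implicit, which you spell out.
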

\begin{proof}
\begin{enumerate}
\item 
  First, we prove by induction on $n$ that 
  $Comp(B,{\tt M},\kappa)$ if and only if  $B=\{{\tt x}_1:\nu_1,...,{\tt x}_n:\nu_n\}$ implies
$Comp(\emptyset,{\tt M} \vv{[\mathtt{N}/\mathtt{x}]_1^{n}},\kappa \vv{[\mathtt{N}/\mathtt{x}]_1^{n}} )$
for all ${\tt N}_j$  such that $Comp(\emptyset,{\tt N}_j,\nu_j \vv{[\mathtt{N}/\mathtt{x}]_1^{j-1}} )$ where $j\leq n$.
Then, we conclude by induction on $m$.
\item The proof is similar to the previous one. It is worth to emphasize that, in the statement, we write
  $Comp(\emptyset,{\tt M}\vv{[\mathtt{N}/\mathtt{x}]_1^{n}} \mathtt{P}_1 \ldots \mathtt{P}_m , \Idx)$  
 as a shortening for  $\vdash{\tt M}\vv{[\mathtt{N}/\mathtt{x}]_1^{n}} \mathtt{P}_1 \ldots \mathtt{P}_m:\Idx$ and
  $\vdash{\tt M}\vv{[\mathtt{N}/\mathtt{x}]_1^{n}}\mathtt{P}_1 \ldots \mathtt{P}_m\ev^1 \num{n}$, for some $\num{n}$.
\item Similar to  that of 1.\qed
\end{enumerate}
\end{proof}
  
Lemma~\ref{compRephrased} provides an alternative charactrerization of $Comp$, because
 it is easy to check that each type has the shape $\Pi\mathtt{z}_1^{\tau_1}. ... \Pi\mathtt{z}_m^{\tau_m}.\gamma$  for a unique $m\in\mathbb{N}$
and $\gamma \in\{\Nat, \Idx, \CIRC{\mathtt{E}}\}$.

\begin{theorem}\label{thComp}
 If $B\vdash \mathtt{M} : \kappa$ then $Comp(B,{\tt M},\kappa)$.
\end{theorem}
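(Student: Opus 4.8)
The plan is to prove the statement by induction on the derivation of $B\vdash\mathtt{M}:\kappa$, in the spirit of Tait's computability method, with the whole difficulty absorbed into two preliminary devices. First, Lemma~\ref{compRephrased} lets me replace the opaque predicate $Comp(B,\mathtt{M},\kappa)$ by the equivalent statement in which every $\Pi$-binder of $\kappa$ and every variable of $B$ has been instantiated with computable closed terms, so that the goal is reduced to a claim about a fully-applied term $\mathtt{M}\vv{[\mathtt{N}/\mathtt{x}]_1^{n}}\mathtt{P}_1\ldots\mathtt{P}_m$ of a ground type $\gamma\in\{\Nat,\Idx,\CIRC{\mathtt{E}}\}$. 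Second, I would isolate a \emph{head-expansion} lemma for ground types: if $\mathtt{M}$ is closed and well typed at a ground $\gamma$ and head-reduces to $\mathtt{M}'$ by a $(\beta)$-, $(Y)$- or $(\mathrm{if})$-step (carrying its spine $\mathtt{P}_1\ldots\mathtt{P}_m$) with $Comp(\emptyset,\mathtt{M}',\gamma)$, then $Comp(\emptyset,\mathtt{M},\gamma)$. For $\gamma=\Nat$ this is immediate, since there $Comp$ is mere well-typedness, preserved by subject expansion (Lemma~\ref{lemma:subjEXP}); for $\gamma=\CIRC{\mathtt{E}}$ the embedded index $\mathtt{E}$ is left untouched by the step; and for $\gamma=\Idx$ the only applicable step is $(\beta)$ --- because $(Y)$ and $(\mathrm{if})$ never produce $\Idx$ by $(P_6)$, $(P_5)$, $(P'_5)$ --- and $(\beta)$ transmits the probability $\PRB{1}$ of the reduct to the redex unchanged.

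With these tools the cases for the rules inherited from $\PCF$ are routine: $(P_0)$ is the open-term clause of Definition~\ref{compDefinition}, $(P_2)$ is exactly the $\Pi$-clause, and $(P_1)$ follows by applying the induction hypothesis to the body and then head-expanding the $\beta$-redex along the spine supplied by Lemma~\ref{compRephrased}; moreover $\Succ$, $\Pred$, $\getBit$, $\setBit$, $\dMeas$ all produce $\Nat$, where computability is just well-typedness. The two cases that are usually delicate become easy here \emph{by design}: for $\ifz$ and for $\TTY_\sigma$ the result type $\gamma$ is forced by $(P_5)$, $(P'_5)$, $(P_6)$ to lie in $\{\Nat,\CIRC{\mathtt{E}}\}$ and never to be $\Idx$, so after peeling the spine the obligation is either trivial well-typedness (for $\Nat$) or well-typedness together with termination of the type-level index (for $\CIRC{\mathtt{E}}$). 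In particular the fixpoint needs \emph{no} approximation argument, precisely because divergence is tolerated at $\Nat$ and at $\CIRC{\cdot}$, matching the example $\TTY(\lambda\mathtt{x}^{\CIRC{8}}.\mathtt{x})\Uparrow$ whose size nonetheless evaluates.

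The genuine content therefore concentrates on the index expressions embedded in circuit types. The mechanism I rely on is that, by Definition~\ref{defTypeRules}, each occurrence of $\surfCirc$ in a rule's premises stands for actual typing subderivations $B'\vdash\mathtt{E}:\Idx$, to which the induction hypothesis applies, yielding $Comp(B',\mathtt{E},\Idx)$ and hence, after substituting the computable spine and closing terms, $Comp(\emptyset,\mathtt{E}[\cdots],\Idx)$, i.e.\ termination of $\mathtt{E}$ at probability $\PRB{1}$. This discharges the circuit constructors $(C_1)$--$(C_5)$: for each, the fully instantiated operator applied to computable circuit arguments is well typed at some $\CIRC{\mathtt{E}'}$, and its arity index $\mathtt{E}'$ is a sum/product of computable indices, hence terminates by the $(\mathrm{op})$ rule. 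The same mechanism handles $(I_1)$, $(I_2)$ and, crucially, $(I_3)$/$\size$: to show $Comp(\emptyset,\size\,\mathtt{M},\Idx)$ I use that $Comp(\emptyset,\mathtt{M},\CIRC{\mathtt{E}})$ already provides $Comp(\emptyset,\mathtt{E},\Idx)$, so $\mathtt{E}\ev^{\PRB{1}}\num{n}$ and rule $(\mathrm{sz})$ gives $\size\,\mathtt{M}\ev^{\PRB{1}}\num{n}$ --- note this goes through even when $\mathtt{M}$ itself diverges, which is exactly the circularity exhibited in Example~\ref{exampleDependent}.

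The main obstacle I expect is bookkeeping the dependent substitutions coherently: rule $(P_2)$ substitutes the argument into the type, so in the reformulation of Lemma~\ref{compRephrased} one must keep the iterated substitutions $\vv{[\mathtt{N}/\mathtt{x}]_1^{n}}$ and $\vv{[\mathtt{P}/\mathtt{z}]_1^{i-1}}$ synchronised between terms and types, invoking the Substitution Lemma~\ref{substLemma} and Lemma~\ref{idxProperties} to guarantee that each $\mathtt{P}_i$ is demanded at the type at which it is known computable, and that the resulting type is again well formed (so that $\surfCirc$ of it yields fresh $\Idx$-subderivations). Once the head-expansion lemma and this substitution discipline are in place, the remaining verification of Table~\ref{TypingRules} rule by rule is mechanical, with Lemma~\ref{lemma:circExp} ensuring that evaluated circuit terms retain their arity so that the semantics of $\dMeas$ and $\size$ are well defined on them.
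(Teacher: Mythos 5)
Your proposal is correct and follows essentially the same route as the paper's proof: induction on the typing derivation, using Lemma~\ref{compRephrased} to reduce each case to a fully-instantiated claim at ground type, Lemma~\ref{lemma:subjEXP} plus the $(\beta)$ evaluation rule for the abstraction case, the type restrictions in $(P_5)$, $(P'_5)$, $(P_6)$ to avoid any fixpoint approximation argument, and the $\surfCirc$ premises as genuine subderivations feeding the induction hypothesis for index termination (exactly how the paper discharges $(P'_5)$, $(P_6)$, $(C_1)$--$(C_5)$ and $(I_3)$). The only difference is presentational: you package the expansion reasoning as a standalone head-expansion lemma, whereas the paper inlines it case by case within $(P_1)$.
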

\begin{proof}
   The proof is by induction on the derivation $\mathcal{D}$ proving $B\vdash \mathtt{M} : \kappa$.   
\begin{itemize}
\item   Rule $(P_0)$. Let $\mathtt{M}=\mathtt{x}_k$ for some $k\leq n$ and $\kappa=\Pi\mathtt{z}_1^{\tau_1}. ... \Pi\mathtt{z}_m^{\tau_m}. \gamma$ where
  $\gamma \in\{\Nat, \Idx, \CIRC{\mathtt{E}}\}$. 
  If $B=\{{\tt x}_1:\nu_1,...,{\tt x}_n:\nu_n\}$ then we aim to prove that
  $Comp(\emptyset, \mathtt{x}_k\vv{[\mathtt{N}/\mathtt{x}]_1^{n}},\kappa \vv{[\mathtt{N}/\mathtt{x}]_1^{n}} )$
  for all ${\tt N}_j$  such that $Comp(\emptyset,{\tt N}_j,\nu_j \vv{[\mathtt{N}/\mathtt{x}]_1^{j-1}} )$ where $j\leq n$. 
  Since $Comp(\emptyset,{\tt N}_j,\nu_j \vv{[\mathtt{N}/\mathtt{x}]_1^{j-1}} )$ implies $\vdash {\tt N}_j:\nu_j \vv{[\mathtt{N}/\mathtt{x}]_1^{j-1}} $,
  we are sure that ${\tt N}_j$ and $\nu_j \vv{[\mathtt{N}/\mathtt{x}]_1^{j-1}}$ do not contain free variables. 
  Therefore  $Comp(\emptyset, \mathtt{x}_k\vv{[\mathtt{N}/\mathtt{x}]_1^{n}},\kappa \vv{[\mathtt{N}/\mathtt{x}]_1^{n}} )  = Comp(\emptyset, \mathtt{N}_k,\kappa\vv{[\mathtt{N}/\mathtt{x}]_1^{n}} )$.
  Since  $(P_0)$ is the last rule used in $\mathcal{D}$, then its final typing has shape $B''\cup \{  \mathtt{x}_k: \nu_k \} \vdash \mathtt{x}_k: \kappa$
where $ \nu_k = \kappa$.
  Then, 
  $ Comp(\emptyset, \mathtt{N}_k,\kappa \vv{[\mathtt{N}/\mathtt{x}]_1^{n}} ) =Comp(\emptyset, \mathtt{N}_k, \nu_j \vv{[\mathtt{N}/\mathtt{x}]_1^{j-1}})$ which is assumed to hold.
\item Rule $(P_1)$. Let $\mathtt{M}={\tt \lambda x^{\sigma}.Q}$ and $\kappa=\Pi\mathtt{x}^{\sigma}.\Pi\mathtt{z}_1^{\tau_1}. ... \Pi\mathtt{z}_m^{\tau_m}. \gamma$ where
  $\gamma$ is a ground type. 
  By induction, $Comp(B\cup  \{ {\tt x}:\sigma\},\mathtt{Q},\Pi\mathtt{z}_1^{\tau_1}. ... \Pi\mathtt{z}_m^{\tau_m}. \gamma) $ holds. 
 \begin{itemize}
 \item
   Let $\gamma=\Nat$;  by applying the Lemma~\ref{compRephrased} to the induction hypothesis we know that
$B\cup  \{ {\tt x}:\sigma\}=\{{\tt x}_1:\nu_1,...,{\tt x}_n:\nu_n\}$ implies $\vdash{\tt M}\vv{[\mathtt{N}/\mathtt{x}]_1^{n}} \mathtt{P}_1 \ldots \mathtt{P}_m : \Nat$,
  for all ${\tt N}_j$  such that $Comp(\emptyset,{\tt N}_j,\nu_j\vv{[\mathtt{N}/\mathtt{x}]_1^{j-1}})$ where $j\leq n$,
  for all ${\tt P}_i$  such that $Comp(\emptyset,{\tt P}_i,\tau_i\vv{[\mathtt{N}/\mathtt{x}]_1^{n}}\vv{[\mathtt{P}/\mathtt{z}]_1^{i-1}} )$ where $i\leq m$.
  Let $k \leq n$ be such that $ {\tt x}:\sigma$ is  ${\tt x}_k:\nu:_k$.
  So
  $\vdash ({\tt \lambda x^{\sigma}.Q})\vv{[\mathtt{N}/\mathtt{x}]_1^{k-1}} \vv{[\mathtt{N}/\mathtt{x}]_{k+1}^{n}} \mathtt{N}_k \mathtt{P}_1 \ldots \mathtt{P}_m : \Nat$
   follows by Lemma~\ref{lemma:subjEXP}, since $\mathtt{N}_k$ is closed.
   \item 
     Let $\gamma=\Idx$; by applying the Lemma~\ref{compRephrased} to the induction hypothesis we know that  
   if we assume  $B\cup  \{ {\tt x}:\sigma\}=\{{\tt x}_1:\nu_1,...,{\tt x}_n:\nu_n\}$ then
     $Comp(\emptyset,{\tt M} \vv{[\mathtt{N}/\mathtt{x}]_1^{n}}\mathtt{P}_1 \ldots \mathtt{P}_m , \Idx)$,
   for all ${\tt N}_j$  such that $Comp(\emptyset,{\tt N}_j,\nu_j\vv{[\mathtt{N}/\mathtt{x}]_1^{j-1}})$ where $j\leq n$,
  for all ${\tt P}_i$  such that $Comp(\emptyset,{\tt P}_i,\tau_i  \vv{[\mathtt{N}/\mathtt{x}]_1^{n}}\vv{[\mathtt{P}/\mathtt{z}]_1^{i-1}} )$ where $i\leq m$.
  More explicitly,
  $\vdash{\tt M} \vv{[\mathtt{N}/\mathtt{x}]_1^{n}} \mathtt{P}_1 \ldots \mathtt{P}_m:\Idx$
  and   ${\tt M} \vv{[\mathtt{N}/\mathtt{x}]_1^{n}} \mathtt{P}_1 \ldots \mathtt{P}_m\ev^1 \num{n}$, for some $\num{n}$.
Let $k \leq n$ be such that $ {\tt x}:\sigma$ is  ${\tt x}_k:\nu:_k$.
  Since $\mathtt{N}_k$ is closed, 
  we conclude by  Lemma~\ref{lemma:subjEXP} and the evaluation rule $\beta$.
\item Let $\gamma= \CIRC{\mathtt{E}}$;  by applying the Lemma~\ref{compRephrased} to the induction hypothesis we know that  
 $B\cup  \{ {\tt x}:\sigma\}=\{{\tt x}_1:\nu_1,...,{\tt x}_n:\nu_n\}$ implies
 $\vdash {\tt M}\vv{[\mathtt{N}/\mathtt{x}]_1^{n}} \mathtt{P}_1 \ldots \mathtt{P}_m : \CIRC{\mathtt{E}\vv{[\mathtt{N}/\mathtt{x}]_1^{n}}\vv{[\mathtt{P}/\mathtt{z}]_1^{m}}  }$ and
   $Comp(\emptyset,{\tt E} \vv{[\mathtt{N}/\mathtt{x}]_1^{n}}\vv{[\mathtt{P}/\mathtt{z}]_1^{m}} , \Idx)$,
   for all ${\tt N}_j$  such that $Comp(\emptyset,{\tt N}_j,\nu_j \vv{[\mathtt{N}/\mathtt{x}]_1^{j-1}} )$ where $j\leq n$,
   for all ${\tt P}_i$ s.t. $Comp(\emptyset,{\tt P}_i,\tau_i \vv{[\mathtt{N}/\mathtt{x}]_1^{n}}\vv{[\mathtt{P}/\mathtt{z}]_1^{i-1}} )$ where $i\leq m$.
   Let $k \leq n$ be such that $ {\tt x}:\sigma$ is  ${\tt x}_k:\nu:_k$.
   Since $Comp(\emptyset,{\tt E}[\mathtt{P}/\mathtt{x},\vv{\mathtt{N}}/\vv{\mathtt{x}},\mathtt{P}_1 /{\tt z}_1 ,...,\mathtt{P}_m/{\tt z}_m], \Idx)$ already holds,
   it remains to prove that
   $\vdash (\lambda\mathtt{x}.{\tt Q}) )\vv{[\mathtt{N}/\mathtt{x}]_1^{k-1}} \vv{[\mathtt{N}/\mathtt{x}]_{k+1}^{n}} \mathtt{N}_k \mathtt{P}_1 \ldots \mathtt{P}_m : \CIRC{\mathtt{E}\vv{[\mathtt{N}/\mathtt{x}]_1^{n}}\vv{[\mathtt{P}/\mathtt{z}]_1^{m}}  }$. This latter follows by  Lemma~\ref{lemma:subjEXP} because     $\mathtt{N}_k$ is closed.
   Thus we conclude by  Lemma~\ref{compRephrased}.
\end{itemize}
\item Rule $(P_2)$. Let $\mathtt{M}=\mathtt{P}\mathtt{Q}$ and $\kappa=\Pi\mathtt{z}_1^{\tau_1}. ... \Pi\mathtt{z}_m^{\tau_m}. \gamma$ where
  $\gamma$ is a ground type.  
  Let $\sigma$ be the type such that  $Comp(B,\mathtt{P},\Pi\mathtt{y}^{\sigma}.\kappa)$ and  $Comp(B,\mathtt{Q},{\sigma})$ hold by induction.
Thus, if
  $B=\{{\tt x}_1:\nu_1,...,{\tt x}_n:\nu_n\}$ then, we have
  $Comp(\emptyset,{\tt P} \vv{[\mathtt{N}/\mathtt{x}]_1^{n}},(\Pi\mathtt{y}^{\sigma}.\kappa) \vv{[\mathtt{N}/\mathtt{x}]_1^{n}} )$
  for all ${\tt N}_j$  s.t. $Comp(\emptyset,{\tt N}_j,\nu_j \vv{[\mathtt{N}/\mathtt{x}]_1^{j-1}} )$ where $j\leq n$.
  But $Comp(\emptyset,{\tt Q} \vv{[\mathtt{N}/\mathtt{x}]_1^{n}},\sigma \vv{[\mathtt{N}/\mathtt{x}]_1^{n}} )$ holds too.
Thus, by Definition~\ref{compDefinition}, we have
 $Comp(\emptyset ,(\mathtt{P}\vv{[\mathtt{N}/\mathtt{x}]_1^{n}}) \mathtt{Q}\vv{[\mathtt{N}/\mathtt{x}]_1^{n}},(\kappa\vv{[\mathtt{N}/\mathtt{x}]_1^{n}})[\mathtt{Q}\vv{[\mathtt{N}/\mathtt{x}]_1^{n}}/\mathtt{y}])$ that, in it is turn, implies
 $ Comp(\emptyset ,(\mathtt{P}\mathtt{Q})\vv{[\mathtt{N}/\mathtt{x}]_1^{n}}, (\kappa [\mathtt{Q}/\mathtt{y}])\vv{[\mathtt{N}/\mathtt{x}]_1^{n}})$.
\item Rule $(P_3)$. By  Lemma~\ref{compRephrased}, we have to prove that $B=\{{\tt x}_1:\nu_1,...,{\tt x}_n:\nu_n\}$ implies
  $\vdash (\Pred \vv{[\mathtt{N}/\mathtt{x}]_1^{n}}) \mathtt{P} : \Nat$,
  for all ${\tt N}_j$  such that $Comp(\emptyset,{\tt N}_j,\nu_j\vv{[\mathtt{N}/\mathtt{x}]_1^{j-1}})$ where $j\leq n$,
  and  ${\tt P}$  such that $Comp(\emptyset,{\tt P},\Nat )$. Namely, $\vdash \Pred\, \mathtt{P} : \Nat$ has to hold whenever $\vdash {\tt P}:\Nat$ holds.
  This is true by rules $(P_2)$ and $(P_3)$. 
\item Rules $(P_4),(P_5),(B_1),(B_2)$ are similar to the previous case.
\item Rule $(P'_5)$. Let ${\tt M} = \ifz$ and $\kappa=\Pi\mathtt{z}_1^{\Nat}.\Pi\mathtt{z}_2^{\CIRC{E}}.\Pi\mathtt{z}_3^{\CIRC{E}}.\CIRC{E} $.
  By  Lemma~\ref{compRephrased},
  we have to prove that  $B=\{{\tt x}_1:\nu_1,...,{\tt x}_n:\nu_n\}$ implies  
  $\vdash (\!\!\ifz\vv{[\mathtt{N}/\mathtt{x}]_1^{n}})\mathtt{P}_1 \mathtt{P}_2 \mathtt{P}_3: \CIRC{\mathtt{E}\vv{[\mathtt{N}/\mathtt{x}]_1^{n}}\vv{[\mathtt{P}/\mathtt{z}]_1^{3}}  }$
  and  $Comp(\emptyset,{\tt E} \vv{[\mathtt{N}/\mathtt{x}]_1^{n}}  \vv{[\mathtt{P}/\mathtt{z}]_1^{3}}, \Idx)$
  for   $Comp(\emptyset,{\tt P}_1,\Nat)$,\\
         $Comp(\emptyset,{\tt P}_2,\CIRC{E\vv{[\mathtt{N}/\mathtt{x}]_1^{n}}[ \mathtt{P}_1/\mathtt{z}_1] })$, 
         $Comp(\emptyset,{\tt P}_3,\CIRC{E\vv{[\mathtt{N}/\mathtt{x}]_1^{n}} \vv{[\mathtt{P}/\mathtt{z}]_1^{3}} })$ and
         for all ${\tt N}_j$  such that $Comp(\emptyset,{\tt N}_j,\nu_j \vv{[\mathtt{N}/\mathtt{x}]_1^{j-1}} )$ where $j\leq n$.
         The typing follows by rule $(P_2)$ and $(P'_5)$,
          since the comp-hypothesis implies that $\vdash {\tt P}_1:\Nat$, $\vdash{\tt P}_2:\CIRC{E\vv{[\mathtt{N}/\mathtt{x}]_1^{n}}[ \mathtt{P}_1/\mathtt{z}_1] }$
         and $\vdash {\tt P}_3:\CIRC{E\vv{[\mathtt{N}/\mathtt{x}]_1^{n}} [ \mathtt{P}_1/\mathtt{z}_1][ \mathtt{P}_2/\mathtt{z}_2] }$.
         Moreover, by induction on $B\vdash \texttt{E}:\Idx$ we have $Comp(B,  \texttt{E}, \Idx)$,
         therefore $Comp(\emptyset,{\tt E} \vv{[\mathtt{N}/\mathtt{x}]_1^{n}} [ \mathtt{P}_1/\mathtt{z}_1][ \mathtt{P}_2/\mathtt{z}_2][ \mathtt{P}_3/\mathtt{z}_3] , \Idx)$
         can be immediately concluded. 
\item Rule $(P_6)$. Let ${\tt M} = \TTY  $ and
         $\kappa=\Pi \mathtt{y}^{(\sigma\rightarrow\sigma)}.\sigma$ such that $\sigma=\Pi\mathtt{z}_1^{\tau_1}. ... \Pi\mathtt{z}_m^{\tau_m}. \gamma$ where 
         $\gamma \in\{\Nat, \CIRC{\mathtt{E}}\}$.
         \begin{itemize}
         \item  The proof of $\gamma=\Nat$   is similar to the proof of the rule $(P_3)$. By  Lemma~\ref{compRephrased},
           we have to prove that $B=\{{\tt x}_1:\nu_1,...,{\tt x}_n:\nu_n\}$ implies $\vdash (\TTY \vv{[\mathtt{N}/\mathtt{x}]_1^{n}}) \mathtt{Q} \mathtt{P}_1 \ldots \mathtt{P}_m  : \Nat$,
  for all ${\tt N}_j$  such that $Comp(\emptyset,{\tt N}_j,\nu_j\vv{[\mathtt{N}/\mathtt{x}]_1^{j-1}})$ where $j\leq n$,
  for ${\tt Q}$  such that $Comp(\emptyset,{\tt Q}, (\sigma\rightarrow\sigma)\vv{[\mathtt{N}/\mathtt{x}]_1^{n}} )$,
  for all ${\tt P}_i$  such that $Comp(\emptyset,{\tt P}_i,\tau_i\vv{[\mathtt{N}/\mathtt{x}]_1^{n}}[\mathtt{Q}/\mathtt{y}]\vv{[\mathtt{P}/\mathtt{z}]_1^{i-1}} )$ where $i\leq m$.
  Namely, we have to prove that $\vdash  \TTY  \mathtt{Q} \mathtt{P}_1 \ldots \mathtt{P}_m : \Nat$ whenever $\vdash {\tt Q}:(\sigma\rightarrow\sigma)\vv{[\mathtt{N}/\mathtt{x}]_1^{n}}$
  and $\vdash {\tt P}_i:\tau_i\vv{[\mathtt{N}/\mathtt{x}]_1^{n}}[\mathtt{Q}/\mathtt{y}]\vv{[\mathtt{P}/\mathtt{z}]_1^{i-1}} $.
  This is true  by rules $(P_2)$ and $(P_6)$. 
\item The proof of $\gamma=\CIRC{\mathtt{E}}$  is similar to the proof of the rule $(P'_5)$.
    By  Lemma~\ref{compRephrased},
  we have to prove that  $B=\{{\tt x}_1:\nu_1,...,{\tt x}_n:\nu_n\}$ implies  
  $$\vdash (\TTY\vv{[\mathtt{N}/\mathtt{x}]_1^{n}})  \mathtt{Q} \mathtt{P}_1 \ldots \mathtt{P}_m
   : \CIRC{\mathtt{E}\vv{[\mathtt{N}/\mathtt{x}]_1^{n}} [\mathtt{Q}/\mathtt{y}]\vv{[\mathtt{P}/\mathtt{z}]_1^{m}}  }  \qquad \text{and}$$
   $Comp(\emptyset,{\tt E} \vv{[\mathtt{N}/\mathtt{x}]_1^{n}} [\mathtt{Q}/\mathtt{y}]\vv{[\mathtt{P}/\mathtt{z}]_1^{m}}, \Idx)$
 for all ${\tt N}_j$  s.t. $Comp(\emptyset,{\tt N}_j,\nu_j\vv{[\mathtt{N}/\mathtt{x}]_1^{j-1}})$ where $j\leq n$,
  for ${\tt Q}$  such that $Comp(\emptyset,{\tt Q}, (\sigma\rightarrow\sigma)\vv{[\mathtt{N}/\mathtt{x}]_1^{n}} )$,
  for all ${\tt P}_i$  such that $Comp(\emptyset,{\tt P}_i,\tau_i\vv{[\mathtt{N}/\mathtt{x}]_1^{n}}[\mathtt{Q}/\mathtt{y}]\vv{[\mathtt{P}/\mathtt{z}]_1^{i-1}} )$ where $i\leq m$.
The typing requirement is similar to that of the previous case.
Moreover, it is easy to check that $\wfi{B,\codom(B)\cup\{\sigma  \}}$ requires that $B\vdash \texttt{E}:\Idx$ is in the premises of $(P_6)$. 
Thus $Comp(B,  \texttt{E}, \Idx)$ follows by hypothesis,
and  $Comp(\emptyset,{\tt E}  \vv{[\mathtt{N}/\mathtt{x}]_1^{n}} [\mathtt{Q}/\mathtt{y}]\vv{[\mathtt{P}/\mathtt{z}]_1^{m}}, \Idx)$
         can be concluded.  
              \end{itemize}
\item Rule $(I_0)$. The proof follows by induction, it suffices to apply $(I_0)$ to obtain the typing in the induction hypothesis.
\item Rule $(I_1)$. Immediate by the evaluation rule $(n)$.
\item Rule $(I_2)$. The proof follows by induction, because we assume that $ \odot $ is a (generic) total operator.     
\item Rule $(I_3)$. The proof follows by induction and by    the evaluation rule $(sz)$.      
\item Rule $(C_1)$. Immediately $B\vdash {\tt U}:\CIRC{\num{k}}$, $B\vdash\num{k}:\Idx$ and $\num{k}\ev^1\num{k}$ hold, thus  $Comp(B, {\tt U},\CIRC{\num{k}})$.
\item Rules $(C_2),(C_3),(C_4),(C_5)$. The proofs are similar to that  of the rule $(P'_5)$.
\item Rule $(M)$. The proof is similar to that  of the rule $(P_5)$.\qed
 \end{itemize}
 \end{proof}

 $Comp$ has been defined in order to obtain the next corollary that states the strong normalization of closed term typed $\Nat$
 and the that its evaluation provides a unique result.

 \begin{corollary}[$\Idx$-normalization]\label{idxNormalization}
   Let $\vdash \mathtt{E} : \Idx$.
   \begin{enumerate}
   \item ${\tt E}\ev^1 \num{n}$, for some $\num{n}$.
     \item If $\CIRC{\mathtt{E}}$ occurs in a valid type derivation then ${\tt E}\ev^1 \num{n}$, for some $\num{n}$.
   \end{enumerate}
 \end{corollary}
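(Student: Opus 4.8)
The plan is to read off both items directly from the computability theorem (Theorem~\ref{thComp}), since the predicate $Comp$ was engineered precisely so that its $\Idx$-case \emph{is} a termination statement. No fresh induction is required at the level of the corollary; all the real work has already been absorbed into the proof of Theorem~\ref{thComp}, and what remains is an unfolding of Definition~\ref{compDefinition}.

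For the first item I would instantiate Theorem~\ref{thComp} with $B=\emptyset$ and $\kappa=\Idx$: from $\vdash \mathtt{E}:\Idx$ we obtain $Comp(\emptyset,\mathtt{E},\Idx)$, and clause~2 of Definition~\ref{compDefinition} is verbatim the assertion that ${\tt E}\ev^1 \num{n}$ for some $\num{n}$. That closes item~1.

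For the second item the first step is to convert the \emph{occurrence} of $\CIRC{\mathtt{E}}$ inside a valid derivation into an honest typing of $\mathtt{E}$. This is exactly Lemma~\ref{idxProperties}(2): if $\CIRC{\mathtt{E}}$ appears anywhere in a derivation with root base $B$, then $B\vdash \mathtt{E}:\Idx$. When the derivation is that of a program (empty root base) this already gives $\vdash \mathtt{E}:\Idx$, so item~1 applies immediately. In general I would feed $B\vdash \mathtt{E}:\Idx$ to Theorem~\ref{thComp} to get $Comp(B,\mathtt{E},\Idx)$ and then unfold the definition: clause~5 of Definition~\ref{compDefinition} propagates computability through every closing substitution by computable terms, and clause~2 forces each such closed instance to evaluate to a numeral with probability $1$. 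The closed case of item~1 is simply the degenerate instance where no substitution is needed.

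The main obstacle — and the reason a direct argument fails — is the circularity exposed in Example~\ref{exampleDependent}: the index $\mathtt{E}$ may hereditarily contain subterms that are themselves typed as circuits and are possibly non-terminating (e.g.\ $\size(\mathtt{M})$ with $\mathtt{M}$ divergent). Consequently one cannot establish termination of $\mathtt{E}$ by structural induction on $\mathtt{E}$, nor by induction on its type, nor by an argument on its size. This is exactly why the proof is routed through $Comp$, whose clause~3 deliberately discards the circuit-typed content and demands only that the \emph{arity} index be computable; the entire difficulty is thereby concentrated in Theorem~\ref{thComp}, leaving the corollary as a one-line consequence of that theorem together with Lemma~\ref{idxProperties}(2).
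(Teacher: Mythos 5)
Your proof is correct and follows essentially the same route as the paper's: item~1 is exactly Theorem~\ref{thComp} instantiated at the empty base together with clause~2 of Definition~\ref{compDefinition}, and item~2 reduces the occurrence of $\CIRC{\mathtt{E}}$ in a valid derivation to a typing $B\vdash \mathtt{E}:\Idx$ and then concludes via item~1. The only cosmetic difference is that you cite Lemma~\ref{idxProperties}(2) for that reduction step, whereas the paper inlines the same induction on the derivation; the extra remarks on open bases via clause~5 are harmless since the corollary's hypothesis $\vdash \mathtt{E}:\Idx$ makes $\mathtt{E}$ closed.
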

 \begin{proof}
   \begin{enumerate}
   \item  $\vdash \mathtt{E} : \Idx$ implies $Comp(\emptyset, {\tt E},\Idx)$  by Theorem~\ref{thComp}; so  ${\tt E}\ev^1 \num{n}$ by Definition \ref{compDefinition}.
   \item Let $\mathcal{D}$ be a valid type derivation. By induction on $\mathcal{D}$ we can prove that if $\CIRC{\mathtt{E}'}$ occurs in the conclusion of the derivation
     then  $B\vdash \mathtt{E}' : \Idx$ is required for some $B$. Thus, since we assumed  $\vdash \mathtt{E} : \Idx$, we conclude by  the previous point.
     \qed
 \end{enumerate}

 \end{proof}

We can now focus on  standard programming properties (see \cite{pierce2002mit}).
A first property of a paradigmatic programming language as $\qpcf$ is \emph{preservation}, i.e. if a well-typed term takes a step of evaluation then the resulting term is also well typed. 
A second property expected for a programming language is \emph{progress}: well-typed terms evaluation does not get stuck.
Roughly,  a term $\tt P$ gets stuck whenever the evaluation of  $\tt P$  ends in a normal form, which is not a natural number.

\begin{corollary}\label{idxPreservationProgress}
  \begin{description}
  \item[(Preservation)] $\vdash \mathtt{M} : \Idx$ and $ \mathtt{M}\evaluates^1 \mathtt{N}$ then $ \vdash \mathtt{N}: \Idx$.
\item[(Progress)] $\vdash \mathtt{M} : \Idx$ and $ \mathtt{M}\evaluates^1 \mathtt{N}$ then $\mathtt{N}$ is a numeral $\num{n}$.
  \end{description}
\end{corollary}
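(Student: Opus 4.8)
The plan is to obtain both claims from the big-step subject-reduction property together with an inversion analysis of the closed values that can carry type $\Idx$. Recall that the subject-reduction property noted after Lemma~\ref{substLemma} reads: if $B\vdash\mathtt{M}:\sigma$ and $\mathtt{M}\ev^{\PRB{\alpha}}\mathtt{V}$, then $B\vdash\mathtt{V}:\sigma$; it is proved by induction on the evaluation derivation, invoking the Substitution Lemma for the rules $(\beta)$ and $(Y)$ and plain inheritance of the type in the remaining rules. Observe also that every value $\mathtt{V}$ occurring on the right of $\ev$ in Table~\ref{evaluationRules} is either a numeral $\num{n}$ or a circuit string $\mathtt{C}$ assembled from gate-names, $\appCirc$ and $\parallel$; hence whenever $\mathtt{M}\evaluates^1\mathtt{N}$ the term $\mathtt{N}$ is necessarily of one of these two shapes.

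For \textbf{Preservation} I would simply instantiate subject reduction at $\sigma=\Idx$: from $\vdash\mathtt{M}:\Idx$ and $\mathtt{M}\evaluates^1\mathtt{N}$ it yields $\vdash\mathtt{N}:\Idx$ at once. Existence of such an $\mathtt{N}$ is not even in question, since Corollary~\ref{idxNormalization} already ensures that a closed $\Idx$-term evaluates to a numeral with probability $1$.

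For \textbf{Progress} I would combine the previous point with an inversion step. By the observation above, $\mathtt{N}$ is either a numeral or a circuit string, and by Preservation $\vdash\mathtt{N}:\Idx$. It therefore suffices to exclude the circuit case: the only typing rules whose subject is a gate-name, an $\appCirc$-composition, or a $\parallel$-composition are $(C_1)$, $(C_2)$, $(C_3)$, and each of them concludes with a type of the form $\CIRC{\mathtt{E}}$. Since $\CIRC{\cdot}$ and $\Idx$ are distinct ground constructors, a Generation argument shows that no circuit string is derivably of type $\Idx$, forcing $\mathtt{N}=\num{n}$.

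The delicate point is precisely this last inversion. One must check, rule by rule, that no sequence of typing steps can assign $\Idx$ to a circuit value; the only mechanism that changes a ground type is rule $(I_0)$, which coerces $\Idx$ to $\Nat$ but never the reverse, so it can neither turn a circuit into an index nor smuggle a $\CIRC{\cdot}$ value into the $\Idx$ class. Once this asymmetry of the coercion is made explicit, the corollary closes immediately.
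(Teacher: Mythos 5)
Your proposal is correct, but it follows a genuinely different route from the paper. The paper's own proof is a one-liner: it instantiates Theorem~\ref{thComp} at the type $\Idx$, where the computability predicate $Comp(\emptyset,\mathtt{M},\Idx)$ by definition says that $\mathtt{M}\ev^{\PRB{1}}\num{n}$ for some numeral $\num{n}$; since every derivation carries a strictly positive probability and the probabilities of all derivations for $\mathtt{M}$ sum to at most $1$, a probability-$1$ derivation is the \emph{unique} derivation, so any $\mathtt{N}$ with $\mathtt{M}\evaluates^1\mathtt{N}$ is that numeral, and rule $(I_1)$ then gives $\vdash\mathtt{N}:\Idx$ --- both items at once. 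You instead use the classical preservation-then-progress decomposition: preservation from the general big-step subject-reduction property, and progress from the shape analysis of big-step results (numerals or circuit strings) plus the inversion argument that no circuit string can be assigned $\Idx$, since $(C_1)$--$(C_3)$ only produce $\CIRC{\texttt{E}}$ types and the sole coercion $(I_0)$ goes from $\Idx$ to $\Nat$, never back. Your inversion is sound (the congruence $\simeq$ on types never identifies $\Idx$ with $\CIRC{\texttt{E}}$), and your route has two advantages: it works for derivations of arbitrary probability $\PRB{\alpha}$, not just $\PRB{1}$, and it cleanly separates what needs the heavy computability machinery (only termination, i.e.\ Corollary~\ref{idxNormalization}) from what is purely syntactic. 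Its cost is that it leans on the general subject-reduction statement, which the paper only asserts in passing after Lemma~\ref{substLemma} (``follows easily'') and never proves rule by rule; to make your argument self-contained you would have to check that induction over all rules of Table~\ref{evaluationRules} --- in particular $(\mathrm{sz})$, $(\mathrm{op})$ and $(\mathrm{m})$, whose results are numerals retyped via $(I_1)$/$(I_0)$ --- which is routine but not free, whereas the paper's route uses only the already fully proven Theorem~\ref{thComp}.
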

\begin{proof}
  Immediate, by Theorem \ref{thComp}. \qed
\end{proof}

For remaining ground types we prove preservation and progress together. 

\begin{theorem}
  \label{th:preservation}
  \begin{enumerate}
\item  $\vdash \mathtt{M} : \Nat$ and $ \mathtt{M}\evaluates \mathtt{N}$   then  $ \vdash \mathtt{N}: \Nat$ and $\mathtt{N}$ is a numeral.
\item If $\vdash \texttt{M}:\CIRC{\mathtt{E}}$ and $\mathtt{M} \ev  \mathtt{N}$ then
  $\vdash \texttt{N}:\CIRC{\mathtt{E}}$ where $\mathtt{N}$ is a circuit; moreover,  $\mathtt{E}\ev\num{m}$ for some $\num{m}$.
  \end{enumerate}
\end{theorem}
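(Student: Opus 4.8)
The plan is to prove both items by induction on the derivation of the evaluation $\mathtt{M}\ev^{\PRB{\alpha}}\mathtt{N}$, in the same style as Lemma~\ref{lemma:circExp}, using the Generation Lemmas to read off the possible shape of $\mathtt{M}$ from its type and the Substitution Lemma~\ref{substLemma} (hence subject reduction) to propagate the type across the purely reductive rules. The hypothesis $\mathtt{M}\ev\mathtt{N}$ supplies the derivation to induct on; note that for $\Nat$ we cannot and do not appeal to termination (a $\Nat$-typed term may diverge), so Theorem~\ref{thComp} is not needed here, only preservation and a case analysis of the last rule.

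For item~1, I would fix $\vdash\mathtt{M}:\Nat$ and a derivation of $\mathtt{M}\ev\mathtt{N}$, and argue by cases on its last rule. The circuit-producing rules $(u),(u'),(u''),(r_0),(r_1),(r_2),(it)$ are impossible, since by generation a term of type $\Nat$ cannot be headed by a gate, by $\typecolon$, by $\parallel$, by $\revCirc$ or by $\iterCirc$ (these are typed only by $(C_1)$--$(C_5)$, giving circuit types, and there is no coercion from $\CIRC{\cdot}$ to $\Nat$). The value rule $(n)$ returns a numeral typed $\Nat$ directly; the computation rules $(s),(p),(sz),(op),(gt),(st),(m)$ all return numerals, which are typable by $\Nat$, via $(I_0)$ in the cases where the rule naturally assigns $\Idx$ (as for $(op)$ and $(sz)$). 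The recursive rules $(\beta),(\mathrm{if}_l),(\mathrm{if}_r),(Y)$ close under the induction hypothesis: the premise term again has type $\Nat$ by subject reduction (Lemma~\ref{substLemma} for $(\beta)$ and $(Y)$, the typing of $\ifz$ for the two conditionals), so the induction hypothesis applies to the sub-derivation and yields that $\mathtt{N}$ is a numeral of type $\Nat$.

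For item~2, the ``moreover'' is immediate: from $\vdash\mathtt{M}:\CIRC{\mathtt{E}}$, Lemma~\ref{idxProperties} gives $\vdash\mathtt{E}:\Idx$, and $\Idx$-normalization (Corollary~\ref{idxNormalization}) then yields $\mathtt{E}\ev^1\num{m}$ for a unique $\num{m}$. The statement $\vdash\mathtt{N}:\CIRC{\mathtt{E}}$ with $\mathtt{N}$ a circuit is obtained by the same induction as Lemma~\ref{lemma:circExp}(1), now carrying the symbolic index $\mathtt{E}$ rather than a numeral. The numeral-producing rules are excluded by typing; the $\PCF$ rules $(\beta),(\mathrm{if}_l),(\mathrm{if}_r),(Y)$ preserve the type $\CIRC{\mathtt{E}}$ by subject reduction (using rule $(P'_5)$ for the conditionals) and close by the induction hypothesis; and the circuit-formation rules are handled directly: $(u)$ returns a gate of the required type, $(r_0)$ uses that $\ddagger$ is arity-preserving, while $(u'),(u''),(r_1),(r_2),(it)$ rebuild the value circuit from the inductively typed sub-circuits and match it against the index recorded in $\mathtt{E}$ up to the congruence $\simeq$.

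The hard part is exactly this final index-matching step in the parallel and iteration cases. For $(u'')$ the value is $\mathtt{C}_1\parallel\mathtt{C}_0$ against the declared $\CIRC{\mathtt{E}_0+\mathtt{E}_1+1}$, so one needs commutativity of $+$ inside $\simeq$; $(r_2)$ requires the analogous reconciliation of a swap; and $(it)$ is the genuinely delicate one, since the value $\underbrace{\mathtt{C}_1\parallel\cdots\parallel\mathtt{C}_1}_{\num{n}}\parallel\mathtt{C}_0$, produced only after the index argument has evaluated to $\num{n}$, must be shown to carry the index $\mathtt{E}_0+(1+\mathtt{E}_1)*\mathtt{E}'$. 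This forces one to collapse the $n$-fold iterated parallel composition (whose arity is built by repeated application of $(C_3)$) into the product form, using the distributivity and neutral-element laws of $\simeq$ together with the $\Idx$-normalization of the iteration count; I expect this to be the only place where the bookkeeping is not purely mechanical. All remaining cases are routine once the Generation Lemmas and subject reduction are available.
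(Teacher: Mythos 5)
Your proposal is correct and takes essentially the same approach as the paper's proof: induction on the evaluation derivation with a case analysis on the last rule, excluding the circuit-forming (resp.\ numeral-forming) rules by typing, closing the $\PCF$ rules via the Substitution Lemma, treating the circuit cases as in Lemma~\ref{lemma:circExp}, and deriving $\mathtt{E}\ev\num{m}$ from the $\Idx$-normalization results that follow from Theorem~\ref{thComp}. The remaining differences are cosmetic --- the paper invokes Theorem~\ref{thComp} directly for the $(\mathrm{sz})$ and $(\mathrm{op})$ cases and cites Corollary~\ref{idxPreservationProgress} for the index claim, whereas you avoid Theorem~\ref{thComp} in item~1 (legitimately, since those rules return numerals, which are trivially typed $\Nat$) and make explicit the $\simeq$-bookkeeping for $(\mathrm{u}'')$, $(\mathrm{r}_2)$ and $(\mathrm{it})$ that the paper leaves implicit in ``similar to Lemma~\ref{lemma:circExp}''.
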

\begin{proof}
  We recall that $\mathtt{M} \ev  \mathtt{N}$ means that there exists a derivation $\mathcal{D}$ concluding $\mathtt{M} \ev^{\PRB{\alpha}}  \mathtt{V}$ where $\alpha >1$.
  The statements are proved by  induction on the derivation proving $\mathtt{M}\evaluates \mathtt{N}$ by considering the typing hypothesis.
  \begin{enumerate}
  \item
    The proof is done by  induction on the derivation proving $\mathtt{M}\evaluates \mathtt{N}$.
$(n)$ is trivial. If the last applied evaluation rule is one between  $(s)$, $(p)$,
 $(\beta)$, $(\mathrm{if}_l)$, $(\mathrm{if}_r)$, $(Y)$, $(gt)$, $(st)$ then the proof follows by induction.
 We remark that the last rules can be $(sz)$ or $(op)$  because its result can be typed $\Nat$ via the typing rule $(x_3)$:
luckily, in both cases the proof is still immediate by Theorem \ref{thComp}. The cases $(u)$, $(u')$, $(u'')$, $(r_0)$, $(r_1)$,  $(r_2)$,  $(it)$  are not possible,
 because the typing hypothesis in the statement excludes them. The case $(m)$ easily follows by induction and the definition of $\text{circuitEval}$.
\item  $\mathtt{E}\ev\num{m}$ follows by Corollary~\ref{idxPreservationProgress}. The proof is similar to that of Lemma~\ref{lemma:circExp}.  
  \qed
 \end{enumerate}
\end{proof}


\section{Examples}\label{sec:examples}

In this section we propose some examples of  quantum circuit families implementing interesting algorithms.
As previously done, given a sequence $(b_1,\ldots,b_k)$ of bits, we write $\mathbf{n}(b_1,\ldots,b_k)$ to denote the numeral that represents it. 




\begin{example}[Deutsch-Jozsa]\label{ex:deutsch}
  We aim to program the Deutsch's algorithm~\cite{NieCh10} in $\qpcf$. This can be done by a term that  represents the entire (infinite) quantum circuit family.
  
The ``basic case'' of Deutsch's problem can be formulated as follows.
Given a block box ${B}_f$ implementing some  function $f : \{0,1\} \fleche \{0,1\}$, determine whether $f$ is constant or balanced. 
The classical computation to determine whether f is constant or balanced is very simple: one computes $f(0)$ and $f(1)$, and then check if $f(0) = f(1)$. This requires two different calls to ${B}_f$.
Deutsch showed how to achieve this result with a single call to ${B}_f$:  Deutsch's algorithm exploits \emph{quantum parallelism} phenomenon
\commento{. 
Over-simplifying, quantum parallelism allows to a quantum computer }{ that allows }to evaluate a function $f(x)$ for different values $x$ at the same time.
\commento{ and 
it is able to extract  information about more than one of the values $f(x)$ from a superposition state.
Each function $f:\{0,1\}\redto\{0,1\}$ can be represented by a circuit $\ket{x, y\oplus f(x)}$, where the operator $\oplus$ represents the addition modulo 2. The transformation $\ket{x, y}\redto\ket{x, y\oplus f(x)}$ is unitary. 
In particular, taking $x=\frac{\ket{0}+\ket{1}}{\sqrt{2}}$ and $y=0$ as inputs, the output state will be $\frac{\ket{0, f(0)}+\ket{1, f(1)}}{\sqrt{2}}$, which  provides information about $f(0)$ and $f(1)$ simultaneously.}{
}

The problem can be generalized considering a function $f : \{0,1\}^{n} \fleche \{0,1\}$ which acts on many input bits. This yields  the n-bit generalization of Deutsch's algorithm, known as the Deutsch-Josza algorithm. 
The following picture represents the circuit, up to the last, measurement phase{}.

     \begin{center}
       \scalebox{0.50}
       {
         \ifx\pdfoutput\undefined 
         \epsfbox{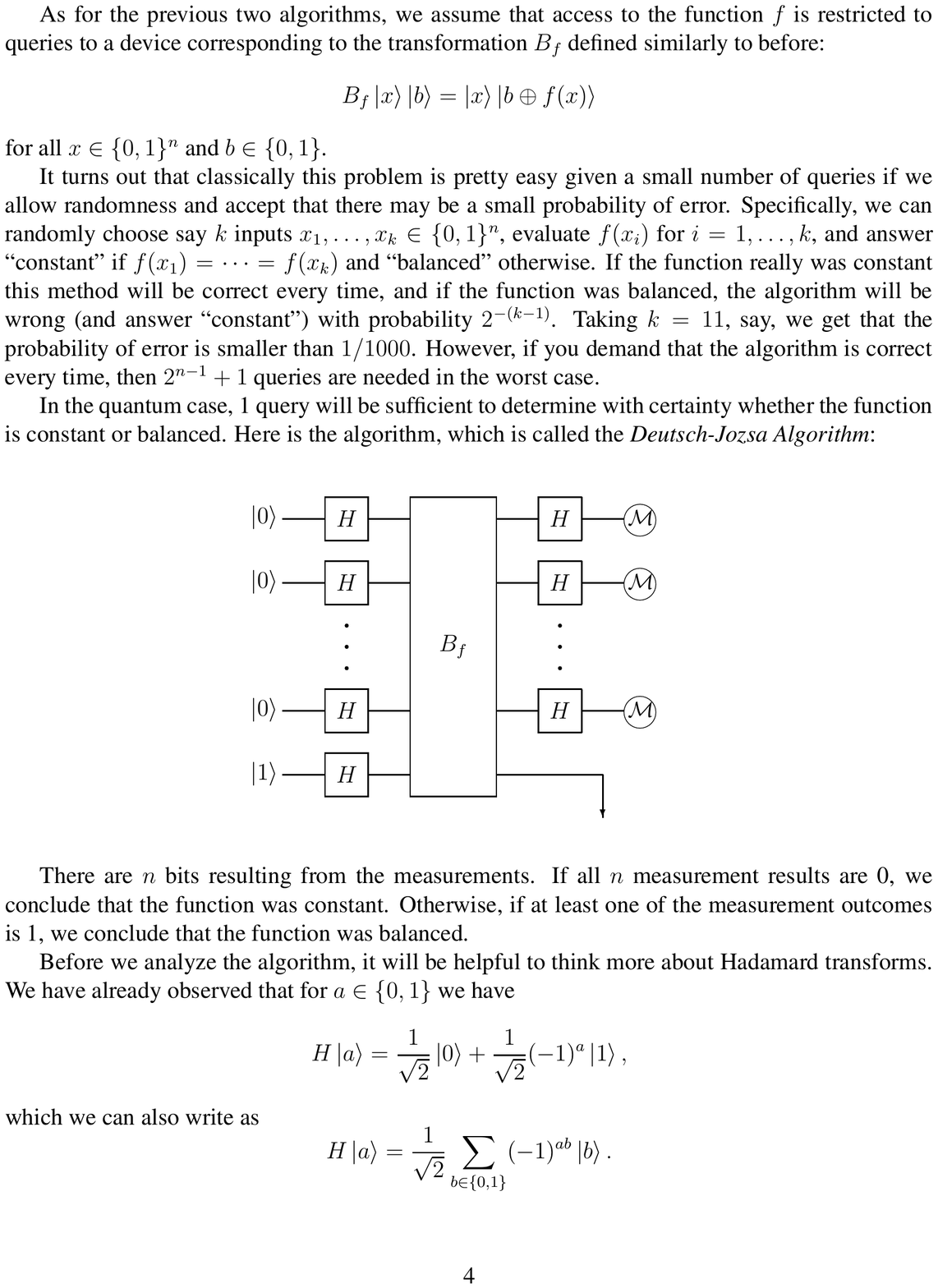} \else 
         \includegraphics{DJmeno} \fi
       }
     \end{center}
When fed with a classical input state of the form $\ket{0\ldots 01}$, the output measurement of the first $n-1$ bits reveals if the function $f$ is constant or not. If all $n-1$ measurement results are 0, we can conclude that the function was constant. Otherwise, if at least one of the measurement outcomes is 1, we conclude that the function was balanced. See~\cite{NieCh00} for { more details.}

Let $\mathtt{H}:\CIRC{\num{0}}$ and $\mathtt{I}:\CIRC{\num{0}}$  be the (unary) Hadamard and Identity gates respectively.
Suppose $\mathsf{M}^{B_{f}}:\CIRC{\num{n}}$  is given for some $n$ such that $\mathsf{M}^{B_{f}} \ev \mathsf{U}_{f}$ where $\mathsf{U}_{f}:\CIRC{\num{n}}$ is the $\qpcf$-circuit that represents 
the the black-box function $f$ having arity  $n+1$. 

Observe that
$\lambda \mathtt{x}^{\Idx}.\iterCirc\;  \mathtt{x}  \mathtt{H}  \mathtt{H}: \Pi \mathtt{x}^{\Idx}.\CIRC{\mathtt{x}}$  generates $x+1$ parallel copies of Hadamard gates $ \mathtt{H}$, and 
$\lambda  \mathtt{x}^{\Idx}.\iterCirc\; \mathtt{x}  \mathtt{I}  \mathtt{H} :\Pi  \mathtt{x}^{\Idx}.\CIRC{ \mathtt{x}}$
concatenates in parallel $x$ copies  of  Hadamard gates $ \mathtt{H}$ and one copy of the  identity gate $\mathtt{I}$.
Thus the parametric measurement-free Deutsch-Jozsa circuit  
can be  defined as
$$\mathsf{DJ^{-}}=\lambda \mathtt{x}^{\Idx}.\lambda \mathtt{y}^{\CIRC{\mathtt{x}}}.
  ((\iterCirc\, \mathtt{x}\, \mathtt{H}\, \mathtt{H})\appCirc \mathtt{y}) \appCirc(\iterCirc\,\mathtt{x}\, \mathtt{I}\, \mathtt{H} ):\sigma$$ 
where
 $\sigma=\Pi\mathtt{ x}^\Idx.\CIRC{\mathtt{x}}\fleche\CIRC{\mathtt{x}}$.

The last phase is performed by the operator  $\dMeas$, suitably fed with the representation of the classical input state, i.e. $\mathbf{n}(\underbrace{0\ldots0}_{n}1)$.
\commento{The evaluation yields derivations ending in
 $\dMeas(\MZ{}{\mathbf{n}(\underbrace{0\ldots0}_{n}1)},\mathsf{DJ^{-}}\num{n}\mathsf{M}^{B_{f}})\ev^{\PRB{\alpha}} \num{m}_\alpha$
 if and only if $\num{m}_\alpha$ is a  result (with probability $\alpha$).
}{
The evaluation yields
 $\dMeas(\mathbf{n}(\underbrace{0\ldots0}_{n}1),\mathsf{DJ^{-}}\num{n}\mathsf{M}^{B_{f}})\ev^{\PRB{1}} \num{m}$
 where $\num{m}$ is the result (with probability $1$).
 }

It is straightforward to make parametric the above term.
It suffices to replace $\num{n}$ with the  variable ${\tt n}^\Idx$, to replace the black-box with a variable $ {\tt b}^{\CIRC{\mathtt{n}}} $
so that, the resulting term is typed  $\Pi{\tt n}^\Idx.\Pi{\tt b}^{\CIRC{\mathtt{n}}}. \Nat $, or more simply $\Pi{\tt n}^\Idx.\CIRC{\mathtt{n}}\fleche \Nat $.
\qed
\end{example}

\begin{example}[Grover's searching algorithm]\label{ex:grover}
We provide the \qpcf\  encoding of the circuit that implements Grover's searching algorithm~\cite{NieCh00}.
More precisely, Grover's algorithm solves the problem of a search in a given a set $X = \{x_1,x_2,\ldots,x_{N}\}$ of $N=2^{n+1}$ elements.
Given a boolean function $f : X \rightarrow \{0,1\}$, the target is to find an element $x^*$ in $X$ such that $f(x^*) = 1$.
 With a classical circuits, one cannot do better than performing a linear number of queries to find the target element. 
Grover's quantum solves algorithm the  search   in $O(\sqrt{N})$.
The main idea of  Grover's searching algorithm is to make a fair superposition of input elements 
and, then, to iterate $O(\sqrt{N})$ time a subroutine applying
 an ``oracle gate'' $\textsf{O}^*$ that encodes the function $f$ and a suitable diffusion operator $\textsf{D}$. 
 After each application of the subroutine, it is possible to show that
{the probability to measure the target element $x^*$ increases: the amplitudes $\alpha^*$  goes up}
   by more than $\frac{1}{\sqrt{N}}$. 
This meas that, after $O(\sqrt{N})$ repetitions,  $\alpha^*$ is very close to $1$, and thus a final, total measurement, yields $x^*$ with a negligible error.

Some slightly different circuit implementations have been proposed in literature~\cite{KLM07,NO08}, we follow~\cite{NieCh00}:
\begin{align*}
 \Qcircuit @C=1em @R=.7em {
                   &         &                      &                         &                      & \ustick{\textsf{D}} \\
  \lstick{\ket{0}} & /^n \qw & \gate{H^{\otimes n}} & \multigate{1}{\textsf{O}^*} & \gate{H^{\otimes n}} & \gate{2 \ket{0^n}\bra{0^n} - I_n}         & \gate{H^{\otimes n}} & \qw & /^n\qw & \cdots &  \\ 
  \lstick{\ket{1}} & \qw     & \gate{H}             & \ghost{\textsf{O}^*}        & \qw                  & \qw                                       & \qw                & \qw  & \qw & \cdots & \\
                   &         &                      &                         &                      & \dstick{\text{ $O(\sqrt{N})$ times}}
  \gategroup{2}{5}{2}{7}{.7em}{^\}}
  \gategroup{2}{4}{3}{10}{.7em}{_\}}
 }
\end{align*}\\[5mm]

\noindent Let $\mathsf{O}^{*}$ be the  circuit that represents  the function mapping $\ket{x,q}$ to $\ket{x,q\oplus f(x)}$. 
Look at the figure above: the lower wire (initialized to $\ket{1}$) feeds $\mathsf{O}^{*}$ with the state $(\ket{0}-\ket{1})/\sqrt{2}$ 
while the other ones (initialized to  ${\underbrace{\ket{0\ldots 0}}_n)}$ feed $\mathsf{O}^{*}$ with the fair superposition of all possible inputs.
Therefore, $\mathsf{O}^{*}$ maps $\ket{x}\ket{\frac{\ket{0}-\ket{1}}{\sqrt{2}}}$ on $(-1)^{f(x)} \ket{x}\ket{\frac{\ket{0}-\ket{1}}{\sqrt{2}}}$;
if we neglect the (bottom) ancillary wire then $\ket{x}$ is mapped on $(-1)^{f(x)} \ket{x}$ producing a key flipping behavior.

Let $\mathsf{M}^{O^*}:\CIRC{\num{n}}$ be a \qpcf\ term such that $\mathsf{M}^{O^*} \ev^{\PRB{1}} \mathsf{O}^{*}$ {}{(we recall that $\CIRC{\num{n}}$ is the type for $n+1$ wires.) }
Let $\mathsf{M}^{D}:\CIRC{\mathtt{x}}$ be such that $(\lambda \mathtt{x}^\Idx.\mathsf{M}^{D})(\Pred\, \num{n}) \ev \mathsf{D}$ where $\mathsf{D}$ is the circuit that represents the diffusion operator 
that magnifies amplitudes:  $\textsf{D}$
 maps the superposition $\sum_{i=0}^{N-1}\alpha_iV_i$ in $\sum_{i=0}^{N-1}(2m -\alpha_i)V_i$ where $m=(\frac{1}{N}) \sum_{i=0}^{N-1}\alpha_i$ is the average of all amplitudes
(see \cite{shor02psam}). Some interesting decomposition of $D$ in terms of smaller quantum circuits can be found in \cite{diao2002,glos2016qip}.

Let ${\sqrt{\phantom{x}}}:\Nat\rightarrow\Nat$ be a $\qpcf$-term calculating (integer approximation of) the square root and 
let $\mathsf{M^{x}_{seq}}: \CIRC{\mathtt{ \mathtt{x}}}\rightarrow\Nat\rightarrow\CIRC{\mathtt{x}}$ be $\mathsf{M^{A}_{seq}}\,\mathtt{x}$, i.e.
the application of the term defined in Example \ref{ex:seqcirc} to  $\mathtt{x}$.
The term that implements the core of Grover's algorithm  is:
$$\mathsf{M_{G}}=\lambda \mathtt{x}^{\Idx}.\mathtt{z}^{\CIRC{\mathtt{x}}}.(\iterCirc\,\mathtt{x}\, \mathtt{H}\, \mathtt{H})
\appCirc(\mathsf{M^x_{seq}}(\mathtt{z}\appCirc (\mathsf{M}^{D}\parallel \texttt{I})) \, \sqrt{\mathtt{ x}}):\sigma$$
where $\sigma=\Pi\mathtt{ x}^\Idx.\CIRC{\mathtt{x}}\fleche \CIRC{\mathtt{x}}$.
A measurement of $\mathsf{M_{G}}$ applied to a suitable arity $\num{n}$ and a suitable oracle operator  $\mathsf{O}^{*}$
  allows  to execute  Grover's algorithm. 

 \medskip
 For example, 
 we can consider a search in a space of $2^{3}=8$ states looking for  $\ket{011}$.
 Grover's circuit take as input the (classical) state \ket{0001}. 
 
 It is  easy to verify that   $\mathsf{G}_3=\dMeas(\mathbf{n}(0001), \mathsf{M_{G}}(\num{3})(\mathsf{M}^{\mathsf{O}^{*}}))$ solves the search problem with a bounded error.
 In particular,  $\mathbf{D}(\mathsf{G}_3, \mathbf{n}(0110))$ is a set including just one derivation, namely the derivation concluding
 $\mathsf{G}_3\ev^{\PRB{\alpha}}\mathbf{n}(0110)$ such that $\alpha= 0,945$.
 Thus,  the evaluation of $\mathsf{G}_3$ gives the right results with a  94,5\% of probability.
 See~\cite{lecturenotesstrubell} for the details.
\qed
\end{example}

\section{Related Work}\label{sec:related}



In this section, we sketch the state of the art of quantum programming languages by focusing on calculi related to $\qpcf$.

\medskip

{After a first formal attempt by Maymin~\cite{May97},}  Selinger rigorously defined
a first-order quantum functional language~\cite{selinger2004mscs}.  Subsequently the author, in a joint work with Valiron~\cite{selinger2006mscs}, 
defined a quantum $\lambda$-calculus, (that we dub $\lambda_{sv}$ in what follows),  with classical control and explicitly  based on the QRAM architecture~\cite{Knill96}.
$\lambda_{sv}$ rests on unitary transformations on quantum states and an explicit measurement operator which allows the program to observe the value of one quantum bit.
The separation between data and control is explicit.
The type system of $\lambda_{sv}$  avoids run time errors, enjoys good properties such as subject reduction, progress, and {safety} and is based on the affine intuitionistic Linear Logic.
This permits a fine control over the linearity of the system, by distinguishing between duplicable and non-duplicable resources.
{$\lambda_{sv}$ can be seen as the departing point of several investigations.  On the foundational side, see}
for instance~\cite{dallago2009mscs,DLMZ10tcs}. 
  On the semantic side, we just cite \cite{selinger09chap,HH11,pagani14acm}.

{ $\qpcf$ follows the slogan quantum data\&classical control (``$qd\&cc$'') proposed for $\lambda_{sv}$ in~\cite{selinger2006mscs},
  albeit its typing system does not explicitly include linear/exponential  types.
}

\medskip
Looking for implementation-oriented proposal, the most interesting reality is Quipper, an embedded, scalable functional quantum programming language. Quipper is essentially a  circuit description language: circuits can be created, manipulated, evaluated  in a mixture of procedural and declarative programming styles. The most important quantum algorithms can be easily encoded thanks to a number of programming tools, macros,  and extensive libraries of quantum functions. 
   
Quipper is {} based on the lambda calculus with classical control proposed in~\cite{selinger2006mscs}, and this relationship has been partially explained in~\cite{Ross2015} by means of the calculus Proto-Quipper.  
Reduction rules are defined between configurations as in~\cite{selinger2006mscs,Zorzi16} and, since the calculus is measurement free, they are totally deterministic (likewise to~\cite{dallago2009mscs,DLMZ10tcs,Zorzi16}).
Proto-Quipper type system  
 is based on
intuitionistic linear logic (with both additive and multiplicative modalities) plus {a} type for circuits. {A more recent} version of Proto-Quipper, called Proto-QuipperM, has been defined by Selinger and Rios in~\cite{rios2017eptcs} {together with an interesting categorical semantics.}

{
  $\qpcf$ follows the trend started by Quipper about the management of quantum circuits as prominent classical data.
  However,  it differs from Proto-Quipper formalization, at least,  for the absence of quantum states and linear/exponential types,
but also for the presence of dependent types.}
\medskip 

{Another} recent quantum language is  \textsf{QWire} introduced in~\cite{qwire}. 
Also \textsf{QWire} can be seen as a language for circuit manipulation.
It rests on the QRAM model and
{} aims at separating classical and quantum part of the computation. 
\textsf{QWire} {} is a very simple and manageable linear language for the definition of quantum circuit {that,} through a sophisticate interface, 
can be treated as  a ``quantum plugin'' for an host classical language. This is reflected by the type system, inspired to  Benton's LNL Logic
that partitions the exponential data into a purely linear fragment and a purely non-linear fragment connected via a categorical adjunction.
{Circuits} are treated as classical data in the host language through a clever interface based on a box-unbox mechanism. 
Moreover, the authors show that \textsf{QWire}  is able to deal with a depend-type based host language via an interesting example. 
Albeit they have been developed independently, \textsf{QWire} and $\qpcf$ have a many aspects in common, and both move the focus from states to circuits. 
Thus, \textsf{QWire} deserves a direct comparison with $\qpcf$.\\
$\qpcf$ bans linear typing, while \textsf{QWire} use it to provide a very helpful support for a quantum circuit language description 
that borrows the best features from the ``hardware circuit description'' languages (Verilog, VHDL, ...).
A future extension of $\qpcf$ should include  a  circuit language inspired  to  \textsf{QWire}.\\
Second,  \textsf{QWire}
{is a plug-in quantum extension of a classic language, and its
}  dependent types {are} made available from the host language, while $\qpcf$ is a stand-alone language that includes a limited form of dependent types.
Finally, \textsf{QWire} allows partial measurements, thus it supports quantum states that mutually interact with the classical environment in the frame of the general QRAM model;
in $\qpcf$ we focused on {restricted} co-processors, as widely explained in Section \ref{sec:qram}.
{Summarizing,  \textsf{QWire} and $\qpcf$ shares many aspects,
  but { \textsf{QWire} provides more programming flexibility in the implementation of quantum algorithms than $\qpcf$, while $\qpcf$ is a standalone language with cheaper hardware requirements than  \textsf{QWire}. }

\medskip
{
  For the sake of completeness, we remark that other approches to quantum programming languages  exist in literature.
  Interesting proposal are, among the others: the functional quantum language  (based on strict Linear Logic) QML~\cite{AltGra05,AltGra05bis,Grattage11entcs};
  the  linear algebraic $\lambda$-calculi~\cite{ArrDow08,Vaux06,Vaux09,ArrighiDiazcaroLMCS12};
 the \emph{measurement calculus} ~\cite{Nie03,DKP07} developed as an efficient rewriting system for measurement based quantum computation;
and, last, the quantum-control quantum-data paradigm described in \cite{minsheng16}.
}


 \section{Conclusions}\label{sec:conclusions}\label{sec:discussion}

\subsection{On the expressive power of \qpcf}\label{sec:expressivepower}

\qpcf\ is a language able to describe parametric quantum circuit families, in accord with \cite{rios2017eptcs}, where the difference between the notions of \emph{parameter} (informally, the information available at the compile time, as e.g. the input dimension) and \emph{state} (the information available at run time, e.g. the effective value of quantum data) is highlighted. 
{
  Indeed,  Examples \ref{ex:deutsch} and \ref{ex:grover} show how to use dependent types to represent circuit families.
  The study of the exact expressive power of \qpcf, in terms of a formal notion of uniform quantum families is left to future work.
  Since the limitations we assumed on expressions of type $\Idx$, it is clear that many representations are unsuitable.
}

A related interesting point is the expressiveness of $\qpcf$ w.r.t. quantum algorithms. 
Since we restrict measurements to total, deferred ones, in  $\qpcf$ one {cannot directly represent algorithms that exploit partial measurement during the computation} (see, for example, Shor's original formulation of the factorization algorithm~\cite{Shor94}).

Of course, it is possible to encode their ``deferred versions'' exploiting the deferred measurement principle. See~\cite{NieCh10} (Section 4.4) for a careful account about the rewriting of a circuit that allows intermediate partial measurement into an equivalent deferred form.


Finally, we remark that by endowin  \qpcf\ with an \emph{universal basis} of quantum gates ensures the full expressivity w.r.t. quantum transformations, up to  a  definition of gate approximation (see \cite{NishOz09} for details).
{We also argue that particular choices of gates in \qpcf\ could return interesting instances of the language.
  For instance, all reversible circuits are a subset of quantum ones.
  See~\cite{PaoliniPiccoloRoversi-ENTCS2016,paolini2018ngc} for a recent characterization of the reversible computing.
  In particular,  $\qpcf$ appears to be a simple setting where reversible and classical computation coexists and, potentially, can cooperate.
}

\subsection{Conclusive Statement}

We study $\qpcf$, an extension of $\PCF$ for quantum circuit generation and evaluation. 
$\qpcf$ pursues seriously the $qd\&cc$ paradigm
  in a restricted QRAM environment where only total measurements are allowed. First, this
makes quantum programming easy: we can program  circuit descriptions by using only classical data.
{Second, this approach is cheaper than the usual on hardware requirements.}
 
In this work, we {explain} $\qpcf$ syntax, typing rules and a possible formulation of the evaluation semantics. We prove some basic properties of the language. 
We provide some encoding examples of parametric circuit families  that exploit the expressive power of \qpcf. 

The careful analysis of the exact expressive power of $\qpcf$ w.r.t. {formal notion of circuit families} is an open question we are currently  addressing (following~\cite{dallago2009mscs}, where a two-way correspondence between a formal calculus and the finitely generated quantum circuit families~\cite{NishOz09} is proved).
Finally, even if the use of total measurement does not represent a theoretical limitation, a partial measurement operator can represent a useful programming tool. Therefore, an interesting task will be to integrate in $\qpcf$ the possibility to perform partial measures of computation results.


\bibliography{biblio}	

\begin{thebibliography}{10}

\bibitem{AltGra05bis}
T.~Altenkirch and J.~Grattage.
\newblock A functional quantum programming language.
\newblock In {\em LICS05}, 2005.

\bibitem{AltGra05}
T.~Altenkirch, J.~Grattage, J.~K. Vizzotto, and A.~Sabry.
\newblock An algebra of pure quantum programming.
\newblock In {\em QPL05}, 2005.
\newblock ENTCS.

\bibitem{ArrighiDiazcaroLMCS12}
P.~Arrighi and A.~Diaz-Caro.
\newblock {A System F accounting for scalars}.
\newblock {\em {Logical Methods in Computer Science}}, {Volume 8, Issue 1},
  Feb. 2012.

\bibitem{ArrDow08}
P.~Arrighi and G.~Dowek.
\newblock Linear-algebraic lambda-calculus: higher-order, encodings, and
  confluence.
\newblock In A.~Voronkov, editor, {\em RTA}, volume 5117 of {\em Lecture Notes
  in Computer Science}, pages 17--31. Springer, 2008.

\bibitem{aspinall05mit}
D.~Aspinall and M.~Hofmann.
\newblock Dependent types.
\newblock In B.~Pierce, editor, {\em Advanced Topics in Types and Programming
  Languages}, chapter~2, pages 45--86. MIT Press, 2005.

\bibitem{Be80}
P.~Benioff.
\newblock The computer as a physical system: a microscopic quantum mechanical
  {H}amiltonian model of computers as represented by {T}uring machines.
\newblock {\em J. Statist. Phys.}, 22(5):563--591, 1980.

\bibitem{bettelli2003tepj}
S.~Bettelli, T.~Calarco, and L.~Serafini.
\newblock Toward an architecture for quantum programming.
\newblock {\em The European Physical Journal D - Atomic, Molecular, Optical and
  Plasma Physics}, 25(2):181--200, Aug 2003.

\bibitem{DLMZ10tcs}
U.~Dal~Lago, A.~Masini, and M.~Zorzi.
\newblock Quantum implicit computational complexity.
\newblock {\em Theor. Comput. Sci.}, 411(2):377--409, 2010.

\bibitem{DLMZ11entcs}
U.~Dal~Lago, A.~Masini, and M.~Zorzi.
\newblock Confluence results for a quantum lambda calculus with measurements.
\newblock {\em Electr. Notes Theor. Comput. Sci.}, 270(2):251--261, 2011.

\bibitem{dallago2009mscs}
U.~d. Dal~Lago, A.~Masini, and M.~Zorzi.
\newblock On a measurement-free quantum lambda calculus with classical control.
\newblock {\em Mathematical. Structures in Comp. Sci.}, 19(2):297--335, 2009.

\bibitem{danos2011iandc}
V.~Danos and T.~Ehrhard.
\newblock Probabilistic coherence spaces as a model of higher-order
  probabilistic computation.
\newblock {\em Information and Computation}, 209(6):966 -- 991, 2011.

\bibitem{DKP07}
V.~Danos, E.~Kashefi, and P.~Panangaden.
\newblock The measurement calculus.
\newblock {\em J. ACM}, 54(2), Apr. 2007.

\bibitem{Deu85}
D.~Deutsch.
\newblock Quantum theory, the {C}hurch-{T}uring principle and the universal
  quantum computer.
\newblock {\em Proceedings of the Royal Society of London Ser.~A},
  A400:97--117, 1985.

\bibitem{diCosmo2005lncs}
R.~Di~Cosmo and T.~Dufour.
\newblock The equational theory of
  $\langle\mathbb{N},0,1,+,\times,\uparrow\rangle$ is decidable, but not
  finitely axiomatisable.
\newblock In F.~Baader and A.~Voronkov, editors, {\em Logic for Programming,
  Artificial Intelligence, and Reasoning}, pages 240--256, Berlin, Heidelberg,
  2005. Springer Berlin Heidelberg.

\bibitem{diao2002}
Z.~Diao, M.~S. Zubairy, and G.~Chen.
\newblock A quantum circuit design for grover's algorithm.
\newblock {\em Zeitschrift f\"ur Naturforschung A}, 57(8):701--708, 2002.

\bibitem{diaz11}
A.~D{\'{\i}}az{-}Caro, P.~Arrighi, M.~Gadella, and J.~Grattage.
\newblock Measurements and confluence in quantum lambda calculi with explicit
  qubits.
\newblock {\em Electr. Notes Theor. Comput. Sci.}, 270(1):59--74, 2011.

\bibitem{diVincenzo2000}
D.~P. DiVincenzo.
\newblock The physical implementation of quantum computation.
\newblock {\em Fortschritte der Physik}, 48, 2000.

\bibitem{ehrhard2018jacm}
T.~Ehrhard, M.~Pagani, and C.~Tasson.
\newblock Full abstraction for probabilistic pcf.
\newblock {\em J. ACM}, 65(4):23:1--23:44, Apr. 2018.

\bibitem{escardo2009mfps}
M.~Escardó.
\newblock Semi-decidability of may, must and probabilistic testing in a
  higher-type setting.
\newblock {\em Electronic Notes in Theoretical Computer Science}, 249:219 --
  242, 2009.
\newblock Proceedings of the 25th Conference on Mathematical Foundations of
  Programming Semantics (MFPS 2009).

\bibitem{gaboardi2016}
M.~Gaboardi, L.~Paolini, and M.~Piccolo.
\newblock On the reification of semantic linearity.
\newblock {\em Mathematical Structures in Computer Science}, 26(5):829--867,
  2016.

\bibitem{glos2016qip}
A.~Glos and P.~Sadowski.
\newblock Constructive quantum scaling of unitary matrices.
\newblock {\em Quantum Information Processing}, 15(12):5145--5154, Dec 2016.

\bibitem{goubault2015jlam}
J.~Goubault-Larrecq.
\newblock Full abstraction for non-deterministic and probabilistic extensions
  of pcf i: The angelic cases.
\newblock {\em Journal of Logical and Algebraic Methods in Programming},
  84(1):155 -- 184, 2015.
\newblock Special Issue: The 23rd Nordic Workshop on Programming Theory (NWPT
  2011) Special Issue: Domains X, International workshop on Domain Theory and
  applications, Swansea, 5-7 September, 2011.

\bibitem{goubault2011ieee}
J.~Goubault-Larrecq and D.~Varacca.
\newblock Continuous random variables.
\newblock In {\em 2011 IEEE 26th Annual Symposium on Logic in Computer
  Science}, pages 97--106, June 2011.

\bibitem{Grattage11entcs}
J.~Grattage.
\newblock An overview of {QML} with a concrete implementation in haskell.
\newblock {\em Electr. Notes Theor. Comput. Sci.}, 270(1):165--174, 2011.

\bibitem{green2013acm}
A.~S. Green, P.~L. Lumsdaine, N.~J. Ross, P.~Selinger, and B.~Valiron.
\newblock Quipper: A scalable quantum programming language.
\newblock In {\em Proceedings of the 34th ACM SIGPLAN Conference on Programming
  Language Design and Implementation}, PLDI '13, pages 333--342, New York, NY,
  USA, 2013. ACM.

\bibitem{Gr99}
L.~K. Grover.
\newblock Quantum search on structured problems.
\newblock In {\em Quantum computing and quantum communications (Palm Springs,
  CA, 1998)}, volume 1509 of {\em Lecture Notes in Comput. Sci.}, pages
  126--139. Springer, Berlin, 1999.

\bibitem{HH11}
I.~Hasuo and N.~Hoshino.
\newblock Semantics of higher-order quantum computation via geometry of
  interaction.
\newblock In {\em LICS'11}, pages 237--246, 2011.

\bibitem{Ish95}
C.~J. Isham.
\newblock {\em Lectures on quantum theory}.
\newblock Imperial College Press, London, 1995.
\newblock Mathematical and structural foundations.

\bibitem{JUW11}
R.~Jain, Z.~Ji, S.~Upadhyay, and J.~Watrous.
\newblock {QIP} = {PSPACE}.
\newblock {\em J. {ACM}}, 58(6):30, 2011.

\bibitem{KLM07}
P.~Kaye, R.~Laflamme, and M.~Mosca.
\newblock {\em An introduction to quantum computing}.
\newblock Oxford University Press, Oxford, 2007.

\bibitem{Kitaev2002}
A.~Y. Kitaev, A.~H. Shen, and M.~N. Vyalyi.
\newblock {\em Classical and quantum computation}.
\newblock AMS, 2002.

\bibitem{Knill96}
E.~Knill.
\newblock Conventions for quantum pseudocode.
\newblock Technical report, Los Alamos National Laboratory, 1996.
\newblock Technical Report.

\bibitem{Felty16}
M.~Mahmoud and A.~P. Felty.
\newblock {Formalization of Metatheory of the Quipper Programming Language in a
  Linear Logic}.
\newblock University of Ottawa, Canada, 2018.

\bibitem{May97}
P.~Maymin.
\newblock The lambda-q calculus can efficiently simulate quantum computers.
\newblock Technical Report arXiv:quant-ph/9702057, arXiv, 1997.

\bibitem{metodi2011book}
T.~S. Metodi, A.~I. Faruque, and F.~T. Chong.
\newblock {\em Quantum Computing for Computer Architects, Second Edition}.
\newblock Morgan \& Claypool Publishers, 2nd edition, 2011.

\bibitem{miszczak2014book}
J.~A. Miszczak.
\newblock {\em High-level Structures for Quantum Computing}.
\newblock Morgan and Claypool Publishers, 1st edition, 2014.

\bibitem{NO08}
M.~Nakahara and T.~Ohmi.
\newblock {\em Quantum Computing - From Linear Algebra to Physical
  Realizations}.
\newblock CRC Press, 2008.

\bibitem{Nie03}
M.~Nielsen.
\newblock Universal quantum computation using only projective measurement,
  quantum memory, and preparation of the 0 state.
\newblock {\em Physical Letters}, 308(2-3):96--100, 2003.

\bibitem{NieCh00}
M.~A. Nielsen and I.~L. Chuang.
\newblock {\em Quantum computation and quantum information}.
\newblock Cambridge University Press, Cambridge, 2000.

\bibitem{NieCh10}
M.~A. Nielsen and I.~L. Chuang.
\newblock {\em Quantum computation and quantum information, 10h Anniversary
  Edition}.
\newblock Cambridge University Press, Cambridge, 2010.

\bibitem{NishOz09}
H.~Nishimura and M.~Ozawa.
\newblock Perfect computational equivalence between quantum turing machines and
  finitely generated uniform quantum circuit families.
\newblock {\em Quantum Information Processing}, 8(1):13--24, 2009.

\bibitem{pagani14acm}
M.~Pagani, P.~Selinger, and B.~Valiron.
\newblock Applying quantitative semantics to higher-order quantum computing.
\newblock In {\em Proceedings of {POPL} '14}, pages 647--658. {ACM}, 2014.

\bibitem{paolini2006iandc}
L.~Paolini.
\newblock A stable programming language.
\newblock {\em Information and Computation}, 204(3):339 -- 375, 2006.

\bibitem{PaoliniPiccoloRoversi-ENTCS2016}
L.~Paolini, M.~Piccolo, and L.~Roversi.
\newblock A class of reversible primitive recursive functions.
\newblock {\em Electronic Notes in Theoretical Computer Science},
  322(18605):227--242, 2016.

\bibitem{paolini2018ngc}
L.~Paolini, M.~Piccolo, and L.~Roversi.
\newblock On a class of reversible primitive recursive functions and its
  turing-complete extensions.
\newblock {\em New Generation Computing}, 36(3):233--256, 2018.

\bibitem{luca06}
L.~Paolini, E.~Pimentel, and S.~Ronchi Della~Rocca.
\newblock An operational characterization of strong normalization.
\newblock {\em LNCS}, 3921:367--381, 2006.

\bibitem{lmtamc}
L.~Paolini and M.~Zorzi.
\newblock q{PCF}: a language for quantum circuit computations.
\newblock In T.~Gopal, G.~J{\"a}ger, and S.~Steila, editors, {\em Theory and
  Applications of Models of Computation}, Lecture Notes in Computer Science,
  pages 455--469. Springer, 2017.

\bibitem{qwire}
J.~Paykin, R.~Rand, and S.~Zdancewic.
\newblock Qwire: A core language for quantum circuits.
\newblock In {\em Proceedings of the 44th ACM SIGPLAN Symposium on Principles
  of Programming Languages}, POPL 2017, pages 846--858, New York, NY, USA,
  2017. ACM.

\bibitem{pierce2002mit}
B.~C. Pierce.
\newblock {\em Types and Programming Languages}.
\newblock The MIT Press, 2002.

\bibitem{plotkin77tcs}
G.~D. Plotkin.
\newblock {LCF} considered as a programming language.
\newblock {\em Theoretical Computer Science}, 5:223--255, 1977.

\bibitem{richardson94}
D.~Richardson and J.~P.~Fitch.
\newblock The identity problem for elementary functions and constants.
\newblock In {\em Proceedings of the International Symposium on Symbolic and
  Algebraic Computation, {ISSAC} '94}, pages 285--290, 1994.

\bibitem{rios2017eptcs}
F.~Rios and P.~Selinger.
\newblock A categorical model for a quantum circuit description language
  (extended abstract).
\newblock In B.~Coecke and A.~Kissinger, editors, {\em {\rm Proceedings 14th
  International Conference on} Quantum Physics and Logic, {\rm Nijmegen, The
  Netherlands, 3-7 July 2017}}, volume 266 of {\em Electronic Proceedings in
  Theoretical Computer Science}, pages 164--178. Open Publishing Association,
  2018.

\bibitem{Ross2015}
N.~J. Ross.
\newblock {\em Algebraic and Logical Methods in Quantum Computation}.
\newblock PhD thesis, Department of Mathematics and Statistics, Dalhousie
  University, 2015.
\newblock Available from arXiv:1510.02198.

\bibitem{selinger2004mscs}
P.~Selinger.
\newblock Towards a quantum programming language.
\newblock {\em Mathematical Structures in Computer Science}, 14(4):527--586,
  Aug. 2004.

\bibitem{selinger2006mscs}
P.~Selinger and B.~Valiron.
\newblock A lambda calculus for quantum computation with classical control.
\newblock {\em Mathematical Structures in Computer Science}, 16:527--552, 2006.

\bibitem{selinger09chap}
P.~Selinger and B.~Valiron.
\newblock {\em Semantic Techniques in Quantum Computation}, chapter Quantum
  lambda calculus, pages pp. 135--172.
\newblock Cambridge University Press, 2009.

\bibitem{Shor94}
P.~W. Shor.
\newblock Algorithms for quantum computation: discrete logarithms and
  factoring.
\newblock In {\em 35th Annual Symposium on Foundations of Computer Science
  (Santa Fe, NM, 1994)}, pages 124--134. IEEE Comput. Soc. Press, Los Alamitos,
  CA, 1994.

\bibitem{Shor99}
P.~W. Shor.
\newblock Polynomial-time algorithms for prime factorization and discrete
  logarithms on a quantum computer.
\newblock {\em SIAM Rev.}, 41(2):303--332 (electronic), 1999.

\bibitem{shor02psam}
P.~W. Shor.
\newblock Introduction to quantum algorithms.
\newblock In {\em Proceedings of Symposia in Applied Mathematics}, volume~58,
  page 143–160, 2002.

\bibitem{lecturenotesstrubell}
E.~Strubell.
\newblock An introduction to quantum algorithms, lecture notes, university of
  massachusetts.
\newblock 2011.

\bibitem{valiron2013ngc}
B.~Valiron.
\newblock Quantum computation: From a programmer's perspective.
\newblock {\em New Generation Computing}, 31(1):1--26, Jan 2013.

\bibitem{quipperacm}
B.~Valiron, N.~J. Ross, P.~Selinger, D.~S. Alexander, and J.~M. Smith.
\newblock Programming the quantum future.
\newblock {\em Commun. ACM}, 58(8):52--61, 2015.

\bibitem{Vaux06}
L.~Vaux.
\newblock Lambda-calculus in an algebraic setting, 2006.

\bibitem{Vaux09}
L.~Vaux.
\newblock The algebraic lambda-calculus, 2009.

\bibitem{xi1999acm}
H.~Xi and F.~Pfenning.
\newblock Dependent types in practical programming.
\newblock In {\em Proceedings of the 26th ACM SIGPLAN-SIGACT Symposium on
  Principles of Programming Languages}, POPL '99, pages 214--227, New York, NY,
  USA, 1999. ACM.

\bibitem{minsheng16}
M.~Ying.
\newblock {\em Foundations of Quantum Programming}.
\newblock Morgan Kaufmann, 2016.

\bibitem{zenger1997tcs}
C.~Zenger.
\newblock Indexed types.
\newblock {\em Theoretical Computer Science}, 187(1):147 -- 165, 1997.

\bibitem{Zorzi16}
M.~Zorzi.
\newblock On quantum lambda calculi: a foundational perspective.
\newblock {\em Mathematical Structures in Computer Science}, 26(7):1107--1195,
  2016.

\end{thebibliography}
\bibliographystyle{abbrv}

\end{document}